\tikzstyle{ubrace} = [draw, thick, decoration={brace, mirror, raise=0.0cm}, decorate,
\tikzstyle{rbrace} = [draw, thick, decoration={brace, mirror, raise=0.0cm}, decorate,
\tikzstyle{obrace} = [draw, thick, decoration={brace, raise=0.0cm}, decorate,
\tikzstyle{lbrace} = [draw, thick, decoration={brace, raise=0.0cm}, decorate,
\newcommand{\boldclass}[3]{\ensuremath{\mathbf{#1}^{#2}_{#3}}}
\newcommand{\bsigma}[1]{\boldclass{\Sigma}{0}{#1}}
\newcommand{\bpi}[1]{\boldclass{\Pi}{0}{#1}}
\newcommand{\bdelta}[1]{\boldclass{\Delta}{0}{#1}}
\newcommand{\asigma}[1]{\boldclass{\Sigma}{1}{#1}}
\newcommand{\trans}[1]{\xrightarrow{#1}}
\newcommand{\eqdef}{=_{\mathrm{df}}}
\newcommand{\lang}{\mathrm{L}\xspace}
\newcommand{\init}{\mathrm{I}\xspace}
\newcommand{\run}{\mathrm{run}\xspace}
\newcommand{\acc}{\mathrm{acc}\xspace}
\newcommand{\IFinx}{\mathrm{IF}_{\mathrm{inf}\xspace}}
\newcommand{\lex}{<_{\mathrm{lex}}}
\newcommand{\lexeq}{\leq_{\mathrm{lex}}}
\newcommand{\inx}{<_{\mathrm{inf}}}
\newcommand{\inxg}{>_{\mathrm{inf}}}
\newcommand{\inxeq}{\leq_{\mathrm{inf}}}
\newcommand{\verA}{v}
\newcommand{\verB}{x}
\newcommand{\orrA}{o}
\newcommand{\finA}{u}
\newcommand{\infA}{\alpha}
\newcommand{\runA}{r}
\newcommand{\Aa}{\mathcal{A}\xspace}
\newcommand{\Cc}{\mathcal{C}\xspace}
\newcommand{\Tt}{\mathcal{T}\xspace}
\newcommand{\Mm}{\mathcal{M}\xspace}
\newcommand{\dL}{\mathtt{{\scriptstyle L}}}
\newcommand{\dM}{\mathtt{{\scriptstyle M}}}
\newcommand{\dR}{\mathtt{{\scriptstyle R}}}
\newcommand{\w}{\omega}
\newcommand{\N}{\mathbb{N}\xspace}
\newcommand{\restr}{{\upharpoonright}}
\newcommand{\borapxi}{{\bf\Sigma}^{0}_{\xi}}
\newcommand{\bormpxi}{{\bf\Pi}^{0}_{\xi}}
\newcommand{\borel}{{\bf\Delta}^{1}_{1}}
\newcommand{\om}{\omega}
\newcommand{\Si}{\Sigma}
\newcommand{\Sis}{\Sigma^\star}
\newcommand{\Sio}{\Sigma^\omega}
\newcommand{\nl}{\newline}
\newcommand{\lra}{\leftrightarrow}
\newcommand{\fa}{\forall}
\newcommand{\ra}{\rightarrow}
\newcommand{\Ga}{\Gamma}
\newcommand{\Gas}{\Gamma^\star}
\newcommand{\Gao}{\Gamma^\omega}
\begin{document}

\setcounter{page}{243}
\publyear{2021}
\papernumber{2088}
\volume{183}
\issue{3-4}

 \finalVersionForARXIV

\title{On the Expressive Power of  Non-deterministic  \\  and  Unambiguous Petri Nets over Infinite Words}

\author{Olivier Finkel\thanks{Address for correspondence: CNRS et Université de Paris, UFR de Math{\'e}matiques case 7012,
75205 Paris Cedex 13, France}
 \\
Institut de Math\'ematiques de Jussieu - Paris Rive Gauche \\
CNRS, Universit\'e de Paris,  Sorbonne Universit\'e, Paris, France.\\
Olivier.Finkel@math.univ-paris-diderot.fr
\and Micha{\l} Skrzypczak\thanks{Author supported by Polish National Science Centre
grant 2016/22/E/ST6/00041.}\\
Institute of Informatics \\
University of Warsaw \\ Banacha 2, 02-097 Warsaw, Poland\\
mskrzypczak@mimuw.edu.pl} \maketitle

\runninghead{O. Finkel and M. Skrzypczak}{On the Expressive Power of Petri Nets over Infinite Words}

\begin{abstract}
We prove that $\om$-languages of (non-deterministic) Petri nets and $\om$-languages of (non-deterministic) Turing machines have the same topological complexity:
the Borel and Wadge hierarchies of the class of  $\om$-languages of (non-deterministic) Petri nets are equal to the Borel and Wadge hierarchies of the class of $\om$-languages of (non-deterministic) Turing machines.  We also show  that it is highly undecidable to determine the topological complexity of a Petri net $\om$\=/language.  Moreover, we infer from the proofs of the above results that the equivalence and the inclusion problems for $\om$-languages of Petri nets are $\Pi_2^1$-complete, hence also highly undecidable.

Additionally, we show that the situation is quite the opposite when considering unambiguous Petri nets, which have the semantic property that at most one accepting run exists on every input. We provide a procedure of determinising them into deterministic Muller counter machines with counter copying. As a consequence, we entail that the $\omega$-languages recognisable by unambiguous Petri nets are $\bdelta{3}$ sets.
\end{abstract}

\begin{keywords}
Automata and formal languages,
Petri nets,
Infinite words,
Logic in computer science,
Cantor  topology,
Borel hierarchy,
Wadge degrees,
Highly undecidable properties,
Unambiguous Petri nets
\end{keywords}

\section{Introduction}

In the sixties, B\"uchi was the first to study acceptance of infinite words by finite automata with the now called  B\"uchi  acceptance condition, in order to prove the decidability of the monadic second order theory of one successor over the integers.  Since then there has been a lot of work on regular $\omega$\=/languages, accepted by  B\"uchi automata, or by some other variants of automata over infinite words, like Muller or Rabin automata, see \cite{Thomas90,Staiger97,PerrinPin}. The acceptance of infinite words by other finite machines, like pushdown automata, counter automata, Petri nets, Turing machines, \ldots, with various acceptance conditions, has also been studied, see  \cite{Staiger97,eh,CG78b,valk1983infinite,Staiger93}.

The Cantor topology is a very natural topology on the set $\Si^\omega$ of infinite words over a finite alphabet~$\Si$ which is induced by the prefix metric. Then a way to study the complexity of languages of infinite words accepted by finite machines is to study    their topological complexity and firstly to locate them with regard to the Borel and the projective hierarchies \cite{Thomas90,eh,LescowThomas,Staiger97}.

Every $\omega$-language accepted by a deterministic B\"uchi automaton is a $\bpi{2}$-set.
On the other hand, it follows from McNaughton's Theorem that every regular $\omega$-language is accepted by a deterministic Muller automaton, and thus is a Boolean combination of $\omega$-languages accepted by deterministic B\"uchi automata. Therefore
every regular $\omega$-language is a $\bdelta{3}$-set. Moreover, Landweber proved that
the Borel complexity of any $\omega$-language accepted by a Muller or B\"uchi automaton can be effectively  computed (see \cite{Landweber69,PerrinPin}).
In a similar way, every $\omega$-language accepted by a deterministic Muller Turing machine, and thus also by any Muller deterministic finite machine is a $\bdelta{3}$-set, \cite{eh,Staiger97}.

The Wadge hierarchy is a great refinement of the Borel hierarchy, firstly defined by Wadge via reductions by continuous functions \cite{Wadge83}. The trace of the Wadge hierarchy on the regular $\omega$-languages is called the Wagner hierarchy. It has been completely described by Klaus Wagner in \cite{Wagner79}. Its length is the ordinal $\omega^\omega$. Wagner gave an automaton-like characterisation of this hierarchy, based on the notions of chain and superchain, together with an algorithm to compute the Wadge (Wagner) degree of any given regular $\omega$-language, see also \cite{Selivanov98,Selivanov08m,Selivanov08t,Simonnet92}.

The Wadge hierarchy of deterministic context-free $\omega$-languages was determined by Duparc in~\cite{Duparc03,dfr}. Its length is the ordinal $\omega^{(\omega^2)}$.
We do not know yet whether this hierarchy is decidable or not. But the Wadge hierarchy induced by deterministic partially blind 1-counter automata was described in an effective way in \cite{Fin01csl}, and other partial decidability results were obtained in \cite{Fin01a}. Then, it was proved in \cite{finkel2006borel} that the Wadge  hierarchy of  $1$-counter or context-free $\omega$-languages and the Wadge hierarchy of effective analytic sets (which form the class of all the $\omega$-languages accepted by non-deterministic Turing machines) are equal. Moreover, similar results hold about the Wadge hierarchy of infinitary rational relations accepted by $2$-tape B\"uchi automata, \cite{Fink-Wd}. Finally, the Wadge hierarchy of $\omega$-languages of deterministic
Turing machines was determined by Selivanov in \cite{Selivanov03b}.

We consider in this paper acceptance of infinite words by Petri nets. Petri nets are used for the description of distributed systems \cite{esparza1998decidability,rozenberg2004lectures,H-petrinet-diaz,hopcroft_vas,LerouxS19}, and form a very important  mathematical model in Concurrency Theory that has been developed for general concurrent computation. In the context of Automata Theory, Petri nets may be defined as
(partially) blind multicounter automata, as explained in \cite{valk1983infinite,eh,Gre78}.
First, one can distinguish between the places of a given Petri net by dividing them  into the bounded ones (the number of tokens in such a place at any time is uniformly bounded) and the unbounded ones. Then each unbounded place may be seen as a partially blind counter, and the tokens in the bounded places determine the state of the partially blind multicounter automaton that is equivalent to the initial Petri net. The transitions of the Petri net may then be seen as the finite control of the partially blind multicounter automaton and the labels of these transitions are then the input symbols.

The infinite behaviour of Petri nets (i.e.,~Petri nets running over $\omega$\=/words) was first studied by Valk \cite{valk1983infinite} and by Carstensen in the case of deterministic Petri nets \cite{carstensen1988infinite}. The topological complexity of $\om$\=/languages of {\it deterministic} Petri nets was completely determined. They are ${\bf \Delta}^0_3$-sets and their Wadge hierarchy has been determined by Duparc, Finkel, and Ressayre in~\cite{DFR4}; its length is the ordinal $\om^{\om^2}$. On the other side, Finkel and Skrzypczak proved in~\cite{FS14} that there exist $\bsigma{3}$-complete, hence non $\bdelta{3}$, $\om$-languages accepted by {\it non-deterministic} one-partially-blind-counter B\"uchi automata.

A next goal was to  understand the expressive power of {\it non-determinism}  in Petri nets, i.e.~blind multicounter automata. The assumption of blindness is important here, as it is already known that $\om$\=/languages accepted by (non-blind) one-counter B\"uchi automata have the same topological complexity as $\om$-languages of Turing machines~\cite{finkel2006borel}. However, the non-blindness of the counter, i.e.~the ability to use the zero-test of the counter, was essential in the proof of this result.

The first author proved in \cite{Fin17PN,Finkel20} that $\om$-languages of {\it non-deterministic} Petri nets and effective analytic sets have the same topological complexity. More precisely the Borel and Wadge hierarchies of the class of  $\om$-languages of  Petri nets are equal to the Borel and Wadge hierarchies of the class of effective analytic sets. The proof is based on a simulation of a  given real time $1$-counter (with zero-test) B\"uchi automaton $\mathcal{A}$ accepting $\om$-words $x$ over an alphabet $\Si$ by a real time 4-blind-counter B\"uchi automaton $\mathcal{B}$ reading some special codes $h(x)$ of  the words $x$. In particular,  for each non-null recursive ordinal $\alpha < \om_1^{{\rm CK}} $ there exist some $\bsigma{\alpha}$-complete and some $\bpi{\alpha}$-complete $\om$-languages of Petri nets, and the supremum of the set of Borel ranks of $\om$\=/languages of Petri nets is the ordinal $\gamma_2^1$, which is strictly greater than the first non-recursive ordinal~$\om_1^{{\rm CK}}$. Moreover it is proved in \cite{Fin17PN,Finkel20} that  it is highly undecidable to determine the topological complexity of a Petri net $\om$-language. Moreover,  it is inferred from the proofs of the above results that also the equivalence and the inclusion problems for $\om$-languages of Petri nets  are $\Pi_2^1$-complete, hence also highly undecidable.

A particular instance of the above construction of simulation is for the Wadge degrees of analytic sets $\asigma{1}$ --- it follows that there exists a real time 4-blind-counter B\"uchi automaton recognising a $\asigma{1}$-complete $\om$-language, but also that it is consistent with the axiomatic system  {\bf ZFC}  of set theory that there exists some non-Borel non-$\asigma{1}$-complete $\om$-language of 4-blind-counter B\"uchi automata, \cite{Fin17PN,Finkel20}.

The first author also  proved in \cite{Fin17PN}  that the determinacy of Wadge games between two players in charge of $\om$-languages of Petri nets  is equivalent to the effective analytic (Wadge) determinacy, and thus  is not provable in  the axiomatic system {\bf ZFC}.

The second author independently proved in \cite{S18} that  only one blind counter is enough to obtain a non-Borel $\om$-language --- he exhibits  an example of a real time 1-blind-counter B\"uchi automaton that also recognises a $\asigma{1}$-complete $\om$-language.

In the present journal paper we gather results of both authors from \cite{Fin17PN,Finkel20} and \cite{S18}  with some complements which did not appear in these papers:

 In particular, we  prove that there exists a Petri net accepting an $\om$-language $L$ such that $L$ is a Borel  ${\bf \Pi}^0_2$-set in one model of {\bf ZFC} and non-Borel in another model of {\bf ZFC} (Theorem \ref{top-zfc2}). We also show that it is undecidable to determine the topological complexity of the $\om$-language accepted by  a given $1$-blind counter automaton and, as a consequence, we prove several other undecidability results for   $\om$-languages accepted by   $1$-blind counter automata (Theorems \ref{undec-borel}, \ref{undec-effective-borel},  \ref{undec-other-results}, and \ref{undec-unambiguity}).

We also study the important case of unambiguous Petri nets. In that case, we provide a determinisation procedure for unambiguous blind counter automata that constructs an equivalent deterministic Muller counter machine with zero tests and counter copying, hence also an equivalent deterministic Muller Turing machine.
Its determinism already guarantees tight bounds on the topological complexity of unambiguous blind counter automata: their $\omega$-languages belong to the Borel class $\bdelta{3}$.
Notice that the blindness of the counters is crucial here since one can obtain some  ${\bf \Sigma}^0_3$-complete and  ${\bf \Pi}^0_3$-complete $\om$-languages accepted by unambiguous  $1$-counter automata with zero-test \cite{Fin03b}. The topological complexity of  $\omega$-languages  of unambiguous Petri nets may also been compared with the complexity of  $\omega$-languages  of unambiguous Turing machines: they form the class of effective $\Delta^1_1$-sets which contains ${\bf \Sigma}^0_\alpha$-complete sets and ${\bf \Pi}^0_\alpha$-complete sets
for each recursive ordinal $\alpha < \om_1^{{\rm CK}} $  \cite{Fin-ambTM}.

This paper is an extended journal version of both the paper \cite{Finkel20} of the first author  which  appeared in the Proceedings of the 41st International Conference on Application and Theory of Petri Nets and Concurrency, Petri Nets 2020,  which took place virtually in Paris on June 2020, and of the paper \cite{S18} of the second author which appeared in the Proceedings of the 12th International Conference on Reachability Problems, which took place in  Marseille on September  2018.

\medskip
The paper is organised as follows. In Section~\ref{sec:basic} we review the notions of (blind) counter automata and $\om$-languages. In Section~\ref{sec:topology} we recall notions of topology, and  the Borel and Wadge hierarchies on a Cantor space. Section~\ref{sec:wadge} is devoted to the main result of that work: the simulation construction that provides $4$-blind counter B\"uchi automata for levels of the Wadge hierarchy occupied by $1$-counter B\"uchi automata. Based on this construction,  we show that  the topological or arithmetical complexity of a  Petri net
$\om$-language is highly undecidable in Section~\ref{sec:high-undec-top}. The equivalence and the inclusion problems for $\om$-languages of Petri nets  are shown to be $\Pi_2^1$-complete in Section~\ref{sec:high-undec-equiv}. Section~\ref{sec:wadge-determ} is devoted to a study of determinacy of Wadge games with winning conditions given by $\omega$\=/languages recognised by blind counter automata. The example of a $\asigma{1}$-complete $\omega$-language recognised by a $1$-blind counter B\"uchi automaton is given in Section~\ref{sec:non-borel}. Section~\ref{sec:inherent-nondet}  is devoted to consequences of the high topological complexity of $\omega$\=/languages recognisable by Petri nets with regard to inherent non-determinism and ambiguity.  Section~\ref{sec:determinisation} provides a determinisation construction based on the assumption that a given automaton is unambiguous. Concluding remarks are given in Section~\ref{sec:conclusions}. An additional Appendix~\ref{app:construction} contains an extensive explanation of the technical construction from Section~\ref{sec:determinisation}.

\section{Basic notions}
\label{sec:basic}

We assume the reader to be familiar with the theory of formal ($\om$-)languages \cite{Staiger97,PerrinPin}. We recall the usual notations of formal language theory.

If $\Si$ is a finite alphabet, a {\it non-empty finite word} over $\Si$ is any sequence $\finA=a_1\ldots a_k$, where $a_i\in\Sigma$ for $i=1,\ldots ,k$, and $k$ is an integer $\geq 1$. The {\it length} of $\finA$ is $k$, denoted by $|\finA|$. The {\it empty word}  is denoted by $\epsilon$; its length is $0$. $\Sis$ is the {\it set of finite words} (including the empty word) over $\Sigma$, and we denote $\Si^+\eqdef \Sis \setminus \{\epsilon\}$.
A (finitary) {\it language} $V$ over an alphabet $\Sigma$ is a subset of $\Sis$.

The {\it first infinite ordinal} is $\om$. An $\om$-{\it word} over $\Si$ is an $\om$-sequence $a_1 \ldots a_n \ldots$, where for all integers $ i\geq 1$, ~
$a_i \in\Sigma$.  When $\sigma=a_1 \ldots a_n \ldots$ is a finite word of length at least $n$ or an $\om$-word, we write $\sigma(n)=a_n$, $\sigma[n]=\sigma(1)\sigma(2)\ldots \sigma(n)$  for all $n\geq 1$ and $\sigma[0]=\epsilon$.

\medskip
The usual concatenation product of two finite words $u$ and $v$ is
denoted $u\cdot v$ (and sometimes just $uv$). This product is extended to the product of a finite word $u$ and an $\om$-word $v$: the infinite word $u\cdot v$ is then the $\om$-word such that:

\medskip
 $(u\cdot v)(k)=u(k)$  if $k\leq |u|$ , and
 $(u\cdot v)(k)=v(k-|u|)$  if $k>|u|$.

\medskip
The {\it set of } $\om$-{\it words} over  an alphabet $\Si$ is denoted by $\Si^\om$. An  $\om$-{\it language} $V$ over an alphabet $\Sigma$ is a subset of  $\Si^\om$, and its  complement (in $\Sio$) is $\Sio \setminus V$, denoted $V^-$.

The {\it prefix relation} is denoted $\sqsubseteq$: a finite word $u$ is a {\it prefix} \label{def:prefix}
of a finite word $v$ (respectively,  an infinite word $v$), denoted $u\sqsubseteq v$, if and only if there exists a finite word $w$ (respectively, an infinite word $w$), such that $v=u\cdot w$.

\subsection{Counter automata}
\label{ssec:counter-automata}

Let $k$ be an integer with $k\geq 1$. A $k$-counter machine has $k$ {\it counters}, each of which containing a non-negative integer. The machine can test whether the content of a given counter is zero or not, but this is not possible if the counter is a blind (sometimes called partially blind, as in \cite{Gre78}) counter. This means that if a transition of the machine is enabled  when the content of a blind counter is zero then the same transition is also enabled when the content of the same counter is a positive  integer.
The transitions depend on the letter read by the machine, the current state of the finite control, and the tests about the values of the counters.
Notice that in the sequel we shall only consider real-time automata, i.e.~$\epsilon$-transitions are not allowed (but the general results of this paper will be easily extended to the case of non-real-time automata).

Formally, a non-deterministic \emph{real time} $k$-counter machine is a 4-tuple $\mathcal{M}=\langle K,\Si, \Delta, q_0\rangle$,  where $K$ is a finite set of \emph{states}, $\Sigma$ is a finite input alphabet, $q_0\in K$ is an \emph{initial state}, and $\Delta \subseteq K \times \Si\times \{0, 1\}^k \times K \times \{0, 1, -1\}^k$ is a \emph{transition relation}.

If  the machine $\mathcal{M}$ is in a state $q$ and $c_i \in \N$ is the content of the $i^{th}$ counter $\mathcal{C}_i$ for $i=1,\ldots,k$; then the \emph{configuration} (or \emph{global state}) of $\mathcal{M}$ is the $(k{+}1)$-tuple $(q, c_1, \ldots , c_k)$.

\medskip
Consider $a\in \Si$, $q, q' \in K$, and $(c_1, \ldots , c_k) \in \N^k$ such that $c_j=0$ for $j\in E \subseteq  \{1, \ldots , k\}$ and $c_j >0$ for $j\notin E$. If $(q, a, i_1, \ldots , i_k, q', j_1, \ldots , j_k) \in \Delta$ where $i_j=0$ for $j\in E$ and $i_j=1$ for $j\notin E$, then we write:

~~~~~~~~$a: (q, c_1, \ldots , c_k)\mapsto_{\mathcal{M}} (q', c_1+j_1, \ldots , c_k+j_k)$.

\medskip
 Thus  the transition relation must obviously satisfy:
 \nl if $(q, a, i_1, \ldots , i_k, q', j_1, \ldots , j_k)  \in    \Delta$ and  $i_m=0$ for
 some $m\in \{1, \ldots , k\}$  then $j_m=0$ or $j_m=1$ (but $j_m$ may not be equal to $-1$).

Moreover, if the counters of $\mathcal{M}$ are  blind, then, if $(q, a, i_1, \ldots , i_k, q', j_1, \ldots , j_k)  \in    \Delta$  holds, and  $i_m=0$ for some $m\in \{1, \ldots , k\}$ then  $(q, a, i_1, \ldots , i_k, q', j_1, \ldots , j_k)  \in    \Delta$  also holds if $i_m=1$ and the other integers are unchanged.

An $\om$-sequence of configurations $r=(q_i, c_1^{i}, \ldots c_k^{i})_{i \geq 1}$ is called
a run of $\mathcal{M}$ on an $\om$-word  $\sigma=a_1a_2 \ldots a_n \ldots $   over $\Si$ iff:\smallskip

(1)  $(q_1, c_1^{1}, \ldots c_k^{1})=(q_0, 0, \ldots, 0)$

(2)   for each $i\geq 1$,\medskip

~~ $a_i: (q_i, c_1^{i}, \ldots c_k^{i})\mapsto_{\mathcal{M}}
(q_{i+1},  c_1^{i+1}, \ldots c_k^{i+1})$.\medskip

For every such run $r$, $\mathrm{In}(r)$ is the set of all states visited infinitely many times during $r$.

\begin{definition}
\label{def:buchi-k-counter}
A B\"uchi $k$-counter automaton is a 5-tuple $\mathcal{M}=\langle K,\Si,$ $\Delta, q_0, F\rangle$, where $ \mathcal{M}'=\langle K,$ $\Si,$  $\Delta, q_0\rangle$ is a $k$-counter machine and $F \subseteq K$ is a set of \emph{accepting states}. The $\om$-language accepted by~$\mathcal{M}$ is: $\lang(\mathcal{M})\eqdef \{  \sigma\in\Si^\om \mid \text{  there exists a run $r$ of $\mathcal{M}$ on $\sigma$ such that $\mathrm{In}(r)
 \cap F \neq \emptyset$} \}$.
\end{definition}

\begin{definition}
A Muller $k$-counter automaton is a 5-tuple $\mathcal{M}=\langle K,$ $\Si, \Delta, q_0, \mathcal{F}\rangle$, where $\mathcal{M}'=\langle K,$ $\Si, \Delta, q_0\rangle$ is a $k$-counter machine and $\mathcal{F}\subseteq 2^K$ is a set of accepting sets of states. The $\om$-language accepted by $\mathcal{M}$ is: $\lang(\mathcal{M})\eqdef \{  \sigma\in\Si^\om \mid \text{there exists a run $r$ of $\mathcal{M}$ on $\sigma$ such that $\mathrm{In}(r) \in  \mathcal{F}$} \}$.
\end{definition}

Given a~configuration $(q,c_1,\ldots,c_k)$ one can extend the above definitions to $\lang\big(\mathcal{M}, (q,c_1,\ldots,c_k)\big)$ which is the set of $\omega$\=/words accepted by $\mathcal{M}$ starting from the configuration $(q,c_1,\ldots,c_k)$.\label{def:sim-order}

A counter machine $\Mm=\langle K,\Si,$ $\Delta, q_0\rangle$ is \emph{deterministic} if its transition relation $\Delta$ is functional in the following sense: for each $q\in K$, $a\in \Sigma$, and $(i_1,\ldots,i_k)\in \{0,1\}^k$ there is exactly one transition of the machine of the form $(q,a,i_1,\ldots,i_k,q',j_1,\ldots,j_k)$ for some $q'\in K$ and $(j_1,\ldots,j_k)\in\{-1,0,1\}^k$.

Notice that the definition of a deterministic counter machine $\Mm$ ensures that for every input $\omega$\=/word $\sigma=a_1a_2\ldots$ there exists a unique run $\runA$ of $\Mm$ on $\sigma$. This observation motivates the following definition. We call a B\"uchi (resp. Muller) k-counter automaton $\Mm$ \emph{unambiguous} if for every input $\omega$-word $\sigma$ there exists at most one accepting run $\runA$ of $\Mm$ on $\sigma$. Notice that the uniqueness requirement is not limited to the sequence of states visited in $\runA$ but also the exact counter values --- two runs visiting the same states but with different counter values are considered distinct. Similarly, we say that $\Mm$ is \emph{countably unambiguous} if for every input $\omega$-word $\sigma$ there exist at most countably many accepting runs of $\Mm$ on $\sigma$.

The semantic condition of unambiguity is known to be very intriguing~\cite{colcombet_determinism}, with certain tractability features~\cite{stearns_unambiguous,mottet_unambiguous_register}, and expressive power ranging between deterministic and non\=/deterministic models~\cite{carton_michel_prophetic,walukiewicz_choice}. Also, various \emph{degrees of ambiguity},
have been studied, see e.g.~\cite{Fink-Sim,Fin-ambTM,Fin03b,FinkelS09,rabinovich_degrees} and the references therein.

There is a~natural simulation order on the configurations of a~counter automaton: a~configuration $(q,c_1,\ldots,c_k)$ \emph{simulates} $(q,c'_1,\ldots,c'_k)$ (denoted $(q,c_1,\ldots,c_k)\succeq(q,c'_1,\ldots,c'_k)$) if they have the same state $q$ and the counter values $c_i$ and $c'_i$ satisfy coordinate\=/wise $c_j\geq c'_j$ for $j=1,\ldots,k$.

\begin{remark}
\label{rem:residual-order}
If $(q,c_1,\ldots,c_k)\succeq(q,c'_1,\ldots,c'_k)$ are two configurations of a $k$-blind counter machine~$\Mm$ then $\lang\big(\mathcal{M}, (q,c_1,\ldots,c_k)\big)\supseteq\lang\big(\mathcal{M},(q,c'_1,\ldots,c'_k)\big)$.
\end{remark}

\begin{proof}
It is enough to notice that each accepting run of $\mathcal{M}$ from $(q,c'_1,\ldots,c'_k)$ can be lifted to an accepting run from $(q,c_1,\ldots,c_k)$ just by increasing the counter values.
\end{proof}

The above remark relies heavily on the assumption of the blindness of counters. Moreover, it is important that the acceptance condition of the machines is defined purely in terms of the visited states --- the counter values do not intervene.

The above remark implies that, if there is exactly one counter, the maximal size of an~anti\=/chain of the simulation order is bounded by the number of states.

It is well known that an $\om$-language is accepted by a non-deterministic (real time)
B\"uchi $k$-counter automaton iff it is accepted by a non-deterministic
(real time) Muller  $k$-counter automaton \cite{eh}. Notice that it cannot be shown without using the non determinism of automata and this result is no longer true in the  deterministic case.

The class of $\om$-languages accepted by real time $k$-counter B\"uchi  automata  (respectively, real time $k$-blind-counter B\"uchi automata) is denoted {\bf r}-${\bf CL}(k)_\om$ (respectively, {\bf r}-${\bf BCL}(k)_\om$). (Notice that in previous papers, as in \cite{finkel2006borel}, the class {\bf r}-${\bf CL}(k)_\om$ was denoted {\bf r}-${\bf BCL}(k)_\om$ so we have slightly changed the notation in order to distinguish the different classes).

The class ${\bf CL}(1)_\om$ is a strict subclass of the class ${\bf CFL}_\om$ of context free $\om$-languages accepted by pushdown B\"uchi automata.

If we omit the counter of a real-time B\"uchi $1$-counter automaton, then we simply get the notion of B\"uchi automaton. The class of $\om$-languages accepted by  B\"uchi automata is the class of regular $\om$-languages.

\section{Hierarchies in Cantor space}
\label{sec:topology}

\subsection{Borel hierarchy and analytic sets}

 We assume the reader to be familiar with basic notions of topology which
may be found in \cite{Moschovakis80,LescowThomas,Staiger97,PerrinPin}.
There is a natural metric on the set $\Sio$ of  infinite words
over a finite alphabet
$\Si$ containing at least two letters which is called the {\it prefix metric} and is defined as follows. For $u, v \in \Sio$ and
$u\neq v$ let $\delta(u, v)=2^{-l_{\mathrm{pref}(u,v)}}$ where $l_{\mathrm{pref}(u,v)}$
 is the first integer $n$
such that the $(n{+}1)^{st}$ letter of $u$ is different from the $(n{+}1)^{st}$ letter of~$v$.
This metric induces on $\Sio$ the usual  Cantor topology in which the {\it open subsets} of
$\Sio$ are of the form $W\cdot \Si^\om$, for $W\subseteq \Sis$.
A set $L\subseteq \Si^\om$ is a {\it closed set} iff its complement $\Si^\om - L$
is an open set.

 Define now the {\it Borel hierarchy} of subsets of $\Si^\om$:

\begin{definition}
For a non-null countable ordinal $\alpha$, the classes ${\bf \Si}^0_\alpha$
 and ${\bf \Pi}^0_\alpha$ of the Borel hierarchy on the topological space $\Si^\om$
are defined as follows:
\nl ${\bf \Si}^0_1$ is the class of open subsets of $\Si^\om$,
 ${\bf \Pi}^0_1$ is the class of closed subsets of $\Si^\om$,
\nl and for any countable ordinal $\alpha \geq 2$:
\nl ${\bf \Si}^0_\alpha$ is the class of countable unions of subsets of $\Si^\om$ in
$\bigcup_{\gamma <\alpha}{\bf \Pi}^0_\gamma$.
 \nl ${\bf \Pi}^0_\alpha$ is the class of countable intersections of subsets of $\Si^\om$ in
$\bigcup_{\gamma <\alpha}{\bf \Si}^0_\gamma$.
\end{definition}

\noindent
 The class of
{\it Borel\ sets} is $\borel\eqdef\bigcup_{\xi <\omega_1}\ \borapxi\! =\!
\bigcup_{\xi <\omega_1}\ \bormpxi$, where $\om_1$ is the first uncountable ordinal.
There are also some subsets of $\Si^\om$ which are not Borel.
In particular
the class of Borel subsets of $\Si^\om$ is strictly included into
the class  ${\bf \Si}^1_1$ of {\it analytic sets} which are
obtained by projection of Borel sets.

\begin{definition}
\label{def:analytic}
A subset $A$ of  $\Si^\om$ is in the class ${\bf \Si}^1_1$ of {\it analytic} sets
if the following condition is satisfied: there exists another finite set~$Y$ and a Borel subset $B$  of  $(\Si \times Y)^\om$
such that $ x \in A$ iff $\exists y \in Y^\om $ such that $(x, y) \in B$,
where $(x, y)$ is the infinite word over the alphabet $\Si \times Y$ such that
$(x, y)(i)=(x(i),y(i))$ for each  integer $i\geq 1$.
\end{definition}

   We now define completeness with regard to reduction by continuous functions.
For a countable ordinal  $\alpha\geq 1$, a set $F\subseteq \Si^\om$ is said to be
a ${\bf \Si}^0_\alpha$
(respectively,  ${\bf \Pi}^0_\alpha$, ${\bf \Si}^1_1$)-{\it complete set}
iff for any set $E\subseteq Y^\om$  (with $Y$ a finite alphabet):
 $E\in {\bf \Si}^0_\alpha$ (respectively,  $E\in {\bf \Pi}^0_\alpha$,  $E\in {\bf \Si}^1_1$)
iff there exists a continuous function $f: Y^\om \ra \Si^\om$ such that $E = f^{-1}(F)$.

 Let us now  recall the definition of the  arithmetical hierarchy of  $\om$-languages,
see for example \cite{Staiger97,Moschovakis80}.
 Let $\Si$ be a finite alphabet. An $\om$-language $L\subseteq \Si^\om$  belongs to the class
$\Si_n$ iff there exists a recursive relation
$R_L\subseteq (\mathbb{N})^{n-1}\times \Si^\star$  such that
~~$L = \{\sigma \in \Si^\om \mid \exists a_1\ldots Q_na_n  \quad (a_1,\ldots , a_{n-1},
\sigma[a_n+1])\in R_L \}$,
\noindent where $Q_i$ is one of the quantifiers $\forall$ or $\exists$
(not necessarily in an alternating order). An $\om$-language $L\subseteq \Si^\om$  belongs to the class
$\Pi_n$ if and only if its complement $\Si^\om - L$  belongs to the class
$\Si_n$.
The inclusion relations that hold  between the classes $\Si_n$ and $\Pi_n$ are
the same as for the corresponding classes of the Borel hierarchy and
the classes $\Si_n$ and $\Pi_n$ are strictly included in the respective classes
${\bf \Si}_n^0$ and ${\bf \Pi}_n^0$ of the Borel hierarchy.

As in the case of the Borel hierarchy, projections of arithmetical sets
(of the second $\Pi$-class) lead
beyond the arithmetical hierarchy, to the analytical hierarchy of $\om$-languages.
 The first class of the analytical hierarchy of $\om$-languages
 is the (lightface)  class $\Si^1_1$ of effective analytic sets.
An $\om$-language $L\subseteq \Si^\om$  belongs to the class
$\Si_1^1$ if and only if there exists a recursive relation
$R_L\subseteq (\mathbb{N})\times \{0, 1\}^\star \times \Si^\star$  such that:
~~$L = \{\sigma \in \Si^\om \mid \exists \tau (\tau\in \{0, 1\}^\om \wedge \fa n \exists m
 ( (n, \tau[m], \sigma[m]) \in R_L )) \}$.
\noindent Thus an $\om$-language $L\subseteq \Si^\om$  is in the class $\Si_1^1$ iff it is the projection
of an $\om$-language over the alphabet $\{0, 1\} \times \Si$ which is in the class $\Pi_2$.

Kechris, Marker, and Sami proved in \cite{KMS89} that the supremum
of the set of Borel ranks of  (lightface)~$\Pi_1^1$ (so also of lightface $\Si_1^1$)  sets is the ordinal $\gamma_2^1$.
This ordinal is precisely defined in \cite{KMS89}.
It holds that $ \om_1^{\mathrm{CK}} < \gamma_2^1$, where $\om_1^{\mathrm{CK}}$ is the first non-recursive ordinal,  called the Church-Kleene ordinal.

Notice  that it seems still unknown  whether {\it every } non null ordinal $\gamma < \gamma_2^1$ is the Borel rank
of a (lightface) $\Pi_1^1$ (or $\Si_1^1$) set.
On the other hand it is known that
 for every ordinal
$\gamma < \om_1^{\mathrm{CK}}$
there exist some
${\bf \Si}^0_\gamma$-complete and   ${\bf \Pi}^0_\gamma$-complete
sets in the class $\Delta_1^1$.

  Recall that a B\"uchi Turing machine is just a Turing machine working on infinite
inputs with a B\"uchi-like acceptance condition, and
that the class of  $\om$-languages accepted by  B\"uchi Turing machines
is the class $ \Si^1_1$   \cite{CG78b,Staiger97}.

\subsection{Wadge hierarchy}

\noindent We now introduce the Wadge hierarchy, which is a great refinement of the Borel hierarchy defined
via reductions by continuous functions, \cite{Duparc01,Wadge83}.

\begin{definition}[Wadge \cite{Wadge83}] Let $X$, $Y$ be two finite alphabets.
For $L\subseteq X^\om$ and $L'\subseteq Y^\om$, $L$ is said to be \emph{Wadge reducible} to $L'$
($L\leq _W L')$ iff there exists a continuous function $f: X^\om \ra Y^\om$, such that
$L=f^{-1}(L')$.
$\om$\=/languages $L$ and $L'$ are Wadge equivalent iff $L\leq _W L'$ and $L'\leq _W L$.
This will be denoted by $L\equiv_W L'$. Moreover, we shall say that
$L<_W L'$ iff $L\leq _W L'$ but not $L'\leq _W L$.
\nl  A set $L\subseteq X^\om$ is said to be \emph{self dual} iff  $L\equiv_W L^-$, and otherwise it is said to be \emph{non self dual}.
\end{definition}

\noindent
 The relation $\leq _W $  is reflexive and transitive,
 and $\equiv_W $ is an equivalence relation.
 The {\it equivalence classes} of $\equiv_W $ are called {\it Wadge degrees}.  The Wadge hierarchy $WH$ is the class of Borel subsets of a set  $X^\om$, where  $X$ is a finite set,
 equipped with $\leq _W $ and with $\equiv_W $.
\nl  For $L\subseteq X^\om$ and $L'\subseteq Y^\om$, if
$L\leq _W L'$ and $L=f^{-1}(L')$  where $f$ is a continuous
function from $ X^\om$  into $Y^\om$, then $f$ is called a continuous reduction of $L$ to
$L'$. Intuitively it means that $L$ is less complicated than $L'$ because
to check whether $x\in L$ it suffices to check whether $f(x)\in L'$ where $f$
is a continuous function. Hence the Wadge degree of an $\om$-language
is a measure
of its topological complexity.
\nl
Notice  that in the above definition, we consider that a subset $L\subseteq  X^\om$ is given
together with the alphabet $X$.

\noindent We can now define the {\it Wadge class} of a set $L$:

\begin{definition}
Let $L$ be a subset of $X^\om$. The Wadge class of $L$ is :
\[[L]\eqdef \{ L' \mid  L'\subseteq Y^\om \mbox{ for a finite alphabet }Y   \mbox{  and  } L'\leq _W L \}.\]
\end{definition}

\noindent Recall that each {\it Borel class} $\bsigma{\alpha}$ and $\bpi{\alpha}$
is a {\it Wadge class}.
A set $L\subseteq X^\om$ is a $\bsigma{\alpha}$
 (respectively $\bpi{\alpha}$)-{\it complete set} iff for any set
$L'\subseteq Y^\om$, $L'$ is in
$\bsigma{\alpha}$ (respectively $\bpi{\alpha}$) iff $L'\leq _W L . $

  There is a close relationship between Wadge reducibility
 and games which we now introduce.

\begin{definition} Let
$L\subseteq X^\om$ and $L'\subseteq Y^\om$.
The Wadge game  $W(L, L')$ is a game with perfect information between two players,
player 1 who is in charge of $L$ and player 2 who is in charge of $L'$.
Player 1 first writes a letter $a_1\in X$, then player 2 writes a letter
$b_1\in Y$, then player 1 writes a letter $a_2\in  X$, and so on.
 The two players alternatively write letters $a_n$ of $X$ for player 1 and $b_n$ of $Y$
for player 2.
After $\om$ steps, the player 1 has written an $\om$-word $a\in X^\om$ and the player 2
has written an $\om$-word $b\in Y^\om$.
 The player 2 is allowed to skip, even infinitely often, provided he really writes an
$\om$-word in  $\om$ steps. The player 2 wins the play iff [$a\in L \lra b\in L'$], i.e. iff :
\begin{center}
  [($a\in L ~{\rm and} ~ b\in L'$)~ {\rm or} ~
($a\notin L ~{\rm and}~ b\notin L'~{\rm and} ~ b~{\rm is~infinite}  $)].
\end{center}
\end{definition}

\noindent
Recall that a strategy for player 1 is a function
$\sigma :(Y\cup \{s\})^\star\ra X$.
And a strategy for player 2 is a function $f:X^+\ra Y\cup\{ s\}$.
The strategy $\sigma$ is a winning strategy  for player 1 iff he always wins a play when
 he uses the strategy $\sigma$, i.e. when the  $n^{th}$  letter he writes is given
by $a_n=\sigma (b_1\cdots b_{n-1})$, where $b_i$ is the letter written by player 2
at step $i$ and $b_i=s$ if player 2 skips at step $i$. A winning strategy for player 2 is defined in a similar manner.

      Martin's Theorem states that every Gale-Stewart game $G(X)$ (see \cite{Kechris94}),  with $X$ a Borel set,
is determined and this implies the following :

\begin{theorem} [Wadge] Let $L\subseteq X^\om$ and $L'\subseteq Y^\om$ be two Borel sets, where
$X$ and $Y$ are finite  alphabets. Then the Wadge game $W(L, L')$ is determined:
one of the two players has a winning strategy. And $L\leq_W L'$ iff the player 2 has a
winning strategy  in the game $W(L, L')$.
\end{theorem}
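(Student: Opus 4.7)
The plan is to split the theorem into two independent parts: the determinacy assertion, and the characterisation of $L \leq_W L'$ in terms of a winning strategy for Player~2. First I would code $W(L,L')$ as an ordinary Gale--Stewart game on the Cantor space $\big(X \times (Y \cup \{s\})\big)^\omega$, where at each round the pair of letters (or letter/skip) played by the two players is recorded. Writing $a \in X^\omega$ for Player~1's resulting sequence and $b$ for the sequence obtained by concatenating Player~2's moves while deleting the skip symbol~$s$, the winning condition for Player~2 is
\[
\big(a \in L \text{ and } b \in L'\big) \ \text{ or } \ \big(a \notin L \text{ and } b \in Y^\omega \text{ and } b \notin L'\big).
\]
Since $L$ and $L'$ are Borel and the condition ``Player~2 skips only finitely often'' (equivalently $b \in Y^\omega$) is $\bsigma{2}$, this set of plays is Borel in the product Cantor space. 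Martin's Borel determinacy theorem then yields the determinacy of $W(L,L')$.

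For the easy direction $(\Leftarrow)$ of the characterisation, assume Player~2 has a winning strategy~$\tau$. I would define $f: X^\omega \to Y^\omega$ by letting $f(a)$ be the $\omega$\=/word produced by Player~2 against $a$ while following $\tau$. Since $\tau$ is winning, the output is always infinite, so $f$ is well-defined. The $n$-th letter of $f(a)$ depends only on the prefix of~$a$ read before $\tau$ commits to that letter, so $f$ is continuous in the prefix metric. The winning condition directly gives $a \in L \Leftrightarrow f(a) \in L'$, whence $L = f^{-1}(L')$ and $L \leq_W L'$.

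For the converse direction $(\Rightarrow)$, assume $f: X^\omega \to Y^\omega$ is continuous with $L = f^{-1}(L')$. I would describe a strategy $\tau$ for Player~2 that, after Player~1 has revealed $a_1\cdots a_n$, maintains as its output the longest common prefix $v_n \in Y^\star$ of $f(a)$ over all extensions $a \in a_1\cdots a_n \cdot X^\omega$; concretely, $\tau$ plays the next letter of $v_n$ whenever $|v_n| > |v_{n-1}|$ and otherwise skips. The total output of Player~2 against $a$ is then $f(a)$ itself, so the winning condition $a \in L \Leftrightarrow f(a) \in L'$ is met and $\tau$ is winning.

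The main subtlety is verifying that the prefixes $v_n$ actually grow without bound, so that Player~2 really produces an $\omega$\=/word and does not skip cofinally. This is where the totality of $f$ and its continuity at the specific point $a$ played by Player~1 come together: for each $m$, continuity at $a$ gives some $n_m$ such that $f$ maps the cylinder $a_1\cdots a_{n_m} \cdot X^\omega$ into the cylinder determined by $f(a){\restr}m$, forcing $|v_{n_m}| \geq m$. Beyond this point the argument is bookkeeping about skips, with all topological content compressed into Martin's theorem for the determinacy half and into the elementary continuity/compactness argument above for the reduction half.
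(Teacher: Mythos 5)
The paper offers no proof of this statement at all: it records it as Wadge's classical theorem and says only that it follows from Martin's Borel determinacy, and your argument fills in precisely that standard derivation --- code $W(L,L')$ as a Gale--Stewart game with Borel payoff on $\big(X\times(Y\cup\{s\})\big)^\omega$, then translate winning strategies for Player~2 into continuous reductions and back --- so it is correct and takes the same route the paper intends. Two harmless nitpicks: the rule is that Player~2 makes infinitely many non-skip moves (a $\bpi{2}$, not $\bsigma{2}$, condition, and \emph{not} equivalent to skipping only finitely often, since the paper explicitly allows infinitely many skips), and in the converse direction Player~2 should emit the next \emph{unplayed} letter of $v_n$ whenever his output lags behind $v_n$ rather than only when $|v_n|$ strictly increases; neither affects Borelness of the payoff set nor the conclusion.
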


\begin{theorem} [Wadge]\label{wh}
Up to the complement and $\equiv _W$, the class of Borel subsets of $X^\om$,
 for  a finite alphabet $X$  having at least two letters, is a well ordered hierarchy.
 There is an ordinal $|WH|$, called the length of the hierarchy, and a map
$d_W^0$ from $WH$ onto $|WH|-\{0\}$, such that for all $L, L' \subseteq X^\om$:\smallskip
\nl $d_W^0 L < d_W^0 L' \lra L<_W L' $  and
\nl $d_W^0 L = d_W^0 L' \lra [ L\equiv_W L' $ or $L\equiv_W L'^-]$.
\end{theorem}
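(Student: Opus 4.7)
The plan is to combine Wadge game determinacy, already stated above via Martin's Borel determinacy, with a well-foundedness argument. First, I would prove Wadge's lemma (the semi-linear ordering principle): for any two Borel sets $L \subseteq X^\omega$ and $L' \subseteq Y^\omega$, either $L \leq_W L'$ or $L' \leq_W L^-$. This is immediate from the determinacy of the Wadge game $W(L, L')$. If player 2 has a winning strategy, then that strategy yields a continuous reduction witnessing $L \leq_W L'$. If instead player 1 has a winning strategy $\sigma$, then reading $\sigma$ as a map from player 2's plays $b \in Y^\omega$ to player 1's plays $a \in X^\omega$ produces a continuous function $g$ such that $g(b) \in L \leftrightarrow b \notin L'$; hence $g$ continuously reduces $L'$ to $L^-$.

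Next, I would deduce linearity up to complementation. Given two Wadge classes $[L]$ and $[L']$, applying the semi-linear ordering lemma to the pairs $(L, L')$ and $(L^-, L')$ shows that at least one of $L, L^-$ continuously reduces to at least one of $L', L'^-$. After taking complements this yields that $[L]$ and $[L']$ are comparable under $\leq_W$, provided one identifies each non-self-dual class with its complement. One also checks that $L \equiv_W L^-$ (the self-dual case) versus $L \not\equiv_W L^-$ (the non-self-dual case, where the degrees come in dual pairs) is the only dichotomy that needs to be handled when passing to the quotient.

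The main obstacle is well-foundedness of $<_W$ on Borel sets: no strictly descending chain $L_0 >_W L_1 >_W L_2 >_W \cdots$ of Borel sets can exist. This is the deeper half of the theorem and genuinely requires Martin's Borel determinacy theorem, applied not to a single Wadge game but to a Borel game that encodes the entire hypothetical descending chain. In such a game the two players jointly build candidate plays that would witness the successive strict inequalities; Borel determinacy yields a winning strategy for one side, and in either alternative one extracts a continuous reduction showing some $L_n \leq_W L_{n+1}$, contradicting the strict descent. Setting up this coding so that the payoff set remains Borel, and then extracting the reduction from the winning strategy, is the delicate technical step.

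Finally, combining linearity up to complementation with well-foundedness, the equivalence classes of $\equiv_W$ on Borel subsets of $X^\omega$, further quotiented by the identification $[L] \sim [L^-]$ for non-self-dual $L$, form a well-ordered set under the induced order. The standard order-isomorphism between a well-ordered set and an ordinal then produces a unique ordinal $|WH|$ together with an order-isomorphism from this quotient onto $|WH| \setminus \{0\}$; pulling this isomorphism back to $WH$ defines the map $d_W^0$, and the two biconditionals in the statement follow directly from the construction.
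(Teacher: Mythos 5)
The paper does not prove this theorem: it is quoted as a classical result of Wadge (with Martin's Borel determinacy as the underlying tool), so there is no internal proof to compare against and your proposal must be measured against the standard argument of Wadge, Martin and Monk. Two of your three steps are fine at the sketch level. The semi-linear ordering principle does follow exactly as you describe from determinacy of $W(L,L')$: a winning strategy for player 1, read as a continuous map from player 2's plays $b\in Y^\om$ to player 1's plays $a\in X^\om$, satisfies $b\in L'\lra a\in L^-$ and hence witnesses $L'\leq_W L^-$. Combining this with its instance for $L^-$ gives comparability of degrees up to complementation, and the final passage from a well-ordered quotient to the ordinal $|WH|$ and the map $d_W^0$ is routine bookkeeping.

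The gap is in the well-foundedness step, which you rightly identify as the crux but then misdescribe. There is no single Borel game encoding a hypothetical chain $L_0>_W L_1>_W\cdots$ from whose determinacy one ``extracts a continuous reduction showing some $L_n\leq_W L_{n+1}$''; that is not how the contradiction is reached, and a proof organised this way would not go through. The Martin--Monk argument instead uses infinitely many \emph{separate} Wadge (in fact Lipschitz) games, each already determined by Martin's theorem: after replacing each $L_n$ by a self-dual set Wadge-equivalent to $L_n\oplus L_n^-$ one knows that player 1 wins both $W(L_n,L_{n+1})$ and $W(L_n^-,L_{n+1})$ for every $n$. For each $\epsilon\in\{0,1\}^\om$ one composes the corresponding player-1 strategies into an infinite diagram in which player 1's output at level $n{+}1$ is fed back as player 2's input at level $n$, producing a point $x_0(\epsilon)$ continuously in $\epsilon$; flipping a single bit $\epsilon(n)$ flips the membership of $x_0(\epsilon)$ in $L_0$, so the Borel set $\{\epsilon\mid x_0(\epsilon)\in L_0\}$ is a ``flip set'', contradicting the Baire property. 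The contradiction is thus with a regularity property of a set of parameters, not with the strictness of the descent via a reversed reduction. As written, your proposal is therefore incomplete precisely at the point where the depth of the theorem lies.
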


\noindent
 The Wadge hierarchy of Borel sets of {\bf finite rank }
has  length $^1\varepsilon_0$ where $^1\varepsilon_0$
 is the limit of the ordinals $\alpha_n$ defined by $\alpha_1=\om_1$ and
$\alpha_{n+1}=\om_1^{\alpha_n}$ for $n$ a non negative integer, $\om_1$
 being the first non countable ordinal. Then $^1\varepsilon_0$ is the first fixed
point of the ordinal exponentiation of base $\om_1$. The length of the Wadge hierarchy
of Borel sets in $\bdelta{\om}= \bsigma{\om}\cap \bpi{\om}$
  is the $\om_1^{th}$ fixed point
of the ordinal exponentiation of base $\om_1$, which is a much larger ordinal. The length
of the whole Wadge hierarchy of Borel sets is a huge ordinal, with regard
to the $\om_1^{th}$ fixed point
of the ordinal exponentiation of base $\om_1$. It is described in \cite{Wadge83,Duparc01}
by the use of the Veblen functions.

\section{Wadge degrees of $\om$-languages of Petri nets}
\label{sec:wadge}

We are firstly going to prove the following result.

\begin{theorem}\label{thewad}
The Wadge hierarchy of the class {\bf r}-${\bf BCL}(4)_\om$ is equal to the Wadge hierarchy of the class
{\bf r}-${\bf CL}(1)_\om$.
\end{theorem}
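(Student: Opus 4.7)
The plan is to establish the equality of Wadge hierarchies by two inclusions of Wadge degrees, which together give the equality of hierarchies up to the natural identification of degrees.

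For the easy direction, every $\om$\=/language in $\mathbf{r}\text{-}\mathbf{BCL}(4)_\om$ is recognised by a Turing machine (a $4$-blind-counter B\"uchi automaton is a special case of a B\"uchi Turing machine), hence belongs to the class $\Sigma_1^1$ of effective analytic sets. Invoking the earlier result of \cite{finkel2006borel} that the Wadge hierarchy of $\mathbf{r}\text{-}\mathbf{CL}(1)_\om$ equals the Wadge hierarchy of effective analytic sets, every Wadge degree realised in $\mathbf{r}\text{-}\mathbf{BCL}(4)_\om$ is also realised in $\mathbf{r}\text{-}\mathbf{CL}(1)_\om$.

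The substantive direction consists in showing that every Wadge degree realised by a real-time $1$-counter (with zero-test) B\"uchi automaton is realised by a real-time $4$-blind-counter B\"uchi automaton. The plan is to fix an arbitrary $\mathcal{A} \in \mathbf{r}\text{-}\mathbf{CL}(1)_\om$ over an alphabet $\Si$, design a continuous coding $h : \Si^\om \to \Ga^\om$ for a suitably enriched alphabet $\Ga$, and construct a $4$-blind-counter B\"uchi automaton $\mathcal{B}$ over $\Ga$ whose $\om$\=/language $\lang(\mathcal{B})$ is Wadge\=/equivalent to $\lang(\mathcal{A})$. The coding $h$ will insert, between successive letters of a word $x \in \Si^\om$, blocks of fresh padding symbols whose lengths grow with the step index, giving $\mathcal{B}$ enough room to carry out auxiliary arithmetic on its blind counters without having to read new data from $x$. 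One blind counter mirrors the value of $\mathcal{A}$'s counter; the remaining three are used to detect violations of zero-tests guessed by $\mathcal{B}$ in a streaming fashion. Wadge equivalence of $\lang(\mathcal{B})$ and $\lang(\mathcal{A})$ will be obtained by ensuring that words outside the image $h(\Si^\om)$ are routed, via a distinguished rejecting or accepting behaviour of $\mathcal{B}$, into a trivial set that does not affect the Wadge degree; inside $h(\Si^\om)$ one has $\lang(\mathcal{B}) \cap h(\Si^\om) = h(\lang(\mathcal{A}))$, and both $h$ and a continuous left inverse to $h$ on $h(\Si^\om)$ provide the two reductions needed for $\lang(\mathcal{A}) \equiv_W \lang(\mathcal{B})$.

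The hard part is, of course, the faithful simulation of the zero-tests of $\mathcal{A}$ by $\mathcal{B}$ using only blind counters. The classical Petri\=/net trick — increment an auxiliary counter in parallel with a putative zero\=/testable counter and verify their balance at the end of the computation — does not directly carry over to $\om$\=/runs, since there is no final moment to perform the check. The plan is to adapt this idea by breaking the potentially infinite computation into coded ``phases'' delimited by marker symbols of $h$: at each purported zero-test $\mathcal{B}$ non-deterministically commits to the guess ``counter is $0$''; during the ensuing padding block, it performs a bounded verification that transfers the would\=/be zero content into the auxiliary counters, forcing an incorrect guess to leave a non\=/removable residue that eventually blocks a needed decrement, preventing acceptance. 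Four blind counters suffice for this bookkeeping: one for the main simulation, one dedicated to accumulating the transfer, and two more for the cross\=/checks of increments against decrements across consecutive padding blocks. Once this construction is in place, the Wadge reductions on both sides follow from the continuity of $h$ and of the obvious decoding.
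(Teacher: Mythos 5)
Your overall architecture is the same as the paper's: code $x$ by interleaving padding blocks $0^n$ of linearly growing length, use the four blind counters in two cross-checking pairs so that the forced balance across consecutive blocks pins down the simulated counter value and lets the finite control detect zero, and accept off-image words only inside a simple ``garbage'' set. The reverse inclusion via effective analyticity and \cite{finkel2006borel} is also exactly what the paper does.

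There is, however, a genuine gap in your final step. You assert that the words outside $h(\Sio)$ can be routed into a trivial set ``that does not affect the Wadge degree,'' and that a continuous left inverse of $h$ on $h(\Sio)$ yields $\lang(\mathcal{B})\leq_W\lang(\mathcal{A})$. Neither claim is automatic. The automaton $\mathcal{B}$ cannot test membership in the closed set $h(\Sio)$, so the best one can arrange is $\lang(\mathcal{B})=h(\lang(\mathcal{A}))\cup\mathcal{L}$ for a certain nonempty \emph{open} set $\mathcal{L}$, and adding an open set can change the Wadge degree: for $\lang(\mathcal{A})=\emptyset$ one gets $\mathcal{L}\not\equiv_W\emptyset$, and more generally the equivalence fails for languages of finite Wadge degree. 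A left inverse of $h$ is only defined on the closed set $h(\Sio)$; to get a continuous reduction on all of $\Gao$ you must say where $\mathcal{L}$ and its complement inside $h(\Sio)^-$ are sent, and this is precisely where the paper does real work: it proves $h(\lang(\mathcal{A}))\cup\mathcal{L}\leq_W\emptyset+(\emptyset+\lang(\mathcal{A}))$ by an explicit winning strategy in a Wadge game, and then invokes $L\equiv_W\emptyset+L$, which holds only for \emph{non self dual} sets of degree at least $\om$ (Lemma~\ref{sum-wad}). Consequently two further cases must be treated separately, which your proposal omits: finite Wadge degrees (realised by regular $\om$-languages, hence already in {\bf r}-${\bf BCL}(4)_\om$) and self dual sets (decomposed as $\Si_1\cdot L_1\cup\Si_2\cdot L_2$ with $L_1,L_2$ non self dual). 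Without the restriction to non self dual sets of degree at least $\om$ and these two supplementary arguments, the claimed equivalence $\lang(\mathcal{B})\equiv_W\lang(\mathcal{A})$ is false as stated.
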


 In order to prove this result, we  first define
 a coding of $\om$-words
over a finite alphabet $\Si$ by $\om$-words over the alphabet  $\Si\cup\{A, B, 0\}$ where
$A$, $B$ and $0$  are  new letters not in $\Si$.

\medskip
We shall code an $\om$-word $x\in \Si^{\om}$ by the $\om$-word $h(x)$ defined by
\begin{center}
$h(x)=A0x(1)B0^{2}x(2)A\cdots      B0^{2n}x(2n)A0^{2n+1}x(2n+1)B \cdots $
\end{center}
\noindent   This coding defines a  mapping  $h: \Si^{\om} \ra (\Si\cup\{A, B, 0\})^\om$.
\nl The
function $h$ is continuous because for all $\om$-words $x, y \in \Si^{\om}$ and
each positive integer $n$,  it holds
that  $\delta(x, y) < 2^{-n} \ra \delta( h(x), h(y) ) < 2^{-n}$.
\bigskip

We are going to  state  Lemma \ref{lem5}. Before that, we just describe some important facts given by this lemma and  its proof.  The lemma provides, from a
 real time $1$-counter B\"uchi automaton $\mathcal{A}$ accepting $\om$\=/words over the alphabet $\Si$,   a construction of a  4-blind-counter B\"uchi automaton $\mathcal{B}$
reading $\om$\=/words over the alphabet $\Ga=\Si \cup\{A, B, 0\}$ which is able in some sense to simulate the automaton  $\mathcal{A}$.  Actually the automaton $\mathcal{B}$
simulates the reading of an  $\om$\=/word $x$ by  $\mathcal{A}$ only when $\mathcal{B}$  reads the specific $\om$\=/word
\begin{center}
$h(x)=A0x(1)B0^{2}x(2)A\cdots      B0^{2n}x(2n)A0^{2n+1}x(2n+1)B \cdots $
\end{center}
\eject
\noindent  The reading by the  automaton $\mathcal{B}$ of the $\om$\=/word $h(x)$ will provide a  decomposition of the $\om$-word $h(x)$ of the following form:\vspace*{-1mm}
\begin{center}
$y=Au_1v_1x(1)Bu_2v_2x(2)Au_3v_3x(3)B \cdots  $

 ~~~~~~~~~~~~~~~~~~~~~~$ \cdots  Bu_{2n}v_{2n}x(2n)Au_{2n+1}v_{2n+1}x(2n+1)B \cdots$
 \end{center}
\noindent where,  for all integers $i\geq 1$, $ u_i, v_i  \in 0^\star$, $x(i) \in \Si$, $|u_1|=0$.
Then an accepting run of $\mathcal{B}$ on  $h(x)$ will correspond to an accepting run of  $\mathcal{A}$ on $x$. Moreover the successive values of the single counter of $\mathcal{A}$
during this run will be the integers $|u_n|$, $n\geq 1$. Then the automaton $\mathcal{B}$ will be able to determine, using a finite control component during the reading of the finite word $u_n$, whether $|u_n|=0$, and thus to simulate the zero-tests of the automaton $\mathcal{A}$.

The proof of  the following lemma will explain this in detail.

\begin{lemma}\label{lem5}  Let $\mathcal{A}$ be a real time
$1$-counter B\"uchi automaton accepting
$\om$\=/words over the alphabet $\Si$. Then one can construct a real time 4-blind-counter B\"uchi automaton $\mathcal{B}$
reading $\om$\=/words over the alphabet $\Ga=\Si \cup\{A, B, 0\}$, such that
$ \lang(\mathcal{A})$ = $h^{-1} (\lang(\mathcal{B}))$, i.e.
 $\fa x \in \Si^{\om}.\ h(x) \in \lang(\mathcal{B}) \longleftrightarrow
x\in  \lang(\mathcal{A}).$
\end{lemma}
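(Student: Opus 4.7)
The plan is to construct $\mathcal{B}$ so that, on reading the coded word $h(x)$, it nondeterministically simulates $\mathcal{A}$'s run on $x$. I would let the state of $\mathcal{B}$ carry a copy of $\mathcal{A}$'s current state together with a mode flag indicating the position within the current block of $h(x)$: reading the marker $A$ or $B$, the $u_n$\=/part of the pad, the $v_n$\=/part, or the letter $x(n)$. While processing the pad $0^n$ preceding $x(n)$, $\mathcal{B}$ would nondeterministically commit to a split $u_n v_n$ by switching from ``$u$\=/mode'' to ``$v$\=/mode'' at some chosen position, the intended meaning being that $|u_n|$ equals the value $c_{n-1}$ of $\mathcal{A}$'s counter just before its $n$-th step.

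When the symbol $x(n)$ is finally read, $\mathcal{B}$ would perform the transition of $\mathcal{A}$ from the currently simulated state, replacing $\mathcal{A}$'s zero-test by the observable dichotomy $|u_n|=0$ vs.\ $|u_n|\ge 1$, and reflecting the counter update $\delta_n\in\{-1,0,+1\}$ in the blind counters. Since $\mathcal{A}$'s transitions depend only on the zero/positive classification of the counter, it will be enough to enforce $|u_n|=0\iff c_{n-1}=0$ together with $c_n\ge 0$ throughout the run. Two of the four blind counters will track $\mathcal{A}$'s counter value as a difference $c_n=K^+-K^-$, with $K^+$ (resp.\ $K^-$) incremented on each $\mathcal{A}$\=/increment (resp.\ decrement); this automatically enforces $c_n\ge 0$ through blindness. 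The remaining two counters tie the split of the pad to the simulated counter value, exploiting the constraint that a blind counter cannot become negative: they are manipulated during the $u_n$ and $v_n$ phases so that any disagreement between $|u_n|$ and $c_{n-1}$ eventually forces one of them to attempt to drop below zero and thereby blocks the run.

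The equivalence $\lang(\mathcal{A})=h^{-1}(\lang(\mathcal{B}))$ then splits into two directions. Forward: from an accepting run of $\mathcal{A}$ on $x$ I would choose at each block the split $|u_n|=c_{n-1}$, check that the counter updates respect non\=/negativity, and inherit the B\"uchi condition from the accepting states of $\mathcal{A}$ visited infinitely often. Backward: given an accepting run of $\mathcal{B}$ on $h(x)$, the mode flags read off a split $u_n v_n$ of each pad together with a sequence of simulated $\mathcal{A}$\=/transitions, and the counter discipline enforced by blindness guarantees this data is a legitimate accepting run of $\mathcal{A}$ on $x$. The hard part — and the reason the fourth counter is needed — will be designing the counter choreography inside the pad so that the correspondence between $|u_n|$ and $c_{n-1}$ is enforced without zero-tests or direct equality checks; this will be feasible because the pad grows with $n$ and hence always affords enough room to encode $c_{n-1}\le n-1$.
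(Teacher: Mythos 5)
Your overall architecture matches the paper's: code $x$ as $h(x)$ with growing unary pads, nondeterministically split each pad into $u_nv_n$ with the intention that $|u_n|$ equals $\mathcal{A}$'s counter value, and simulate the zero\-/test by letting the finite control observe whether the $u$\-/phase is empty. However, there are two genuine gaps, and together they cover exactly the technical core of the lemma. First, the representation $c_n=K^+-K^-$ with $K^+$ and $K^-$ two monotonically increasing blind counters does \emph{not} enforce $c_n\ge 0$: blindness only prevents an individual counter from going below zero, and neither $K^+$ nor $K^-$ ever attempts to decrease, so nothing constrains $K^+\ge K^-$. Second, and more importantly, you explicitly defer ``the counter choreography inside the pad so that the correspondence between $|u_n|$ and $c_{n-1}$ is enforced'' as ``the hard part,'' but that choreography \emph{is} the lemma; observing that the pad has enough room to hold $c_{n-1}\le n-1$ explains feasibility, not exactness --- a nondeterministic split could still choose any $|u_n|$ between $0$ and $n$, and no mechanism you describe rules out the wrong choices.

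For comparison, the paper's enforcement mechanism uses the four counters as two alternating pairs $(\mathcal{C}_1,\mathcal{C}_2)$ and $(\mathcal{C}_3,\mathcal{C}_4)$, storing $|u_{n-1}|$ and $|v_{n-1}|$ in one pair with the invariant $|\mathcal{C}_1|+|\mathcal{C}_2|=n-1$ (this exploits that the $n$\-/th pad of $h(x)$ has length \emph{exactly} $n$, not merely ``at least'' something). During the $n$\-/th pad the automaton must first decrease $\mathcal{C}_1$ by some $k_n$ and then $\mathcal{C}_2$ by the remaining $(n-1)-k_n$ of the first $n-1$ letters: blindness gives $k_n\le|u_{n-1}|$ and $(n-1)-k_n\le|v_{n-1}|=(n-1)-|u_{n-1}|$, whence $k_n=|u_{n-1}|$ exactly. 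The other pair is simultaneously increased so that $|u_n|=k_n+N_{n-1}=|u_{n-1}|+N_{n-1}$, which is precisely the counter update of $\mathcal{A}$; the roles of the two pairs swap on the next pad. Without this (or an equivalent) exact\-/depletion argument, the backward direction of your proof --- that an accepting run of $\mathcal{B}$ on $h(x)$ yields a legitimate run of $\mathcal{A}$ on $x$ --- does not go through.
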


\begin{proof}
Let $\mathcal{A}=(K,\Si, \Delta, q_0, F)$ be a real time
$1$-counter B\"uchi automaton accepting
$\om$-words over the alphabet $\Si$. We are going to explain informally the behaviour of the 4-blind-counter B\"uchi automaton $\mathcal{B}$
when reading an $\om$-word of the form $h(x)$, even if we are going to  see  that $\mathcal{B}$ may also accept some infinite words which do not belong to the
range of $h$. Recall that $h(x)$ is of the form
\begin{center}
$h(x)=A0x(1)B0^{2}x(2)A \cdots
B0^{2n}x(2n)A0^{2n+1}x(2n+1)B \cdots $
\end{center}
\noindent Notice that in particular every $\om$-word in $h(\Sio)$ is of the form:
\begin{center}
$ y = A0^{n_1}x(1)B0^{n_2}x(2)A \cdots
 B0^{n_{2n}}x(2n)A0^{n_{2n+1}}x(2n+1)B \cdots $
 \end{center}

\noindent where for all $i\geq 1$,  $n_i >0$ is a  positive integer, and $x(i)\in \Si$.

\medskip
 Moreover it is easy to see that the set of $\om$-words $y \in \Gao$  which can be written in the  above form is a regular $\om$-language $\mathcal{R}\subseteq \Gao$, and thus we can assume, using a classical product construction (see for instance \cite{PerrinPin}),  that the automaton
$\mathcal{B}$ will only  accept some $\om$-words of this form.

\medskip
 Now the  reading by the automaton $\mathcal{B}$  of  an $\om$-word of  the above form
\begin{center}
$ y = A0^{n_1}x(1)B0^{n_2}x(2)A \cdots   B0^{n_{2n}}x(2n)A0^{n_{2n+1}}x(2n+1)B \cdots $
\end{center}
  \noindent  will give a  decomposition of the $\om$-word $y$ of the following form:
\begin{center}
$y=Au_1v_1x(1)Bu_2v_2x(2)Au_3v_3x(3)B \cdots  $

 ~~~~~~~~~~~~~~~~~~~~~~$ \cdots  Bu_{2n}v_{2n}x(2n)Au_{2n+1}v_{2n+1}x(2n+1)B \cdots$
 \end{center}
\noindent where,  for all integers $i\geq 1$, $ u_i, v_i  \in 0^\star$, $x(i) \in \Si$, $|u_1|=0$.

\eject
 The automaton  $\mathcal{B}$  will  use its four {\it blind} counters, which we denote $\mathcal{C}_1,  \mathcal{C}_2,  \mathcal{C}_3,  \mathcal{C}_4,$  in the following way.
Recall that the automaton  $\mathcal{B}$  being non-deterministic, we do not describe the unique run of $\mathcal{B}$ on $y$, but the general case of a possible run.

At the beginning of the run, the value of each of the four counters is equal to zero. Then  the counter  $\mathcal{C}_1$ is increased of $|u_1|$ when reading $u_1$, i.e. the counter  $\mathcal{C}_1$ is actually not increased since $|u_1|=0$ and the finite control is here used to check this.
Then the counter $\mathcal{C}_2$ is increased of $1$ for each letter $0$ of $v_1$ which is read until the automaton  reads the letter $x(1)$ and then the letter $B$. Notice that at this time the values of the counters $\mathcal{C}_3$ and   $\mathcal{C}_4$ are still equal to zero.
Then the behaviour of the automaton $\mathcal{B}$ when reading the next segment $0^{n_2}x(2)A$ is as follows. The counter
 $\mathcal{C}_1$ is firstly decreased of $1$ for each letter $0$ read,  when reading $k_2$ letters $0$, where $k_2\geq 0$ (notice that here $k_2=0$ because the value of the counter
$\mathcal{C}_1$ being equal to zero, it cannot decrease under $0$).  Then   the counter $\mathcal{C}_2$ is decreased of $1$ for each letter $0$ read, and next the automaton has to read one more letter $0$, leaving unchanged the counters $\mathcal{C}_1$ and   $\mathcal{C}_2$, before reading the letter $x(2)$.  The end of the decreasing mode of $\mathcal{C}_1$ coincide with the
beginning of the decreasing mode of $\mathcal{C}_2$, and this change may occur in a {\it non-deterministic way} (because the automaton $\mathcal{B}$ cannot check whether the value of $\mathcal{C}_1$ is equal to zero).
 Now we describe the behaviour of the counters $\mathcal{C}_3$ and   $\mathcal{C}_4$ when reading the segment $0^{n_2}x(2)A$. Using its finite control,  the automaton $\mathcal{B}$
 has checked that $|u_1|=0$, and then  if there is a transition of
the automaton  $\mathcal{A}$ such that $x(1) : ( q_{0}, |u_1|) \mapsto_{\mathcal{A}}
(q_1, |u_1| +N_1 )$ then the counter $\mathcal{C}_3$ is increased of $1$ for each letter $0$ read, during the reading of the $k_2+N_1$ first  letters $0$ of $0^{n_2}$, where $k_2$ is described above as the number of which the counter $\mathcal{C}_1$ has been decreased. This determines $u_2$ by $|u_2|=k_2+N_1$ and then the counter $\mathcal{C}_4$ is increased by $1$ for each letter $0$ read until $\mathcal{B}$ reads $x(2)$, and this determines $v_2$.  Notice that the automaton $\mathcal{B}$ keeps in its finite control the memory of the state $q_1$ of the automaton $\mathcal{A}$, and  that, after having read the segment $0^{n_2}=u_2v_2$, the values of the counters $\mathcal{C}_3$ and   $\mathcal{C}_4$ are respectively
$|\mathcal{C}_3|=|u_2|=k_2+N_1$ and $|\mathcal{C}_4|=|v_2|=n_2-(|u_2|)$.

Now the run will continue. Notice that generally when reading a segment $B0^{n_{2n}}x(2n)A$  the counters $\mathcal{C}_1$ and $\mathcal{C}_2$ will  successively decrease when reading
the first $(n_{2n}-1)$ letters $0$ and then will remain unchanged when reading the last letter $0$, and the counters $\mathcal{C}_3$ and   $\mathcal{C}_4$ will successively increase, when reading the $(n_{2n})$ letters $0$. Again the end of the decreasing mode of $\mathcal{C}_1$ coincide with the beginning of the decreasing mode of $\mathcal{C}_2$, and this change may occur in a {\it non-deterministic way}.  But the automaton has kept in its finite control
whether $|u_{2n-1}|=0$ or not and also a state $q_{2n-2}$ of the automaton  $\mathcal{A}$. Now,  if  there is a transition of
the automaton  $\mathcal{A}$ such that $x(2n-1) : ( q_{2n-2}, |u_{2n-1}|) \mapsto_{\mathcal{A}}
(q_{2n-1}, |u_{2n-1}| +N_{2n-1} )$ for some integer $N_{2n-1} \in \{-1; 0, 1\}$, and the counter     $\mathcal{C}_1$ is decreased of $1$ for each letter $0$ read,  when reading $k_{2n}$ first letters $0$ of $0^{n_{2n}}$, then the counter  $\mathcal{C}_3$ is increased of $1$ for each letter $0$ read, during the reading of the $k_{2n}+N_{2n-1}$ first  letters $0$ of $0^{n_{2n}}$, and next the   counter $\mathcal{C}_4$  is increased by $1$ for each letter $0$ read until $\mathcal{B}$ reads $x(2n)$, and this determines $v_{2n}$.  Then after having read the segment $0^{n_ {2n}}=u_{2n}v_{2n}$, the values of the counters $\mathcal{C}_3$ and   $\mathcal{C}_4$ have respectively increased of $|u_{2n}|=k_{2n}+N_{2n-1}$ and $|v_{2n}|=n_{2n}-|u_{2n}|$.
Notice that one cannot ensure that, after the reading of $0^{n_ {2n}}=u_{2n}v_{2n}$,  the exact values of these counters  are   $|\mathcal{C}_3|=|u_{2n}|=k_{2n}+N_{2n-1}$ and $|\mathcal{C}_4|=|v_{2n}| =n_ {2n} - |u_{2n}|$. Actually this is due to the fact that one cannot ensure that the values of $\mathcal{C}_3$ and   $\mathcal{C}_4$ are equal to zero at the beginning of the reading of the segment $B0^{n_{2n}}x(2n)A$ although we will see this is true and important in the particular case of a word of the form $y=h(x)$.

The run will continue in a similar manner during the reading of the next segment 	$A0^{n_{2n+1}}x(2n+1)B$, but here the role of the counters $\mathcal{C}_1$ and $\mathcal{C}_2$ on one side, and of the counters $\mathcal{C}_3$ and   $\mathcal{C}_4$  on the other side, will be interchanged.  More precisely the counters $\mathcal{C}_3$ and $\mathcal{C}_4$ will  successively decrease when reading
the first $(n_{2n+1}-1)$ letters $0$ and then will remain unchanged when reading the last letter $0$, and the counters $\mathcal{C}_1$ and   $\mathcal{C}_2$ will successively increase, when reading the $(n_{2n+1})$ letters $0$. The end of the decreasing mode of $\mathcal{C}_3$ coincide with the beginning of the decreasing mode of $\mathcal{C}_4$, and this change may occur in a {\it non-deterministic way}.  But the automaton has kept in its finite control
whether $|u_{2n}|=0$ or not and also a state $q_{2n-1}$ of the automaton  $\mathcal{A}$. Now,  if  there is a transition of
the automaton  $\mathcal{A}$ such that $x(2n) : ( q_{2n-1}, |u_{2n}|) \mapsto_{\mathcal{A}}
(q_{2n}, |u_{2n}| +N_{2n} )$ for some integer $N_{2n} \in \{-1; 0, 1\}$, and the counter     $\mathcal{C}_3$ is decreased of $1$ for each letter $0$ read,  when reading $k_{2n+1}$ first letters $0$ of $0^{n_{2n+1}}$, then the counter  $\mathcal{C}_1$ is increased of $1$ for each letter $0$ read, during the reading of the $k_{2n+1}+N_{2n}$ first  letters $0$ of $0^{n_{2n+1}}$, and next the   counter $\mathcal{C}_2$  is increased by $1$ for each letter $0$ read until $\mathcal{B}$ reads $x(2n+1)$, and this determines $v_{2n+1}$.  Then after having read the segment $0^{n_ {2n+1}}=u_{2n+1}v_{2n+1}$, the values of the counters $\mathcal{C}_1$ and   $\mathcal{C}_2$ have respectively increased of $|u_{2n+1}|=k_{2n+1}+N_{2n}$ and $|v_{2n+1}|=n_{2n+1}-|u_{2n+1}|$.
Notice that again one cannot ensure that, after the reading of $0^{n_ {2n+1}}=u_{2n+1}v_{2n+1}$,  the exact values of these counters  are   $|\mathcal{C}_1|=|u_{2n+1}|=k_{2n+1}+N_{2n}$ and $|\mathcal{C}_2|=|v_{2n+1}| =n_ {2n+1} - |u_{2n+1}|$. This is due to the fact that one cannot ensure that the values of $\mathcal{C}_1$ and   $\mathcal{C}_2$ are equal to zero at the beginning of the reading of the segment $A0^{n_{2n+1}}x(2n+1)B$ although we will see this is true and important in the particular case of a word of the form $y=h(x)$.

The run then continues in the same way if it is possible and in particular if there is no blocking due to the fact that one of the counters of  the automaton  $\mathcal{B}$ would have a negative value.

Now an $\om$-word $y\in \mathcal{R}\subseteq \Gao$ of the above form will be accepted by the automaton  $\mathcal{B}$ if there is such an infinite  run for which a final state $q_f\in F$
of the automaton  $\mathcal{A}$ has been stored infinitely often in the finite control of $\mathcal{B}$ in the way which has just been described above.

 We now consider the particular case of an $\om$-word of the form $y=h(x)$,  for some  $x\in \Sio$.  Let then

 $y=h(x)=A0x(1)B0^{2}x(2)A0^{3}x(3)B \cdots B0^{2n}x(2n)A0^{2n+1}x(2n+1)B \cdots $

\medskip
We are going to show that, if  $y$ is accepted by the automaton $\mathcal{B}$, then $x\in \lang(\mathcal{A})$. Let us consider a run of the automaton $\mathcal{B}$ on $y$ as described above and which is an accepting run. We first show by induction on $n\geq 1$, that after having read an initial segment
of the form
\[A0x(1)B0^{2}x(2)A\cdots A0^{2n-1}x(2n-1)B,\]
the values of the counters $\mathcal{C}_3$ and   $\mathcal{C}_4$ are equal to zero, and the values of the
counters $\mathcal{C}_1$ and   $\mathcal{C}_2$ satisfy $|\mathcal{C}_1| + |\mathcal{C}_2|=2n-1$. And similarly after having read an initial segment
of the form
\[A0x(1)B0^{2}x(2)A \cdots B0^{2n}x(2n)A,\]
the values of the counters $\mathcal{C}_1$ and   $\mathcal{C}_2$ are equal to zero, and the values of the
counters $\mathcal{C}_3$ and   $\mathcal{C}_4$ satisfy $|\mathcal{C}_3| + |\mathcal{C}_4|=2n$.

\medskip
For $n=1$, we have seen that after having read the initial segment $A0x(1)B$, the values of the counters $\mathcal{C}_1$ and   $\mathcal{C}_2$ will be respectively $0$ and $|v_1|$ and here $|v_1|=1$ and thus $|\mathcal{C}_1| + |\mathcal{C}_2|=1$. On the other hand the counters  $\mathcal{C}_3$ and   $\mathcal{C}_4$  have not yet increased so that the value of each of these counters is equal to zero. During  the reading of the segment $0^{2}$ of  $0^{2}x(2)A$ the counters $\mathcal{C}_1$ and   $\mathcal{C}_2$  successively decrease. But here
 $\mathcal{C}_1$ cannot decrease  (with the above notations, it holds that $k_2=0$) so $\mathcal{C}_2$ must decrease of $1$   because after the decreasing mode the automaton $\mathcal{B}$ must read a last letter $0$ without decreasing
 the counters  $\mathcal{C}_1$ and   $\mathcal{C}_2$ and then the letter $x(2)\in \Si$. Thus after having read $0^{2}x(2)A$ the values of $\mathcal{C}_1$ and   $\mathcal{C}_2$ are equal to zero. Moreover the counters $\mathcal{C}_3$ and   $\mathcal{C}_4$ had their values equal to zero at the beginning of the reading of $0^{2}x(2)A$ and they successively increase during the reading of $0^{2}$ and they remain unchanged during the reading of $x(2)A$ so that their values satisfy  $|\mathcal{C}_3| + |\mathcal{C}_4|=2$ after the reading of
  $0^{2}x(2)A$.

  Assume now that for some integer $n>1$  the claim is proved for all integers $k<n$ and let us prove it for the integer $n$. By induction hypothesis we know that at the beginning of the reading of the segment
  $A0^{2n-1}x(2n-1)B$ of $y$, the  values of the counters $\mathcal{C}_1$ and   $\mathcal{C}_2$ are equal to zero, and the values of the
counters $\mathcal{C}_3$ and   $\mathcal{C}_4$ satisfy $|\mathcal{C}_3| + |\mathcal{C}_4|=2n-2$. When reading the $(2n-2)$ first letters $0$ of $A0^{2n-1}x(2n-1)B$ the
counters $\mathcal{C}_3$ and   $\mathcal{C}_4$ successively decrease and they must decrease completely because after there must remain only one letter $0$ to be read by
$\mathcal{B}$ before the letter $x(2n-1)$. Therefore after the reading of  $A0^{2n-1}x(2n-1)B$  the values of the counters $\mathcal{C}_3$ and   $\mathcal{C}_4$ are equal to zero.
And since the   values of the counters $\mathcal{C}_1$ and   $\mathcal{C}_2$ are equal to zero before the reading of $0^{2n-1}x(2n-1)B$ and these counters  successively increase during the reading of  $0^{2n-1}$, their values satisfy $|\mathcal{C}_1| + |\mathcal{C}_2|=2n-1$ after the reading of $A0^{2n-1}x(2n-1)B$.
We can reason in a very similar manner for the reading of the next segment  $B0^{2n}x(2n)A$, the role  of the counters $\mathcal{C}_1$ and $\mathcal{C}_2$ on one side, and of the counters $\mathcal{C}_3$ and   $\mathcal{C}_4$  on the other side, being simply  interchanged.  This  ends the proof of the claim by induction on $n$.

  It is now easy to see by induction that for each integer $n\geq 2$, it holds that $k_n=|u_{n-1}|$.  Then, since with the above notations we have
  $|u_{n+1}|=k_{n+1} + N_{n}=|u_{n}| + N_n$, and there is a transition of
the automaton  $\mathcal{A}$ such that $x(n) : ( q_{n-1}, |u_{n}|) \mapsto_{\mathcal{A}}
(q_{n}, |u_{n}| +N_{n})$ for $N_{n} \in \{-1; 0, 1\}$, it holds that  $x(n) : ( q_{n-1}, |u_{n}|) \mapsto_{\mathcal{A}}
(q_{n}, |u_{n+1}| )$. Therefore the sequence $(q_i, |u_{i}|)_{i\geq 0}$ is an accepting run of the automaton $\mathcal{A}$ on the $\om$-word $x$ and $x\in \lang(\mathcal{A})$.
 Notice that the state $q_0$ of the sequence $(q_i)_{i\geq 0}$  is also the initial state
of $\mathcal{A}$.

  Conversely, it is easy to see that if $x\in \lang(\mathcal{A})$ then there exists an accepting run of the automaton $\mathcal{B}$ on the $\om$-word $h(x)$ and $h(x)\in \lang(\mathcal{B})$.
\end{proof}

The above Lemma \ref{lem5} shows that, given  a real time
$1$-counter (with zero-test) B\"uchi automaton $\mathcal{A}$ accepting
$\om$-words over the alphabet $\Si$,  one can construct a real time 4-blind-counter B\"uchi automaton $\mathcal{B}$ which can  simulate the $1$-counter automaton  $\mathcal{A}$  on the code $h(x)$ of the word $x$. On the other hand, we cannot describe precisely the $\om$-words which are accepted by $\mathcal{B}$ but are not in the set  $h(\Sio)$. However we can see
that all these words have a special shape, as stated by the following lemma.

\begin{lemma}\label{lem6}  Let $\mathcal{A}$ be a real time
$1$-counter B\"uchi automaton accepting
$\om$-words over the alphabet $\Si$,  and let    $\mathcal{B}$   be the real time 4-blind-counter B\"uchi automaton
reading words over the alphabet $\Ga=\Si \cup\{A, B, 0\}$ which is constructed in the proof of Lemma \ref{lem5}.
Let $y \in \lang(\mathcal{B})\setminus h(\Sio)$ being of the following form
\begin{center}
$y = A0^{n_1}x(1)B0^{n_2}x(2)A0^{n_3}x(3)B \cdots  B0^{n_{2n}}x(2n)A0^{n_{2n+1}}x(2n+1)B \cdots $
\end{center}
\noindent and let $i_0$ be the smallest integer $i$ such that $n_i \neq i$. Then it holds that either $i_0=1$ or $n_{i_0} < i_0$.
\end{lemma}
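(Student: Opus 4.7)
The plan is to exploit the rigid three-phase structure of $\mathcal{B}$'s behaviour on a zero-block, together with the inductive invariant already established in the proof of Lemma~\ref{lem5}. I will first dispose of the case $i_0=1$, for which the conclusion of the lemma holds by fiat, and then assume $i_0\geq 2$ and argue $n_{i_0}<i_0$.

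Because $i_0$ is chosen minimal, the prefix of $y$ ending immediately before the block $0^{n_{i_0}}$ coincides letter-by-letter with a prefix of $h(x')$ for a suitable finite word $x'\in\Si^{i_0-1}$. The induction carried out in the proof of Lemma~\ref{lem5} relies only on the local fact that the $i$\=/th zero-block has length $i$ for every $i$ strictly below the index being treated; it never uses any global property of a word in $h(\Si^\om)$. Consequently the same induction applies verbatim up to position $i_0-1$ and yields the key invariant: just before $\mathcal{B}$ starts reading the block $0^{n_{i_0}}$, the two counters that are about to enter the decrement mode have total content exactly $i_0-1$, while the other two counters are zero.

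Now I would inspect the action of $\mathcal{B}$ on $0^{n_{i_0}}$. According to the construction, this block must be read in three phases: $k\geq 0$ zeros decrementing one of the active counters $\mathcal{C}_a$, then $\ell\geq 0$ zeros decrementing the other active counter $\mathcal{C}_b$, and finally exactly one ``neutral'' zero read without modifying either $\mathcal{C}_a$ or $\mathcal{C}_b$, immediately before the letter $x(i_0)\in\Si$. Since blind counters cannot take negative values, $k\leq|\mathcal{C}_a|$ and $\ell\leq|\mathcal{C}_b|$, so $k+\ell\leq i_0-1$. This gives $n_{i_0}=k+\ell+1\leq i_0$, and combined with $n_{i_0}\neq i_0$ (the very definition of $i_0$) it forces $n_{i_0}<i_0$, as claimed.

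The only point that really requires care is the second paragraph: one must re-examine the induction of Lemma~\ref{lem5} and verify that it is genuinely local, depending only on the segment-length pattern $n_i=i$ for $i<i_0$ rather than on $y$ belonging to $h(\Si^\om)$. The remainder is a direct consequence of blind-counter non-negativity and of the mandatory three-phase shape of the segment processing.
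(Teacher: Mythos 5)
Your proof is correct and follows essentially the same route as the paper: invoke the counter invariant from the induction in the proof of Lemma~\ref{lem5} (valid up to block $i_0-1$ by minimality of $i_0$), then use non-negativity of the blind counters during the mandatory decrement-decrement-neutral reading of $0^{n_{i_0}}$ to get $n_{i_0}-1\leq i_0-1$, hence $n_{i_0}<i_0$ since $n_{i_0}\neq i_0$. The only cosmetic difference is that the paper splits into the even and odd cases of $i_0$ to name the active counter pair explicitly, whereas you abstract them as $\mathcal{C}_a,\mathcal{C}_b$.
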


\begin{proof}
Assume first that  $y \in \lang(\mathcal{B})\setminus h(\Sio)$   is of  the following form\medskip
\nl
$ ~~~~~~~~~~ y = A0^{n_1}x(1)B0^{n_2}x(2)A \cdots   B0^{n_{2n}}x(2n)A0^{n_{2n+1}}x(2n+1)B \cdots $

\medskip \noindent and that the smallest integer $i$ such that $n_i \neq i$ is an even integer $i_0>1$.
Consider an infinite accepting run of $\mathcal{B}$ on $y$. It follows from the  proof of the above Lemma \ref{lem5} that after the reading of the initial segment
\begin{center}
$A0^{n_1}x(1)B0^{n_2}x(2)A \cdots
A0^{{i_0-1}}x(i_0-1)B$
\end{center}
\noindent the values of the counters  $\mathcal{C}_3$ and   $\mathcal{C}_4$ are equal to zero, and the values of the counters $\mathcal{C}_1$ and   $\mathcal{C}_2$ satisfy
$|\mathcal{C}_1| + |\mathcal{C}_2|=i_0-1$. Thus since the two counters must successively decrease during the next $n_ {i_0}-1$  letters $0$, it holds that
$n_ {i_0}-1 \leq i_0-1$ because otherwise either $\mathcal{C}_1$ or  $\mathcal{C}_2$ would block. Therefore $n_{i_0} < i_0$ since $n_{i_0} \neq  i_0$ by definition of $i_0$.
The reasoning is very similar in the case of an odd integer $i_0$,  the role  of the counters $\mathcal{C}_1$ and $\mathcal{C}_2$ on one side, and of the counters $\mathcal{C}_3$ and   $\mathcal{C}_4$  on the other side, being simply  interchanged.
\end{proof}

Let $\mathcal{L} \subseteq \Gao$ be the $\om$-language containing the $\om$-words over $\Gamma$   which belong to one of the following $\om$-languages.
\begin{itemize}
\itemsep=0.9pt
\item $\mathcal{L}$$_1$ is the set of $\om$-words over the alphabet $\Si\cup\{A, B, 0\}$
which have not any initial segment in  $A\cdot 0\cdot \Si \cdot  B$.

\item $\mathcal{L}$$_2$ is the set of $\om$-words over the alphabet $\Si\cup\{A, B, 0\}$
which contain a segment of the form  $B\cdot 0^n \cdot a \cdot A\cdot 0^m\cdot b$ or of the form
$A \cdot 0^n \cdot a \cdot B \cdot 0^m\cdot b$
for some letters $a, b \in \Si$ and some positive integers $m \leq  n$.
\end{itemize}

\begin{lemma}\label{lem7}
The $\om$-language $\mathcal{L}$ is accepted by a (non-deterministic)  real-time $1$-blind counter B\"uchi  automaton.
\end{lemma}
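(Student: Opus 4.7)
The plan is to show that both $\mathcal{L}_1$ and $\mathcal{L}_2$ lie in \textbf{r}-${\bf BCL}(1)_\om$ and then appeal to closure of this class under (finite) unions via nondeterministic initial choice. Concretely I would build a single real-time $1$\=/blind counter B\"uchi automaton $\Cc$ whose initial state branches nondeterministically to one of two sub-automata $\Cc_1$ and $\Cc_2$, each recognising one of the two languages, and whose accepting states are the union of the accepting states of the two branches.

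For $\mathcal{L}_1$: this is a regular $\om$\=/language (in fact it is even \emph{closed}, being defined by the avoidance of finitely many length\=/4 prefixes). I construct $\Cc_1$ as a finite B\"uchi automaton (which one can view as a $1$\=/blind counter automaton that never touches its counter): it reads up to four symbols, checks whether the prefix has the shape $A\cdot 0\cdot a\cdot B$ for some $a\in\Si$, and if not it enters an accepting sink which reads the rest of the word; otherwise it enters a nonaccepting sink.

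For $\mathcal{L}_2$: the automaton $\Cc_2$ starts in a nonaccepting loop reading every letter of $\Ga$ and, at some nondeterministically chosen position, commits to witnessing a bad segment. I describe the $B\cdot 0^n\cdot a\cdot A\cdot 0^m\cdot b$ case; the $A/B$\=/swapped case is symmetric. After reading $B$, $\Cc_2$ enters a ``climb'' state that reads $0$'s while incrementing the single blind counter; at some nondeterministic step it reads a letter $a\in\Si$ and moves to a state that must read $A$; then it enters a ``descend'' state that reads $0$'s while decrementing the counter. Because the counter is blind, a transition that would push the counter below $0$ simply does not fire, so the descent phase can be sustained for $m$ steps only if $m\leq n$, which is exactly the required inequality. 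To enforce $m\geq 1$, the state after $A$ is distinct from the state in which a final letter $b\in\Si$ can be read, so at least one $0$ must be consumed with a decrement. Once $b\in\Si$ is read the automaton moves to an accepting sink that reads every remaining symbol of $\Ga$.

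The main thing to be careful about is precisely the guarantee $m\leq n$: it uses essentially the blindness of the counter (a non\=/blind counter could also enforce it, but with blind counters we neither need nor have a zero\=/test, which matches the shape of our guess). The constraint $m\geq 1$ is enforced by the automaton's control structure, and the constraint $n\geq 1$ comes from the fact that the climb phase is entered only after $B$ is read and exited only upon reading a letter of $\Si$, with at least one $0$\=/increment in between. Disjoint union with $\Cc_1$ via a nondeterministic initial branch then yields a real-time $1$\=/blind counter B\"uchi automaton accepting $\mathcal{L}_1\cup\mathcal{L}_2=\mathcal{L}$.
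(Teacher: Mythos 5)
Your proposal is correct and follows essentially the same route as the paper: recognise $\mathcal{L}_1$ as a regular (indeed closed) $\om$-language with an inactive counter, recognise $\mathcal{L}_2$ by nondeterministically guessing the bad segment and using the blind counter to increment over $0^n$ and decrement over $0^m$ (so that blindness enforces $m\leq n$), and conclude by closure of the class under finite union. The paper's proof merely asserts these constructions are easy, whereas you spell them out; the details you give, including the handling of the constraints $n\geq 1$ and $m\geq 1$ via the control structure, are sound.
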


\begin{proof}
First, it is easy to see that
$\mathcal{L}$$_1$ is in fact a regular $\om$-language, and thus it is also accepted by  a real-time $1$-blind counter B\"uchi  automaton (even without active counter).  On the other hand it is also easy to construct a real time
$1$-blind counter B\"uchi  automaton accepting the $\om$-language $\mathcal{L}$$_2$. The class of $\om$-languages accepted by {\it non-deterministic} real time
$1$-blind counter B\"uchi  automata being closed under finite union in an effective way, one can  construct a real time $1$-blind counter B\"uchi  automaton accepting $\mathcal{L}$.
\end{proof}

\begin{lemma}\label{lem8}
 Let $\mathcal{A}$ be a real time
$1$-counter B\"uchi automaton accepting
$\om$-words over the alphabet $\Si$. Then one can construct a  real time $4$-blind counter B\"uchi  automaton $\mathcal{P}_{\mathcal{A}}$ such that ~~
$\lang(\mathcal{P}_{\mathcal{A}}) = h( \lang(\mathcal{A}) ) \cup \mathcal{L}.$
\end{lemma}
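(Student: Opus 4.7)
The plan is to assemble $\mathcal{P}_{\mathcal{A}}$ as the (non-deterministic) union of two automata already in our hands: the $4$-blind counter B\"uchi automaton $\mathcal{B}$ produced by Lemma~\ref{lem5}, and the $1$-blind counter B\"uchi automaton $\mathcal{C}$ accepting $\mathcal{L}$ from Lemma~\ref{lem7}. View $\mathcal{C}$ as a $4$-blind counter automaton that simply leaves three of its counters unused. Then, by adding a fresh initial state that non-deterministically branches to the initial state of $\mathcal{B}$ or of $\mathcal{C}$ (reading no input, or equivalently absorbed into the first transition in the real\=/time setting), we obtain a real time $4$-blind counter B\"uchi automaton $\mathcal{P}_{\mathcal{A}}$ with $\lang(\mathcal{P}_{\mathcal{A}}) = \lang(\mathcal{B}) \cup \mathcal{L}$. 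All of this is effective.

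It then remains to verify the set\=/theoretic equality $\lang(\mathcal{B}) \cup \mathcal{L} = h(\lang(\mathcal{A})) \cup \mathcal{L}$. The inclusion $h(\lang(\mathcal{A})) \cup \mathcal{L} \subseteq \lang(\mathcal{B}) \cup \mathcal{L}$ is immediate from Lemma~\ref{lem5}, which guarantees $h(x) \in \lang(\mathcal{B})$ whenever $x \in \lang(\mathcal{A})$. For the reverse inclusion, take any $y \in \lang(\mathcal{B})$. If $y = h(x)$ for some $x \in \Sigma^\omega$, then Lemma~\ref{lem5} gives $x \in \lang(\mathcal{A})$ and hence $y \in h(\lang(\mathcal{A}))$. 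Otherwise $y \in \lang(\mathcal{B}) \setminus h(\Sigma^\omega)$, and we must argue $y \in \mathcal{L}$.

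This is the only real content of the proof and uses Lemma~\ref{lem6}. Since $\mathcal{B}$ only accepts words of the required shape $y = A 0^{n_1} x(1) B 0^{n_2} x(2) A 0^{n_3} x(3) B \cdots$ with each $n_i \geq 1$, Lemma~\ref{lem6} applies and furnishes the smallest index $i_0$ with $n_{i_0} \neq i_0$, with either $i_0 = 1$ or $n_{i_0} < i_0$. In the first case, $n_1 \geq 2$, so no prefix of $y$ lies in $A \cdot 0 \cdot \Sigma \cdot B$, whence $y \in \mathcal{L}_1 \subseteq \mathcal{L}$. In the second case, all earlier blocks have the correct length $n_j = j$ for $j < i_0$; in particular $n_{i_0 - 1} = i_0 - 1$. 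The factor of $y$ located just before the $i_0$\=/th data letter is therefore of the form $L \cdot 0^{i_0 - 1} \cdot x(i_0 - 1) \cdot L' \cdot 0^{n_{i_0}} \cdot x(i_0)$, with $\{L, L'\} = \{A, B\}$ depending on the parity of $i_0$, and with $n_{i_0} \leq i_0 - 1$ as required. Hence $y \in \mathcal{L}_2 \subseteq \mathcal{L}$.

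The main obstacle, such as it is, lies entirely in matching the combinatorial statement of Lemma~\ref{lem6} to the definition of $\mathcal{L}_1 \cup \mathcal{L}_2$ and checking the parity bookkeeping in the second case; everything else is routine closure of non-deterministic blind counter B\"uchi automata under disjoint union.
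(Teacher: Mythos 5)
Your proposal is correct and follows essentially the same route as the paper: take the union of the automaton $\mathcal{B}$ from Lemma~\ref{lem5} with an automaton for $\mathcal{L}$ from Lemma~\ref{lem7}, and use Lemma~\ref{lem6} to show $\lang(\mathcal{B})\setminus h(\Sio)\subseteq\mathcal{L}$, whence $\lang(\mathcal{B})\cup\mathcal{L}=h(\lang(\mathcal{A}))\cup\mathcal{L}$. The only difference is that you spell out the case analysis matching Lemma~\ref{lem6} to the definitions of $\mathcal{L}_1$ and $\mathcal{L}_2$, which the paper asserts without detail; your bookkeeping there is correct.
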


\begin{proof}
 Let $\mathcal{A}$ be a real time
$1$-counter B\"uchi automaton accepting
$\om$-words over $\Si$. We have seen in the proof of Lemma \ref{lem5} that one can construct a real time $4$-blind counter B\"uchi  automaton
$\mathcal{B}$
reading words over the alphabet $\Ga=\Si \cup\{A, B, 0\}$, such that
$ \lang(\mathcal{A})$ = $h^{-1} (\lang(\mathcal{B}))$, i.e.
$\fa x \in \Si^{\om} ~~~~ h(x) \in \lang(\mathcal{B}) $$\longleftrightarrow
x\in  \lang(\mathcal{A})$. Moreover by Lemma \ref{lem6} it holds that $\lang(\mathcal{B})\setminus h(\Sio) \subseteq \mathcal{L}$ and thus
~~$h( \lang(\mathcal{A}) ) \cup \mathcal{L} = \lang(\mathcal{B})  \cup \mathcal{L}$.
But by Lemma \ref{lem7} the $\om$-language $\mathcal{L}$ is accepted by a (non-deterministic)  real-time $1$-blind counter B\"uchi  automaton, hence also by a
 real-time $4$-blind counter B\"uchi  automaton. The class of $\om$-languages  accepted by  (non-deterministic)  real-time $4$-blind counter B\"uchi  automata is closed under
 finite union in an effective way, and thus  one can construct a  real time $4$-blind counter B\"uchi  automaton $\mathcal{P}_{\mathcal{A}}$ such that ~ ~
$\lang(\mathcal{P}_{\mathcal{A}}) = h( \lang(\mathcal{A}) ) \cup \mathcal{L}. $
\end{proof}

We are now going to  prove that
if $\lang(\mathcal{A})$$\subseteq \Sio$ is accepted by a
real time $1$-counter
automaton $\mathcal{A}$ with a B\"uchi acceptance condition then
$\lang(\mathcal{P}_{\mathcal{A}}) = h( \lang(\mathcal{A}) )$$ \cup \mathcal{L}$
 will have the same Wadge degree as the $\om$-language
$\lang(\mathcal{A})$, except for some very simple cases.

 We first notice that $h(\Si^{\om})$ is a closed subset of $\Gao$. Indeed  it is the image of the compact set $\Sio$ by the continuous function $h$, and thus it is a compact hence also closed subset of $\Gao = (\Si\cup\{A, B, 0\})^\om$.  Thus its complement
$h(\Si^{\om})^-=(\Si\cup\{A, B, 0\})^\om - h(\Si^{\om})$ is an open subset of $\Gao$. Moreover  the set $\mathcal{L}$ is an open subset of  $\Gao$, as it can be easily seen from its definition and one can easily define, from the definition of the $\om$-language  $\mathcal{L}$, a finitary  language $V \subseteq \Gas$   such that  $\mathcal{L}=V\cdot \Gao$.
We shall also denote $\mathcal{L}'=h(\Si^{\om})^-\setminus \mathcal{L}$ so that $\Gao$ is the disjoint union $\Gao = h(\Si^{\om}) \cup \mathcal{L}  \cup \mathcal{L}'$. Notice that
$\mathcal{L}'$ is the difference of the  two open sets $h(\Si^{\om})^-$ and $\mathcal{L}$.

 We now wish to return to the proof of the above Theorem \ref{thewad} stating that
the Wadge hierarchy of the class {\bf r}-${\bf BCL}(4)_\om$  is equal to the Wadge hierarchy of the class   {\bf r}-${\bf CL}(1)_\om$.

    To prove this result we firstly consider non self dual Borel sets. We recall the definition of Wadge degrees
introduced by Duparc in \cite{Duparc01} and which is a slight modification of the previous one.

\begin{definition}
\begin{enumerate}
\item[(a)] $d_w(\emptyset)=d_w(\emptyset^-)=1$
\item[(b)]  $d_w(L)=\sup \{d_w(L')+1 ~\mid ~~L' {\rm ~non~ self ~dual~ and~}
L'<_W L \} $
\nl (for either $L$ self dual or not, $L>_W \emptyset).$
\end{enumerate}
\end{definition}

\noindent  Wadge and Duparc used  the operation of sum of
sets of infinite words which has as
counterpart the ordinal
addition  over Wadge degrees.

\begin{definition}[Wadge, see \cite{Wadge83,Duparc01}]
Assume that $X\subseteq Y$ are two finite alphabets,
  $Y-X$ containing at least two elements, and that
$\{X_+, X_-\}$ is a partition of $Y-X$ in two non empty sets.
 Let $L \subseteq X^{\om}$ and $L' \subseteq Y^{\om}$, then
 $L' + L \eqdef L\cup \{ u\cdot a\cdot \beta  ~\mid  ~ u\in X^\star , ~(a\in X_+
~and ~\beta \in L' )~
 or ~(a\in X_- ~and ~\beta \in L'^- )\}$
\end{definition}

\noindent This operation is closely related to the {\it ordinal sum}
 as it is stated in the following:

\begin{theorem}[Wadge, see \cite{Wadge83,Duparc01}]\label{thesum}
Let $X\subseteq Y$, $Y-X$ containing at least two elements,
   $L \subseteq X^{\om}$ and $L' \subseteq Y^{\om}$ be
non self dual  Borel sets.
Then $(L+L')$ is a non self dual Borel set and
$d_w( L'+L )= d_w( L' ) + d_w( L )$.
\end{theorem}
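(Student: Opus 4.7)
The plan is to verify the three conclusions --- Borelness of $L'+L$, its non self duality, and the ordinal equation $d_w(L'+L)=d_w(L')+d_w(L)$ --- by combining the Wadge game characterisation of $\leq_W$ with Martin's Borel determinacy theorem recalled above. Borelness of $L'+L$ is immediate from its defining formula: it is obtained as the union of $L$ with a countable family of Borel rectangles of the form $\{u\cdot a\}\cdot L'$ and $\{u\cdot a\}\cdot L'^{-}$ indexed by $u \in X^\star$ and $a \in X_+\cup X_-$, so it stays Borel, in fact in a Borel class bounded by the maximum of the classes of $L$ and $L'$.

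For the ordinal equation, I would proceed by transfinite induction on $d_w(L')$, with $L$ arbitrary. The lower bound $d_w(L'+L) \geq d_w(L')+d_w(L)$ is witnessed by explicit continuous reductions. First, $L \leq_W L'+L$ via the inclusion $X^\omega \hookrightarrow Y^\omega$, since $X^\omega \cap (L'+L) = L$ by definition. Second, for every non self dual Borel $M$ with $M <_W L'$, the set $M+L$ built with the same data $(X,Y,X_+,X_-)$ is a strict Wadge predecessor of $L'+L$, and by the inductive hypothesis has degree $d_w(M)+d_w(L)$. Taking the supremum over such $M$ and invoking the definition of $d_w$ as a strict supremum yields $d_w(L'+L) \geq d_w(L')+d_w(L)$.

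For the upper bound $d_w(L'+L) \leq d_w(L')+d_w(L)$ and for non self duality, I would analyse Player~II's winning strategies in the relevant Wadge games, which exist by Borel determinacy. Given any non self dual $N$ with $N <_W L'+L$, take a Player~II winning strategy $\sigma$ in $W(N,L'+L)$ and split its play at the first round at which $\sigma$ writes a letter in $Y\setminus X$: before that round $\sigma$ reduces an initial fragment of $N$ to $L$, and after that round it reduces the tail to $L'$ or $L'^{-}$ according to whether the chosen letter lies in $X_+$ or $X_-$. This two\=/phase decomposition caps $d_w(N)$ by the ordinal sum $d_w(L')+d_w(L)$. The same analysis applied to a hypothetical Player~II winning strategy in $W(L'+L,(L'+L)^{-})$ forces, at the first letter outside $X$, a continuous self-reduction of $L$ or of $L'$, contradicting non self duality of the corresponding factor.

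The main obstacle is the game\=/theoretic decomposition in the upper bound: one must handle Player~II's skip moves carefully and certify that the ``$L$\=/phase'' and the ``$L'$\=/phase'' of the strategy are bona fide winning strategies in strictly simpler Wadge games, so that the induction on $d_w(L')$ (nested inside an induction on $d_w(L)$ at limits) closes precisely at the ordinal $d_w(L')+d_w(L)$. This is the technical heart of the classical Wadge--Duparc sum theorem, and I would follow that established presentation.
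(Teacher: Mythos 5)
The paper does not prove this statement: Theorem~\ref{thesum} is quoted as a classical result of Wadge, with pointers to \cite{Wadge83,Duparc01}, and no argument is given in the text. So there is no in-paper proof to compare yours against; what can be assessed is whether your reconstruction of the classical argument is sound.

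Your outline follows the standard route (Borelness read off the defining formula, lower bound by explicit reductions, upper bound and non self duality by analysing Player~2's strategies under Borel determinacy), but the lower bound as written has a genuine gap. You generate strict predecessors of $L'+L$ only of the form $M+L$ with $M<_W L'$ non self dual, of degree $d_w(M)+d_w(L)$, and take a supremum. This family does not exhaust the ordinals below $d_w(L')+d_w(L)$. Concretely, take $d_w(L')=\om$ and $d_w(L)=1$: every such $M$ has finite degree $n$, so $d_w(M+L)=n+1$ and $\sup_M\big(d_w(M+L)+1\big)=\om$, whereas the target value is $d_w(L'+L)=\om+1$. To realise the ordinals $d_w(L')+\eta$ with $\eta<d_w(L)$ one also needs predecessors such as $L'+M$ for non self dual $M<_W L$ (and $L'$ itself for $\eta=0$), which forces the induction to run on $d_w(L)$ --- the summand that sits ``on top'' in the ordinal sum --- rather than on $d_w(L')$ as you set it up. Separately, the upper bound and the non self duality claim are where the real technical content lies: ``the first round at which $\sigma$ writes a letter in $Y\setminus X$'' is a position-dependent stopping time along the tree of plays, not a single round, and converting the two phases of $\sigma$ into genuine winning strategies in games of the correct degrees (and ruling out a reduction of $L'+L$ to its complement) is exactly the part you defer to the literature. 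As a blind attempt this is an honest sketch of the right strategy, but not a complete proof.
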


\noindent A player in charge of a set $L'+L$ in a Wadge game is like a player in charge of the set $L$ but who
can, at any step of the play,    erase  his previous play and choose to be this time in charge of  $L'$ or of $L'^-$.
Notice that he can do this only one time during a play.

The following lemma was proved in \cite{finkel2006borel}. Notice that below the empty set is considered as an $\om$-language over an alphabet
$\Delta$ such that $\Delta - \Si$ contains at least two elements.

\begin{lemma}\label{sum-wad}
Let $L \subseteq \Sio$ be a non self dual  Borel set such that $d_w( L )\geq \om$. Then it holds that $L \equiv_W \emptyset + L$.
\end{lemma}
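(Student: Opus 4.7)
The plan is to avoid constructing any explicit Wadge reduction $\emptyset + L \leq_W L$ and instead extract the equivalence from the hierarchy machinery recorded above. First, the reduction $L \leq_W \emptyset + L$ is immediate: the inclusion $\Sigma \hookrightarrow \Delta$ induces a continuous embedding $\Sio \hookrightarrow \Delta^\om$, and an $\om$\=/word $x\in\Sio$ never uses the two switching letters of $\Delta\setminus\Sigma$, so by the very definition of $\emptyset + L$ one has $x \in L \lra x \in \emptyset + L$. Hence the identity map is a continuous reduction of $L$ to $\emptyset+L$.

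Next, I would apply Theorem~\ref{thesum} to the pair $(\emptyset, L)$. Both $\emptyset$ and $L$ are non self dual Borel sets (the empty set is easily seen to be non self dual, as no continuous $f$ can satisfy $f^{-1}(\Delta^\om)=\emptyset$), so the theorem yields that $\emptyset + L$ is non self dual Borel with Wadge degree
\[
d_w(\emptyset + L) \;=\; d_w(\emptyset) + d_w(L) \;=\; 1 + d_w(L).
\]
The hypothesis $d_w(L)\geq \om$ then invokes the key ordinal identity $1+\alpha=\alpha$ for every $\alpha\geq\om$, giving $d_w(\emptyset + L)=d_w(L)$.

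The endgame uses Theorem~\ref{wh}: two non self dual Borel sets of the same Wadge degree are either Wadge equivalent to each other or to each other's complement. Thus either $\emptyset + L \equiv_W L$, which is the conclusion we want, or $\emptyset + L \equiv_W L^-$. Combined with the already established $L \leq_W \emptyset+L$, the second alternative would yield $L \leq_W L^-$. But for non self dual $L$ this is impossible: composing the reducing function with the complement would give $L^-\leq_W L$ as well, hence $L\equiv_W L^-$, contradicting non self duality. So the second alternative is excluded and $\emptyset + L \equiv_W L$.

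The main subtlety of this route is the ordinal arithmetic step $1+d_w(L) = d_w(L)$; this is exactly where the hypothesis $d_w(L)\geq \om$ is used, and it explains why the lemma must fail for sets of finite Wadge degree (there one genuinely gets $d_w(\emptyset+L)=d_w(L)+1>d_w(L)$). Everything else is a bookkeeping application of the structural results on the Wadge hierarchy already stated, so the only real obstacle is making sure the non self duality of $\emptyset$ (needed to apply Theorem~\ref{thesum}) and of $\emptyset+L$ are both correctly justified before invoking the degree calculation.
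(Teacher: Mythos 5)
Your proof is correct, and it is essentially the argument behind the cited result: the paper itself does not reprove this lemma (it is quoted from \cite{finkel2006borel}), and the intended justification is exactly your computation $d_w(\emptyset + L)=d_w(\emptyset)+d_w(L)=1+d_w(L)=d_w(L)$ via Theorem~\ref{thesum}, followed by the non-self-dual dichotomy and the exclusion of $\emptyset+L\equiv_W L^-$ using $L\leq_W\emptyset+L$. The only cosmetic point worth noting is that the dichotomy you invoke is stated in Theorem~\ref{wh} for the degree $d_W^0$ rather than for Duparc's modified degree $d_w$; the same dichotomy for $d_w$ restricted to non self dual Borel sets is standard and follows from the semi-linear-ordering of $\leq_W$, so the step is sound but deserves that one-line justification.
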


We can now prove the following lemma.

\begin{lemma}\label{nad}
Let  $L \subseteq \Sio$ be a non self dual  Borel set accepted by a real time
$1$-counter B\"uchi automaton  $\mathcal{A}$.
Then there is an $\om$-language $L'$ accepted by   a real time $4$-blind counter B\"uchi  automaton such that $L \equiv_W L'$.
\end{lemma}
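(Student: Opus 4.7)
I take $L' \eqdef \lang(\mathcal{P}_\mathcal{A}) = h(L) \cup \mathcal{L}$ supplied by Lemma \ref{lem8}. The direction $L \leq_W L'$ is given by the continuous coding $h$ itself: $h$ is injective, and $h(\Sio) \cap \mathcal{L} = \emptyset$ since every $h(x)$ starts with $A \cdot 0 \cdot x(1) \cdot B$ (so it is outside $\mathcal{L}_1$) and has strictly increasing block lengths $n_i = i$ (so no segment satisfies the $m \leq n$ condition defining $\mathcal{L}_2$). Hence $h^{-1}(L') = h^{-1}(h(L)) \cup h^{-1}(\mathcal{L}) = L \cup \emptyset = L$, so $h$ is a continuous reduction of $L$ to $L'$.

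For $L' \leq_W L$ I split on $d_w(L)$. If $d_w(L) < \om$, Wagner's description of the Wadge hierarchy of regular $\om$-languages supplies a regular $\om$-language $R$ with $d_w(R) = d_w(L)$, and Theorem \ref{wh} then gives $L \equiv_W R$ or $L \equiv_W R^-$; either choice lies in {\bf r}-${\bf BCL}(4)_\om$ because a B\"uchi automaton is a real time $4$-blind counter B\"uchi automaton whose counters are never touched. If $d_w(L) \geq \om$, Lemma \ref{sum-wad} allows me to replace $L$ by $\emptyset + L$ and aim instead for a winning strategy for Player 2 in the Wadge game $W(L', \emptyset + L)$. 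Her strategy reads $y \in \Gao$ letter by letter, maintaining (a) a prefix-test for the open generator $V$ of $\mathcal{L} = V\cdot \Gao$, and (b) an inference of the candidate $x$ obtained by collecting the $\Si$-letters of $y$ in the positions dictated by the block structure of $h$. As long as the current prefix of $y$ is consistent with $h(\Sio)$, she outputs the inferred $x$-letters. If a $V$-prefix is ever detected, she plays a letter from $X_-$, which alone places her output in $\emptyset + L$ and matches $y \in \mathcal{L} \subseteq L'$.

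The main obstacle is the case $y \in \mathcal{L}' \eqdef h(\Sio)^c \setminus \mathcal{L}$, where $y$ departs from $h(\Sio)$ at some finite stage but is never caught by $\mathcal{L}$, so a commit-to-lose cannot be pinned to a single finite event. I plan to overcome this by iterating the sum lemma ($\emptyset + \emptyset + L \equiv_W \emptyset + L \equiv_W L$), which affords Player 2 a nested pair of commit options. Upon first detecting a deviation from $h(\Sio)$, she enters a waiting mode and keeps scanning for a possible $V$-prefix while emitting fixed auxiliary letters of the outer alphabet; if such a $V$-prefix eventually appears, she plays a commit-to-win letter at the outer level, and otherwise her play is forced into a commit-to-lose letter at the inner level, landing outside $\emptyset + L$ as required. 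Turning this informal plan into a genuinely continuous strategy whose commits fire on exactly the right events is the delicate part; the iterated sum machinery, applicable because $\emptyset + L$ inherits Wadge degree $\geq \om$, is what makes the packaging go through.
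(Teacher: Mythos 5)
Your proposal is correct and is essentially the paper's own proof: the same $L'=h(L)\cup\mathcal{L}$, the same continuous reduction $h$ (with the same verification that $h(\Sio)\cap\mathcal{L}=\emptyset$), the same dispatch of finite Wadge degrees to regular $\om$-languages, and the same use of the double sum $\emptyset+(\emptyset+L)\equiv_W L$ to give Player~2 two nested commit options in the game $W\big(L',\emptyset+(\emptyset+L)\big)$. The only point to repair in your ``waiting mode'' is the timing and level of the commits: Player~2 must fire the \emph{inner} commit-to-lose letter immediately at the finite stage where Player~1 first leaves $h(\Sio)$ without entering $V\cdot\Gao$ (it cannot be played ``otherwise'', i.e.\ conditioned on the infinitary event that no $V$-prefix ever appears), and the letters she emits while continuing to scan must stay inside the inner alphabet $\Delta$ so that the \emph{outer} switch remains unspent for a commit-to-win if Player~1 later enters $\mathcal{L}$; this inner-then-outer order is forced by the sum construction, since the inner switch letter must be the first non-$\Si$ letter of her word whereas the outer switch letter may follow any finite $\Delta$-prefix.
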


\begin{proof}
Recall first  that there are regular $\om$-languages of every finite Wadge degree, \cite{Staiger97,Selivanov98}. These regular $\om$-languages
are Boolean combinations of open sets, and they obviously  belong to the class  {\bf r}-${\bf BCL}(4)_\om$  since every regular $\om$-language belongs to this class.

So we have only to consider the case of non self dual Borel sets of Wadge
degrees greater than or equal to $\om$.

\medskip
 Let then $L=\lang(\mathcal{A}) \subseteq \Sio$ be a non self dual  Borel set, accepted by a real time
$1$-counter B\"uchi automaton  $\mathcal{A}$, such that $d_w( L )\geq \om$.
By Lemma \ref{lem8},
$\lang(\mathcal{P}_{\mathcal{A}}) = h( \lang(\mathcal{A}) )$$ \cup \mathcal{L}$ is accepted by a   a real time $4$-blind counter B\"uchi  automaton $\mathcal{P}_{\mathcal{A}}$,
where the mapping $h: \Sio \ra (\Si \cup\{A, B, 0\})^\om$ is defined, for $x\in \Sio$,  by:
\begin{center}
$h(x)= A0x(1)B0^{2}x(2)A0^{3}x(3)B \cdots    B0^{2n}x(2n)A0^{2n+1}x(2n+1)B \cdots $
\end{center}

We set $L'=\lang(\mathcal{P}_{\mathcal{A}})$ and we now prove that  $L' \equiv_W L$.

Firstly, it is easy to see that the function $h$ is a continuous  reduction of $L$ to $L'$ and thus $L \leq_W L'$.

To prove that $L' \leq_W L $, it suffices to prove that $L' \leq_W \emptyset +(\emptyset + L)$ because
Lemma \ref{sum-wad} states that $ \emptyset + L \equiv_W  L$, and thus also $\emptyset + (\emptyset + L) \equiv_W  L$. Consider the Wadge game $W( L',  \emptyset + (\emptyset + L) )$.
Player 2 has a winning strategy in this game which we now describe.

As long as Player 1 remains in the closed set $h(\Si^{\om})$ (this means that the word written by Player 1 is a prefix of some infinite word in $h(\Si^{\om})$)
Player 2 essentially  copies  the play of player 1 except that Player 2 skips when player 1 writes a letter not in $\Si$.
He continues forever with this strategy if the word written by player 1 is always a prefix of some $\om$-word of $h(\Sio)$. Then after $\om$ steps
Player 1 has written an $\om$-word $h(x)$ for some $x \in \Sio$, and Player 2 has written $x$. So in that case
$h(x) \in L'$ iff  $x \in \lang(\mathcal{A})$ iff  $x \in \emptyset + (\emptyset + L)$.

But if at some step of the play, Player 1 ``goes out of" the closed set  $h(\Sio)$ because the word he has now
written is not a prefix of any $\om$-word of  $h(\Sio)$,  then Player 1 ``enters'' in the open set $h(\Sio)^- = \mathcal{L}\cup \mathcal{L}'$ and will stay in this set. Two cases may now appear.

{\bf First case.} When  Player 1 ``enters'' in the open set $h(\Sio)^- = \mathcal{L}\cup \mathcal{L}'$, he actually enters in the open set $ \mathcal{L}=V\cdot \Gao$ (this means that Player 1 has written an initial segment in $V$). Then the final word written by Player 1 will surely be inside $L'$. Player 2 can now  write a letter of $\Delta -\Si$ in such a way that he is now like a player in charge of the whole set  and he can
now writes an $\om$-word $u$ so that his final $\om$-word will be inside $\emptyset + L $, and also inside $\emptyset + (\emptyset + L)$. Thus Player 2 wins this play too.

{\bf Second case.}   When  Player 1 ``enters'' in the open set $h(\Sio)^- = \mathcal{L}\cup \mathcal{L}'$, he does not enter in the open set $ \mathcal{L}=V\cdot \Gao$.
Then  Player 2,  being first like a player in charge of the set $(\emptyset + L)$,  can write a  letter of $\Delta -\Si$ in such a way that he is now like a player in charge of the empty set  and he can
now continue, writing an $\om$-word $u$. If Player 1 never enters  in the open set $\mathcal{L}=V\cdot \Gao$ then the final word written by Player 1 will be in $\mathcal{L}'$ and thus surely outside $L'$, and the final word written by Player 2 will be outside the empty set. So in that case Player 2 wins this play too.
If at some step of the play Player  1 enters  in the open set $\mathcal{L}=V\cdot \Gao$ then his final $\om$-word will be surely in $L'$. In that case Player 1, in charge of the set
$\emptyset + (\emptyset + L)$, can again write an extra letter and choose to be  in charge of the whole set and he can
now write an $\om$-word $v$ so that his final $\om$-word will be inside $\emptyset + (\emptyset + L)$. Thus Player 2 wins this play too.

 Finally we have proved that $L \leq_W  L'   \leq_W L $ thus it holds that $L' \equiv_W L$.  This ends the proof.
\end{proof}

{\bf End of Proof of Theorem \ref{thewad}. }

Let  $L \subseteq \Sio$ be a   Borel set accepted by a real time
$1$-counter B\"uchi automaton  $\mathcal{A}$.
If the Wadge degree of $L$ is finite, it is well known that it is Wadge
equivalent to a regular $\om$-language, hence also to an  $\om$-language in  the class {\bf r}-${\bf BCL}(4)_\om$.
If  $L$ is non self dual and its Wadge degree is greater than or equal to $\om$, then we know from Lemma \ref{nad} that
there is an $\om$-language  $L'$ accepted by a   a real time $4$-blind counter B\"uchi  automaton such that $L \equiv_W L'$.

 It remains to consider the case of self dual Borel sets.
The alphabet $\Si$ being finite, a self dual Borel set $L$ is always Wadge equivalent to a Borel set
in the form $\Si_1\cdot L_1 \cup \Si_2\cdot L_2$, where $(\Si_1, \Si_2)$ form a partition of $\Si$,
and $L_1, L_2\subseteq \Sio$ are non self dual Borel sets such that
$L_1 \equiv_W L_2^-$.
Moreover $L_1$ and $L_2$ can be taken in the form $L_{(u_1)}=u_1\cdot \Sio \cap L$ and
 $L_{(u_2)}=u_2\cdot \Sio \cap L$     for some $u_1, u_2 \in \Sis$, see
\cite{Duparc03}.
So if  $L \subseteq \Sio$ is a self dual Borel set accepted by a real time
$1$-counter B\"uchi automaton
then $L \equiv_W \Si_1\cdot L_1 \cup \Si_2\cdot L_2$, where $(\Si_1, \Si_2)$ form a partition of $\Si$, and
 $L_1, L_2\subseteq \Sio$ are non self dual Borel sets accepted by real time
$1$-counter B\"uchi automata.
We have already proved that there is  an     $\om$-language      $L'_1$  in  the class {\bf r}-${\bf BCL}(4)_\om$ such that $L'_1 \equiv_W L_1$ and
 an $\om$-language $L'_2$ in  the class {\bf r}-${\bf BCL}(4)_\om$  such that $L_2'^-\equiv_W L_2$.  Thus
$L  \equiv_W  \Si_1\cdot L_1 \cup \Si_2\cdot L_2  \equiv_W \Si_1\cdot L_1' \cup \Si_2\cdot L'_2$ and
$\Si_1\cdot L'_1 \cup \Si_2\cdot L'_2$ is an
$\om$-language in  the class {\bf r}-${\bf BCL}(4)_\om$.

The reverse direction is immediate: if $L \subseteq \Sio$ is a   Borel set accepted by a  $4$-blind counter B\"uchi  automaton $\mathcal{A}$, then it is also accepted by a B\"uchi Turing machine and thus by \cite[Theorem 25]{finkel2006borel} there exists a real time
$1$-counter B\"uchi automaton  $\mathcal{B}$ such that $\lang(\mathcal{A})  \equiv_W \lang(\mathcal{B})$.

This concludes the proof of Theorem \ref{thewad}.

Recall that, for each non-null countable ordinal $\alpha$,  the ${\bf \Si}^0_\alpha$-complete sets
(respectively, the  ${\bf \Pi}^0_\alpha$-complete  sets) form a single Wadge degree.  Thus  we can infer the following result from the above Theorem \ref{thewad} and from    the results of
 \cite{finkel2006borel,KMS89}.

\begin{corollary}
For each non-null recursive ordinal $\alpha < \om_1^{{\rm CK}} $ there
exist some ${\bf \Si}^0_\alpha$-complete and some
 ${\bf \Pi}^0_\alpha$-complete   $\om$-languages  in the class  {\bf r}-${\bf BCL}(4)_\om$.
And the supremum
of the set of Borel ranks of $\om$-languages  in the class  {\bf r}-${\bf BCL}(4)_\om$ is the ordinal $\gamma_2^1$, which  is precisely defined in \cite{KMS89}.
\end{corollary}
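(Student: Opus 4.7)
The plan is to chain together the three ingredients that have been set up. First, I would recall the cornerstone result from \cite{finkel2006borel}: the Wadge hierarchy of the class \textbf{r}-${\bf CL}(1)_\om$ of $\om$\=/languages accepted by real time $1$\=/counter B\"uchi automata coincides with the Wadge hierarchy of the effective analytic sets (that is, the lightface $\Sigma^1_1$ sets, which are exactly the $\om$\=/languages accepted by B\"uchi Turing machines). Combining this with Theorem~\ref{thewad}, the Wadge hierarchy of \textbf{r}-${\bf BCL}(4)_\om$ also coincides with the Wadge hierarchy of the effective analytic sets.

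Next I would use the fact, recalled in Section~\ref{sec:topology}, that for every non-null recursive ordinal $\alpha<\om_1^{\mathrm{CK}}$ there exist $\bsigma{\alpha}$-complete and $\bpi{\alpha}$-complete sets inside $\Delta_1^1\subseteq\Si_1^1$. Since the $\bsigma{\alpha}$-complete sets form a single Wadge degree (and likewise for $\bpi{\alpha}$), any $\om$\=/language that is Wadge equivalent to a $\bsigma{\alpha}$-complete (resp.\ $\bpi{\alpha}$-complete) set is itself $\bsigma{\alpha}$-complete (resp.\ $\bpi{\alpha}$-complete). Thus the equality of the Wadge hierarchies provided by Theorem~\ref{thewad} transports these complete sets into \textbf{r}-${\bf BCL}(4)_\om$, yielding the first assertion of the corollary.

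For the second assertion, I would invoke the theorem of Kechris, Marker, and Sami~\cite{KMS89}, stating that the supremum of Borel ranks of (lightface) $\Pi^1_1$-sets --- and hence, by taking complements, of $\Si^1_1$-sets --- is exactly the ordinal $\gamma_2^1$. Through the Wadge equivalences established above, every Borel rank realised inside $\Si_1^1$ is also realised inside \textbf{r}-${\bf BCL}(4)_\om$, so the supremum of Borel ranks of $\om$\=/languages in \textbf{r}-${\bf BCL}(4)_\om$ is at least $\gamma_2^1$. Conversely every $\om$\=/language in \textbf{r}-${\bf BCL}(4)_\om$ is in $\Si_1^1$ (a $4$-blind counter B\"uchi automaton can be simulated by a B\"uchi Turing machine), so its Borel rank, whenever Borel, is bounded by $\gamma_2^1$. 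This gives equality.

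No genuine obstacle is expected here: the corollary is a packaging result, and the real mathematical content has already been delivered by Theorem~\ref{thewad} together with the previously published Wadge analysis of \textbf{r}-${\bf CL}(1)_\om$ and the bound from~\cite{KMS89}. The only point to keep in mind while writing the proof is that Wadge equivalence preserves completeness for the non\=/self\=/dual Borel classes $\bsigma{\alpha},\bpi{\alpha}$, which justifies the transfer of complete sets from effective analytic sets to \textbf{r}-${\bf BCL}(4)_\om$.
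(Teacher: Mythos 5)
Your proposal is correct and follows essentially the same route as the paper: the paper derives the corollary directly from Theorem~\ref{thewad} combined with the results of \cite{finkel2006borel} (equality of the Wadge hierarchies of {\bf r}-${\bf CL}(1)_\om$ and of effective analytic sets) and of \cite{KMS89}, using exactly the observation you highlight that the $\bsigma{\alpha}$-complete (resp.\ $\bpi{\alpha}$-complete) sets form a single Wadge degree. You merely spell out the chaining and the trivial upper bound ({\bf r}-${\bf BCL}(4)_\om\subseteq\Si^1_1$) more explicitly than the paper does.
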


 We have only considered Borel sets in the above Theorem \ref{thewad}. However we know that there also exist  some non-Borel $\om$-languages accepted by
real time
$1$-counter B\"uchi automata, and even some ${\bf \Si}_1^1$-complete ones, \cite{Fin03a}.

   By  Lemma 4.7 of \cite{Fin13-JSL}
the conclusion of the above Lemma \ref{sum-wad} is also true if
$L$ is assumed to be an analytic but non-Borel set.

\begin{lemma}[\cite{Fin13-JSL}]\label{sum-wad2}
Let $L \subseteq \Sio$ be an analytic but non-Borel set. Then  $L \equiv_W \emptyset + L$.
\end{lemma}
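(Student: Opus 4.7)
The plan is to verify the two inclusions $L \leq_W \emptyset + L$ and $\emptyset + L \leq_W L$ separately, by exhibiting winning strategies for Player 2 in the corresponding Wadge games; the argument should mirror the template of Lemma~\ref{sum-wad} but must avoid invoking Wadge determinacy, which is unavailable in ZFC at the analytic level.

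The forward direction is immediate: in the Wadge game $W(L, \emptyset + L)$ Player 2 copies every letter written by Player 1. The resulting play of Player 2 lies in $\Sio$, where the sets $L$ and $\emptyset + L$ coincide by the very definition of the sum.

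Before tackling the reverse direction, I would note that $L$ is automatically non-self-dual: if $L^- \leq_W L$ then $L^-$ would be analytic, and Suslin's separation theorem would force $L$ to be Borel, contradicting the hypothesis. In particular $L \neq \emptyset$ and $L \neq \Sio$. I would next argue that $L$ is Wadge-equivalent to an analytic non-Borel $\tilde L$ which is \emph{dense} in the sense that for every $u \in \Sis$, both $u\Sio \cap \tilde L$ and $u\Sio \cap \tilde L^-$ are non-empty. The point is that any cylinder on which $L$ trivialises contributes only open or closed complexity, which can be continuously folded away without changing the Wadge degree, because analytic non-Borel sets strictly dominate every Borel Wadge degree.

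With $\tilde L$ in hand, the strategy for Player 2 in $W(\emptyset + \tilde L, \tilde L)$ reads as follows: while Player 1 plays letters of $\Si$, Player 2 copies them verbatim; as soon as Player 1 plays a letter from the $X_-$-part (resp.\ the $X_+$-part) of the extended alphabet, Player 1's play is committed to $\emptyset + \tilde L$ (resp.\ to its complement), and Player 2 extends the committed prefix $u \in \Sis$ by an $\omega$-word in $u\Sio \cap \tilde L$ (resp.\ in $u\Sio \cap \tilde L^-$), whose existence is guaranteed by density. If Player 1 never leaves $\Si$, the two plays coincide and Player 2 wins by the definition of $\emptyset + \tilde L$ on $\Sio$. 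Transitivity then yields $\emptyset + L \leq_W \emptyset + \tilde L \leq_W \tilde L \equiv_W L$, and combined with the forward direction this gives $L \equiv_W \emptyset + L$. I expect the main obstacle to be carrying out the densification step continuously while preserving Wadge equivalence; the hypothesis that $L$ lies strictly above every Borel degree is what should make this possible, as trivialised cylinders only add complexity below the Wadge degree of $L$ itself.
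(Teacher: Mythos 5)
First, note that the paper itself gives no proof of this statement: Lemma~\ref{sum-wad2} is imported verbatim from Lemma~4.7 of \cite{Fin13-JSL}, so there is no in-paper argument to compare yours against. Judged on its own terms, your proposal gets several things right: the easy direction $L\leq_W \emptyset+L$ by copying, the observation that an analytic non-Borel set is non-self-dual by Suslin's separation theorem (so no appeal to analytic Wadge determinacy is needed there), the monotonicity $\emptyset+L\leq_W\emptyset+\tilde L$, and the game argument showing $\emptyset+\tilde L\leq_W\tilde L$ \emph{once} $\tilde L$ is dense in the sense that every cylinder $u\Sio$ meets both $\tilde L$ and $\tilde L^-$.

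The genuine gap is the densification step, and it is not a minor technicality: it carries essentially the whole weight of the lemma. You assert that trivialised cylinders ``can be continuously folded away without changing the Wadge degree,'' but none of the natural constructions achieve this. Padding, products with a trivial factor, or shuffling with a fixed word all inherit the trivial cylinders of $L$; restricting to the tree $T=\{u\in\Sis : \emptyset\neq u\Sio\cap L\neq u\Sio\}$ (or to the subtree of $u$ with $L\cap u\Sio$ non-Borel) changes the underlying space and does not obviously preserve $\equiv_W$. Concretely, producing a dense $\tilde L\equiv_W L$ amounts to producing a continuous self-reduction of $L$ all of whose finite traces stay inside $T$ --- for only then can Player~2 simultaneously track Player~1 (to win when Player~1 never leaves $\Si$) and keep the option of completing his word either inside or outside $L$ (to win when Player~1 commits via $X_+$ or $X_-$). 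But the existence of such a trace-restricted self-reduction is precisely the content of the hard inequality $\emptyset+L\leq_W L$; the fact that $L$ dominates all Borel degrees does not by itself yield it, since the cylinders on which $L$ trivialises need not be ``separated'' from the non-Borel part in any way a continuous reduction can exploit letter by letter. As written, your argument therefore reduces the lemma to an unproved claim of essentially the same difficulty, and you would need to either prove the densification claim in ZFC or find a different route (the cited source proves the statement directly for analytic non-Borel sets without passing through a dense representative).
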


\noindent Next  the proof of the above Lemma  \ref{nad} can be adapted to the case of an  analytic but non-Borel set, and we can state the following result.

\begin{theorem}\label{nad2}
Let  $L \subseteq \Sio$ be an analytic but non-Borel set accepted by a real time
$1$-counter B\"uchi  automaton   $\mathcal{A}$.
Then there is an $\om$-language $L'$ accepted by   a real time $4$-blind counter B\"uchi  automaton such that $L \equiv_W L'$.
\end{theorem}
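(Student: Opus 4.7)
The plan is to essentially replay the proof of Lemma~\ref{nad}, swapping the Borel-only input of Lemma~\ref{sum-wad} for the more general Lemma~\ref{sum-wad2} that holds for analytic non-Borel sets. So first, given the real time $1$-counter B\"uchi automaton $\mathcal{A}$ with $L = \lang(\mathcal{A})$, I would apply Lemma~\ref{lem8} to produce a real time $4$-blind counter B\"uchi automaton $\mathcal{P}_{\mathcal{A}}$ with
\[
L' \eqdef \lang(\mathcal{P}_{\mathcal{A}}) = h(L) \cup \mathcal{L},
\]
where $h$ is the coding map and $\mathcal{L}\subseteq\Gao$ is the open ``error'' $\om$-language defined before Lemma~\ref{lem7}. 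The goal is then to show $L \equiv_W L'$.

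The direction $L \leq_W L'$ is immediate, exactly as in Lemma~\ref{nad}: the map $h$ is continuous, and $h^{-1}(L') = L$ because every $h(x)$ with $x \in \Sio$ lies in the closed set $h(\Sio)$ which is disjoint from $\mathcal{L}$, so $h(x) \in L' \Leftrightarrow h(x) \in h(L) \Leftrightarrow x \in L$.

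For the reverse direction $L' \leq_W L$, I would again factor it through $\emptyset + (\emptyset + L)$. By Lemma~\ref{sum-wad2}, since $L$ is analytic but non-Borel, we have $\emptyset + L \equiv_W L$; applying Lemma~\ref{sum-wad2} once more to $\emptyset + L$ (which remains analytic non-Borel, being Wadge equivalent to the analytic non-Borel set $L$) gives $\emptyset + (\emptyset + L) \equiv_W L$. Hence it suffices to exhibit a winning strategy for Player~2 in the Wadge game $W\big(L', \emptyset + (\emptyset + L)\big)$. This strategy is verbatim the one in the proof of Lemma~\ref{nad}: Player~2 copies Player~1's letters from $\Sigma$ and skips on letters of $\{A,B,0\}$ as long as Player~1 stays in the closed set $h(\Sio)$; if Player~1 ever leaves $h(\Sio)$, Player~2 uses his first ``extra letter'' to switch into the $(\emptyset + L)$ component and then, depending on whether Player~1 actually enters $\mathcal{L} = V\cdot\Gao$ or remains in $\mathcal{L}' = h(\Sio)^- \setminus \mathcal{L}$, either uses his second extra letter to become responsible for the whole set (when Player~1's word is bound to end up in $L'$) or stays responsible for $\emptyset$ (when Player~1's word is bound to miss $L'$).

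The main point needing care is the legitimacy of invoking Lemma~\ref{sum-wad2} twice in a row, i.e.\ verifying that $\emptyset + L$ is still analytic non-Borel so that the second application is justified; this is routine, since the sum operation is given by a continuous pre-processing and so preserves both the $\ana$ class and the property of being non-Borel (as a Borel $\emptyset + L$ would Wadge-reduce to $L$ in a way forcing $L$ Borel). Apart from this verification, the proof is a direct transcription of the argument for Lemma~\ref{nad}, replacing every use of Lemma~\ref{sum-wad} by Lemma~\ref{sum-wad2}.
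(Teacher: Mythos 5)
Your proposal is correct and follows exactly the route the paper takes: the paper's proof of Theorem~\ref{nad2} is literally ``repeat the proof of Lemma~\ref{nad}, using Lemma~\ref{sum-wad2} in place of Lemma~\ref{sum-wad}.'' Your additional check that $\emptyset + L$ remains analytic non-Borel (so that the second application of Lemma~\ref{sum-wad2} is legitimate) is a sensible point of care that the paper leaves implicit, and your justification of it is sound.
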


\begin{proof}
It is very similar to the proof of the above Lemma  \ref{nad}, using Lemma \ref{sum-wad2} instead of the above Lemma \ref{sum-wad}.
\end{proof}

\begin{remark}\label{rem-nonborel}
Using  Lemma \ref{sum-wad2} instead of  the above Lemma \ref{sum-wad},  the proofs of \cite{finkel2006borel} can also be adapted  to the case of a non-Borel set to show that
     for every effective analytic but non-Borel set $L \subseteq \Sio$, where $\Si$ is a finite alphabet, there exists an $\om$-language $L'$ in
 {\bf r}-${\bf CL}(1)_\om$ such that $L' \equiv_W L$, and thus also, by Theorem \ref{nad2},  an $\om$-language $L''$ accepted by   a real time $4$-blind counter B\"uchi  automaton
such that $L \equiv_W L''$.
\end{remark}

 This   implies in particular the existence of a ${\bf \Si}_1^1$-complete, hence non Borel,  $\om$-language accepted by a  real-time $4$-blind-counter B\"uchi automaton.

\begin{corollary}
  There exists a ${\bf \Si}_1^1$-complete $\om$-language  accepted by a 4-blind-counter automaton.
 \end{corollary}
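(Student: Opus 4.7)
The plan is straightforward: combine a classical ${\bf \Si}_1^1$-complete example accepted by a real time $1$-counter B\"uchi automaton with the non-Borel strengthening of the simulation construction. First I would invoke the result recalled just before Lemma \ref{sum-wad2} (due to \cite{Fin03a}): there exists an $\om$-language $L \subseteq \Sio$ accepted by a real time $1$-counter B\"uchi automaton $\mathcal{A}$ which is ${\bf \Si}_1^1$-complete. In particular $L$ is analytic but non-Borel, so the hypotheses of Theorem \ref{nad2} (and of Remark \ref{rem-nonborel}) are satisfied.

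Next I would apply Theorem \ref{nad2} to $\mathcal{A}$, producing a real time $4$-blind counter B\"uchi automaton $\mathcal{B}$ such that $\lang(\mathcal{B}) \equiv_W \lang(\mathcal{A}) = L$. This is exactly the simulation construction already carried out in Lemma \ref{lem5} and Lemma \ref{nad}, re-read with the non-Borel variant Lemma \ref{sum-wad2} replacing the Borel Lemma \ref{sum-wad}; all the actual work is therefore already done.

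It then suffices to verify that Wadge equivalence preserves ${\bf \Si}_1^1$-completeness. On the one hand, the class ${\bf \Si}_1^1$ is closed under continuous preimages, and $\lang(\mathcal{B}) \leq_W L \in {\bf \Si}_1^1$, so $\lang(\mathcal{B}) \in {\bf \Si}_1^1$. On the other hand, for every ${\bf \Si}_1^1$ subset $E$ of some $Y^\om$, the ${\bf \Si}_1^1$-completeness of $L$ gives $E \leq_W L$, and composing with $L \leq_W \lang(\mathcal{B})$ yields $E \leq_W \lang(\mathcal{B})$. Hence $\lang(\mathcal{B})$ is ${\bf \Si}_1^1$-hard and ${\bf \Si}_1^1$, i.e.~${\bf \Si}_1^1$-complete, which proves the corollary.

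There is essentially no obstacle here: the two genuine ingredients --- the existence of a ${\bf \Si}_1^1$-complete real time $1$-counter $\om$-language from \cite{Fin03a}, and the Wadge-preserving simulation of such an automaton by a real time $4$-blind counter B\"uchi automaton encapsulated in Theorem \ref{nad2} --- have been established earlier in the text. The corollary is just their combination with the trivial observation that Wadge-equivalence preserves ${\bf \Si}_1^1$-completeness.
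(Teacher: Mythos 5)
Your proposal is correct and follows exactly the route the paper intends: it takes the ${\bf \Si}_1^1$-complete $\om$-language of a real time $1$-counter B\"uchi automaton from \cite{Fin03a}, feeds it into Theorem \ref{nad2} to obtain a Wadge-equivalent $\om$-language of a real time $4$-blind-counter B\"uchi automaton, and notes that Wadge equivalence (together with membership in ${\bf \Si}_1^1$, which you correctly justify via closure under continuous preimages) preserves ${\bf \Si}_1^1$-completeness. The paper leaves this combination implicit ("This implies in particular\ldots"), so your write-up is simply a more explicit version of the same argument.
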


Notice that  if we assume the axiom of ${\bf \Si}_1^1$-determinacy, then   any set which is analytic but not Borel is
${\bf \Si}_1^1$-complete, see \cite{Kechris94}, and thus  there is only one more Wadge degree  (beyond Borel sets) containing  ${\bf \Si}_1^1$-complete sets.
On the other hand,  if the  axiom of  (effective) ${\Si}_1^1$-determinacy does not hold, then there exist some effective analytic sets which are neither Borel nor
${\bf \Si}_1^1$-complete.
Recall that  {\bf ZFC}  is  the commonly accepted axiomatic
framework for Set Theory in which all usual mathematics can be developed.

\begin{corollary}\label{non-complete}
It is consistent with  {\bf ZFC}  that there exist some
$\om$-languages of Petri nets  in the class {\bf r}-${\bf BCL}(4)_\om$     which are neither Borel nor  ${\bf \Si}_1^1$-complete.
\end{corollary}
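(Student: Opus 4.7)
The plan is to combine the Wadge-preserving simulation machinery already established (in particular Remark~\ref{rem-nonborel}) with a set-theoretic model in which the full power of analytic determinacy fails, so that a non-Borel non-$\asigma{1}$-complete effective analytic set is available to plug into the simulation.

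First I would reduce the statement to a purely set-theoretic question. By Remark~\ref{rem-nonborel}, for every effective analytic subset $L\subseteq \Sio$ there exists an $\om$-language $L''$ in {\bf r}-${\bf BCL}(4)_\om$ with $L\equiv_W L''$. In particular, if $L$ is effective analytic, not Borel, and not $\asigma{1}$-complete, then $L''$ is in {\bf r}-${\bf BCL}(4)_\om$ and, being Wadge-equivalent to $L$, it is itself neither Borel (Borelness is a Wadge-invariant notion, since continuous preimages of Borel sets are Borel) nor $\asigma{1}$-complete (being $\asigma{1}$-complete is by definition a Wadge degree). Thus it suffices to exhibit a model of \textbf{ZFC} in which some effective analytic set has these two properties simultaneously.

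Second, I would invoke the known consistency result that, in the constructible universe $L$ of G\"odel (or any model where effective $\Si^1_1$-determinacy fails), there exist lightface $\Si_1^1$ sets which are neither Borel nor $\asigma{1}$-complete. The relevant classical fact is that under $V=L$ (or more generally, when $\Si^1_1$-determinacy is not assumed), the Wadge hierarchy on $\Ana$ sets is no longer trivial beyond the Borel rank: one can use the canonical $\Si_1^1$ well-ordering of the reals available in $L$ to produce a $\Si_1^1$ set which is not Borel but which fails to continuously reduce a universal $\asigma{1}$ set. (In the dual formulation, Wadge's Lemma, which forces every two non-Wadge-comparable sets to be complements of each other, breaks already one level above the Borel sets when analytic determinacy fails.) Since \textbf{ZFC} is consistent with $V=L$, this gives a model of \textbf{ZFC} witnessing our claim.

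Third, combining the two steps: in such a model, pick an effective analytic $L\subseteq \Sio$ which is neither Borel nor $\asigma{1}$-complete, and apply Remark~\ref{rem-nonborel} to obtain the desired $L''$ in {\bf r}-${\bf BCL}(4)_\om$. The main (and only real) obstacle is the second step, namely the set-theoretic fact that failure of $\Si^1_1$-determinacy implies the existence of such intermediate effective analytic sets; but this is a standard consequence of the Wadge/Martin theory together with the behaviour of definable pointclasses in $L$, and does not need to be reproved here. No additional combinatorics on Petri nets is required beyond what is already in Remark~\ref{rem-nonborel}.
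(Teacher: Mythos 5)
Your proposal is correct and follows essentially the same route as the paper: reduce to the existence, in some model of \textbf{ZFC}, of an effective analytic set that is neither Borel nor $\asigma{1}$-complete (the paper gets this from the consistency of the failure of analytic determinacy together with Harrington's Theorem~4.3), and then transfer to {\bf r}-${\bf BCL}(4)_\om$ via Theorem~\ref{nad2} and Remark~\ref{rem-nonborel}, noting that Borelness and $\asigma{1}$-completeness are Wadge-invariant. One small inaccuracy in your parenthetical justification: the canonical well-ordering of the reals in $\mathbf{L}$ is $\Sigma^1_2$ (indeed $\Delta^1_2$), not $\Sigma^1_1$, but since you defer the key set-theoretic fact to the literature this does not affect the argument.
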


\begin{proof}
Recall that   {\bf ZFC}  is  the commonly accepted axiomatic
framework for Set Theory in which all usual mathematics can be developed.  The determinacy of Gale-Stewart games $G(A)$,  where
$A$ is an (effective) analytic set, denoted {\bf Det}($\Si_1^1$),  is not provable in  {\bf ZFC}; Martin and Harrington have proved that it is  a large cardinal
assumption equivalent to the existence of a particular real, called the real $0^\sharp$, see \cite[page  637]{Jech}.
It is also known that the determinacy of (effective) analytic
Gale-Stewart games is equivalent to the determinacy of (effective) analytic Wadge games, denoted  {\bf W-Det}($\Si_1^1$), see \cite{Louveau-Saint-Raymond}.

It is known that, if   {\bf ZFC}  is consistent, then there is a model of   {\bf ZFC}  in which the determinacy of (effective) analytic
Gale-Stewart games, and thus also the determinacy of (effective) analytic Wadge games, do not hold. It follows from \cite[Theorem 4.3]{harrington} that in such a model of   {\bf ZFC}  there exists an effective analytic set which is neither Borel nor
${\bf \Si}_1^1$-complete. The result now follows from Theorem \ref{nad2}  and Remark \ref{rem-nonborel}.
\end{proof}

We can now get  an amazing result on $\om$-languages of Petri nets from the following  previous one which was obtained in \cite{Fin-ICST}.  Recall that in a model {\bf  V} of  {\bf ZFC}, we denote by  {\bf  L} the class of constructible sets of {\bf  V} which induces a submodel of {\bf ZFC}. The axiom {\bf V=L} expresses that ``every set is constructible" and it is consistent with {\bf ZFC}.  The ordinal $\om_1^{\bf L}$ denotes the first uncountable ordinal in   the model {\bf  L}. It is consistent that  $\om_1^{\bf L} = \om_1$ since this is true in the model {\bf  L}. But it is also consistent with {\bf ZFC} that $\om_1^{\bf L} < \om_1$. We refer the interested reader to \cite{Fin-ICST}, and to classical textbooks on set theory like \cite{Jech} for more information about these notions.

The following result was obtained in \cite{Fin-ICST}:

\begin{theorem}\label{top-zfc}
One can construct a real-time $1$-counter B\"uchi automaton  $\mathcal{A}$ such that the topological complexity of the
$\om$-language $\lang(\mathcal{A})$ is not determined by the axiomatic system {\bf ZFC}. Indeed it holds that:
\begin{enumerate}
\itemsep=0.88pt
\item ({\bf ZFC + V=L}). ~~~~~~ The $\om$-language $\lang(\mathcal{A})$ is  analytic but not Borel.
\item ({\bf ZFC} + $\om_1^{\bf L} < \om_1$).  ~~~~The $\om$-language $\lang(\mathcal{A})$ is a  ${\bf \Pi}^0_2$-set.
\end{enumerate}
\end{theorem}

We can now easily get the following result:

\begin{theorem}\label{top-zfc2}
One can construct a $4$-blind counter B\"uchi automaton  $\mathcal{B}$ such that the topological complexity of the
$\om$-language $\lang(\mathcal{B})$ is not determined by the axiomatic system {\bf ZFC}. Indeed it holds that:
\begin{enumerate}
\itemsep=0.88pt
\item ({\bf ZFC + V=L}). ~~~~~~ The $\om$-language $\lang(\mathcal{B})$ is  analytic but not Borel.
\item ({\bf ZFC} + $\om_1^{\bf L} < \om_1$).  ~~~~The $\om$-language $\lang(\mathcal{B})$ is a  ${\bf \Pi}^0_2$-set.
\end{enumerate}
\end{theorem}

\begin{proof}
It follows directly from Theorem \ref{top-zfc} and from  the construction of the real time $4$-blind counter B\"uchi  automaton $\mathcal{P}_{\mathcal{A}}$ such that
$\lang(\mathcal{P}_{\mathcal{A}}) = h( \lang(\mathcal{A}) ) \cup \mathcal{L}$,  which was described above. It then suffices to take $\mathcal{B}=\mathcal{P}_{\mathcal{A}}$, where  $\mathcal{A}$ is the  real-time $1$-counter B\"uchi automaton constructed in the proof of  Theorem~\ref{top-zfc}.
\end{proof}

\section{High undecidability of topological and arithmetical properties}
\label{sec:high-undec-top}

    We prove that it is highly undecidable to determine the topological complexity of a Petri net $\om$-language.  As usual, since there is a finite description of a  real time
$1$-counter B\"uchi automaton or of a 4-blind-counter  B\"uchi automaton, we can define a G\"odel  numbering of all $1$-counter B\"uchi automata or of all
 4-blind-counter  B\"uchi automata and then speak about the $1$-counter B\"uchi automaton (or 4-blind-counter  B\"uchi automaton) of index $z$.
  Recall first the following result,  proved in~\cite{Fin-HI}, where
  we denote  $\mathcal{A}_z$ the real time
$1$-counter B\"uchi automaton of index $z$ reading words over a fixed finite alphabet $\Si$ having at least two letters.
We refer the reader to a textbook like \cite{Odifreddi1}
for more background about the
analytical hierarchy of subsets of the set $\mathbb{N}$ of natural numbers.

\begin{theorem}\label{borel-hard}
Let $\alpha$ be a countable ordinal. Then
\begin{enumerate}
\itemsep=0.88pt
\item $ \{  z \in \mathbb{N}  \mid  \lang(\mathcal{A}_z) \mbox{ is in the Borel class } {\bf \Si}^0_\alpha \}$ is  $\Pi_2^1$-hard.
\item  $ \{  z \in \mathbb{N}  \mid  \lang(\mathcal{A}_z) \mbox{ is in the Borel class } {\bf \Pi}^0_\alpha \}$ is  $\Pi_2^1$-hard.
\item  $ \{  z \in \mathbb{N}  \mid  \lang(\mathcal{A}_z) \mbox{ is a  Borel set } \}$ is  $\Pi_2^1$-hard.
\end{enumerate}
\end{theorem}

 Using the previous constructions we can now easily show the following result, where $\mathcal{P}_z$ is the real time 4-blind-counter  B\"uchi automaton of index $z$.

\begin{theorem}\label{borel-hard-pn}
Let $\alpha\geq 2$ be a countable ordinal. Then
\begin{enumerate}
\itemsep=0.88pt
\item $ \{  z \in \mathbb{N}  \mid  \lang(\mathcal{P}_z) \mbox{ is in the Borel class } {\bf \Si}^0_\alpha \}$ is  $\Pi_2^1$-hard.
\item  $ \{  z \in \mathbb{N}  \mid  \lang(\mathcal{P}_z) \mbox{ is in the Borel class } {\bf \Pi}^0_\alpha \}$ is  $\Pi_2^1$-hard.
\item  $ \{  z \in \mathbb{N}  \mid  \lang(\mathcal{P}_z) \mbox{ is a  Borel set } \}$ is  $\Pi_2^1$-hard.
\end{enumerate}
\end{theorem}

\begin{proof}
It follows  from the fact that one can easily get an injective  recursive function $g: \mathbb{N} \ra \mathbb{N}$ such that
$\mathcal{P}_{\mathcal{A}_z} =  h( \lang(\mathcal{A}_z) ) \cup \mathcal{L} = \lang(\mathcal{P}_{g(z)}) $ and from the following equivalences which hold for each countable ordinal
$\alpha \geq 2$:
\begin{enumerate}
\itemsep=0.88pt
\item  $\lang(\mathcal{A}_z)$  is in the Borel class  ${\bf \Si}^0_\alpha$  (resp.,  ${\bf \Pi}^0_\alpha$)   $\Longleftrightarrow$ $ \lang(\mathcal{P}_{g(z)})$  is in the Borel class  ${\bf \Si}^0_\alpha$
(resp., ${\bf \Pi}^0_\alpha$).
\item  $\lang(\mathcal{A}_z)$  is a  Borel set  $\Longleftrightarrow$ $ \lang(\mathcal{P}_{g(z)})$  is a  Borel  set.
\end{enumerate}

\vspace*{-8mm}
\end{proof}

Recall that the arithmetical properties of $\om$-languages of  real time $1$-counter B\"uchi automata were also proved to be highly undecidable  in \cite{Fin-HI}.

\begin{theorem}\label{arithmetic-dec}
\noindent Let $n \geq 1$ be an integer. Then
\begin{enumerate}
\itemsep=0.88pt
\item $ \{  z \in \mathbb{N}  \mid  \lang(\mathcal{A}_z) \mbox{ is in the arithmetical class }  \Si_n \}$ is  $\Pi_2^1$-complete.
\item $ \{  z \in \mathbb{N}  \mid  \lang(\mathcal{A}_z) \mbox{ is in the arithmetical class }  \Pi_n \}$ is  $\Pi_2^1$-complete.
\item $ \{  z \in \mathbb{N}  \mid  \lang(\mathcal{A}_z) \mbox{ is a }  \Delta^1_1 \mbox{-set } \}$ is  $\Pi_2^1$-complete.
\end{enumerate}
\end{theorem}

We can now prove  similar results for $\om$-languages of  real time 4-blind-counter  B\"uchi automata.

\begin{theorem}
\noindent  Let $n \geq 2$ be an integer. Then
\begin{enumerate}
\itemsep=0.88pt
\item $ \{  z \in \mathbb{N}  \mid  \lang(\mathcal{P}_z) \mbox{ is in the arithmetical class }  \Si_n \}$ is  $\Pi_2^1$-complete.
\item $ \{  z \in \mathbb{N}  \mid  \lang(\mathcal{P}_z) \mbox{ is in the arithmetical class }  \Pi_n \}$ is  $\Pi_2^1$-complete.
\item $ \{  z \in \mathbb{N}  \mid  \lang(\mathcal{P}_z) \mbox{ is a }  \Delta^1_1 \mbox{-set } \}$ is  $\Pi_2^1$-complete.
\end{enumerate}
\end{theorem}

\eject

\begin{proof}
Firstly, the three sets of integers considered in this theorem can be seen to be in the class $\Pi_2^1$.  This can be proved in a very similar way as in the proof of \cite[Theorem 3.26]{Fin-HI}.
Secondly, the completeness part of the results follows from Theorem \ref{arithmetic-dec} and from the fact that one can easily get an injective  recursive function $g\colon \mathbb{N} \ra \mathbb{N}$ such that
$\mathcal{P}_{\mathcal{A}_z} =  h( \lang(\mathcal{A}_z) ) \cup \mathcal{L} = \lang(\mathcal{P}_{g(z)}) $ and from the following equivalences which hold for each integer
$n \geq 2$, due to the fact that $\mathcal{L}$ is an arithmetical $\Sigma^0_1$-set and $h$ is a recursive function from $\Sio$ onto the effective closed set $h(\Sio)$.
\begin{enumerate}
\itemsep=0.88pt
\item  $\lang(\mathcal{A}_z)$  is in the  arithmetical class $ \Si_n$    (resp.,  $\Pi_n$)   $\Longleftrightarrow$ $ \lang(\mathcal{P}_{g(z)})$  is in the arithmetical class  $ \Si_n$    (resp.,  $\Pi_n$)
\item  $\lang(\mathcal{A}_z)$  is a  $\Delta^1_1 $-set  $\Longleftrightarrow$ $ \lang(\mathcal{P}_{g(z)})$  is a   $\Delta^1_1 $-set.
\end{enumerate}

\vspace*{-6mm}
\end{proof}

\section{High undecidability of the equivalence and the inclusion problems}
\label{sec:high-undec-equiv}

We now add a result obtained from our  previous constructions and which is  important for  verification purposes.

\begin{theorem}\label{}
The equivalence and the inclusion problems for $\om$-languages of Petri nets, or even for   $\om$-languages in the class   {\bf r}-${\bf BCL}(4)_\om$,    are $\Pi_2^1$-complete.
\begin{enumerate}
\item $\{ (z,z')\in \mathbb{N} \mid \lang(\mathcal{P}_z) = \lang(\mathcal{P}_{z'}) \}$ is $\Pi_2^1$-complete
\item $\{ (z,z')\in \mathbb{N} \mid \lang(\mathcal{P}_z) \subseteq \lang(\mathcal{P}_{z'}) \}$ is $\Pi_2^1$-complete
\end{enumerate}

\end{theorem}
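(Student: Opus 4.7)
The plan is to prove both parts simultaneously, handling the upper bound first and then the hardness via a reduction.

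For the $\Pi_2^1$ upper bound, I would write $L(\mathcal{P}_z)\subseteq L(\mathcal{P}_{z'})$ as $\forall x\in\Gamma^{\om}\,(x\in L(\mathcal{P}_z)\to x\in L(\mathcal{P}_{z'}))$. Membership $x\in L(\mathcal{P})$ is expressible as ``there exists an infinite run $r$ of $\mathcal{P}$ on $x$ that visits $F$ infinitely often''; coding $r$ as an element of $\N^{\om}$, the ``$r$ is a valid run on $x$ and is accepting'' predicate is arithmetical (it is $\Pi_2^0$), so membership is $\Sigma_1^1$. Hence the implication $x\in L(\mathcal{P}_z)\to x\in L(\mathcal{P}_{z'})$ is $\Pi_1^1\vee\Sigma_1^1\subseteq\Sigma_2^1$, and universally quantifying over $x$ yields $\Pi_2^1$. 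The equivalence problem is the conjunction of two inclusions, so it also lies in $\Pi_2^1$.

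For $\Pi_2^1$-hardness, I would reduce from the inclusion and equivalence problems for real-time $1$-counter B\"uchi automata, which are known to be $\Pi_2^1$-hard (this is a companion result to Theorems~\ref{borel-hard} and~\ref{arithmetic-dec}, established in~\cite{Fin-HI} by the same methodology that gives $\Pi_2^1$-hardness of the Borel/arithmetical complexity questions). The reduction is precisely the recursive map $\mathcal{A}\mapsto \mathcal{P}_{\mathcal{A}}$ from Lemma~\ref{lem8}, which on the level of G\"odel indices gives an injective recursive $g\colon\N\to\N$ with $L(\mathcal{P}_{g(z)})=h(L(\mathcal{A}_z))\cup\mathcal{L}$. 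The key observation is that the space $\Gamma^{\om}$ decomposes as the disjoint union $h(\Sigma^{\om})\cup\mathcal{L}\cup\mathcal{L}'$, with $h$ injective and $\mathcal{L}\cap h(\Sigma^{\om})=\emptyset$. From this one reads off
\[
L(\mathcal{P}_{g(z)})\subseteq L(\mathcal{P}_{g(z')}) \iff h(L(\mathcal{A}_z))\subseteq h(L(\mathcal{A}_{z'}))\cup\mathcal{L} \iff L(\mathcal{A}_z)\subseteq L(\mathcal{A}_{z'}),
\]
since no $h(x)$ can lie in $\mathcal{L}$, and similarly for equivalence. So $g$ is a many-one reduction from inclusion (resp. equivalence) for $1$-counter B\"uchi automata to the corresponding problem for $\mathcal{P}$'s, transferring $\Pi_2^1$-hardness.

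The main obstacle is invoking the right $\Pi_2^1$-hardness statement for $1$-counter B\"uchi automata; modulo that input from~\cite{Fin-HI}, the verification that $g$ is a reduction is a short chase through the disjoint decomposition $h(\Sigma^{\om})\sqcup\mathcal{L}\sqcup\mathcal{L}'$, and the only subtlety is checking the backward direction, where one needs $\mathcal{L}$ to be disjoint from $h(\Sigma^{\om})$ and $h$ to be injective, both of which are already recorded in the paper (the former just before Lemma~\ref{nad} and the latter being evident from the definition of $h$).
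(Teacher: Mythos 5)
Your proposal is correct and follows essentially the same route as the paper: the $\Pi_2^1$ upper bound by quantifier counting (the paper cites the corresponding fact for Turing machines), and hardness via the recursive map $z\mapsto g(z)$ with $\lang(\mathcal{P}_{g(z)})=h(\lang(\mathcal{A}_z))\cup\mathcal{L}$, reducing from the $\Pi_2^1$-complete inclusion and equivalence problems for real-time $1$-counter B\"uchi automata of \cite{Fin-HI}. Your explicit verification of the two equivalences via the injectivity of $h$ and the disjointness of $\mathcal{L}$ from $h(\Sio)$ is exactly the chase the paper leaves implicit.
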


\begin{proof}
Firstly, it is easy to see that each of these decision problems is in the class $\Pi_2^1$, since the equivalence and the inclusion problems for $\om$-languages of Turing machines are already in the class $\Pi_2^1$, see \cite{cc,Fin-HI}.
The completeness part follows from the fact that the  equivalence and the inclusion problems for $\om$-languages accepted by
real time $1$-counter B\"uchi automata are $\Pi_2^1$-complete  \cite{Fin-HI},  and  from the fact that there exists an injective  recursive function $g: \mathbb{N} \ra \mathbb{N}$ such that
$\mathcal{P}_{\mathcal{A}_z}=\mathcal{P}_{g(z)}$, and then from the following equivalences:
\begin{enumerate}
\item  $\lang(\mathcal{A}_z) = \lang(\mathcal{A}_{z'})   \Longleftrightarrow  \lang(\mathcal{P}_{g(z)}) = \lang(\mathcal{P}_{g(z')})$
\item  $\lang(\mathcal{A}_z) \subseteq \lang(\mathcal{A}_{z'})
\Longleftrightarrow \lang(\mathcal{P}_{g(z)}) \subseteq \lang(\mathcal{P}_{g(z')})$
\end{enumerate}

\vspace*{-6mm}
\end{proof}

\section{Determinacy of Wadge games}
\label{sec:wadge-determ}

We proved in \cite{Fin13-JSL}  that the determinacy of Wadge games between two players in charge of
$\om$-languages accepted by  real time $1$-counter B\"uchi automata,  denoted    {\bf W-Det}({\bf r}-${\bf CL}(1)_\om$),   is  equivalent to the  (effective) analytic  Wadge determinacy.

We can now state the following result, proved within the axiomatic system {\bf ZFC}.

\begin{theorem}
The determinacy of Wadge games between two players in charge of $\om$-languages in the class  {\bf r}-${\bf BCL}(4)_\om$ is equivalent to the effective analytic (Wadge) determinacy, and thus  is not provable in  the axiomatic system {\bf ZFC}.
\end{theorem}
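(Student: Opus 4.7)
The plan is to establish the equivalence in two directions and then invoke the known non-provability of effective analytic Wadge determinacy in \textbf{ZFC}.

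First, the easy direction: effective analytic Wadge determinacy implies the Wadge determinacy of $\omega$\=/languages in the class \textbf{r}-${\bf BCL}(4)_\om$. This is immediate because every $\omega$\=/language accepted by a $4$\=/blind\=/counter B\"uchi automaton is accepted by some B\"uchi Turing machine, and therefore belongs to the class $\Sigma^1_1$ of effective analytic sets. Hence any Wadge game between two such languages is an effective analytic Wadge game, and is determined under the assumption.

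For the converse direction, I would reduce to the already known case of $1$-counter B\"uchi automata. Recall from \cite{Fin13-JSL} that the Wadge determinacy of $\omega$\=/languages in \textbf{r}-${\bf CL}(1)_\om$ is equivalent to effective analytic Wadge determinacy. Thus it suffices to prove that the Wadge determinacy of languages in \textbf{r}-${\bf BCL}(4)_\om$ implies the Wadge determinacy of languages in \textbf{r}-${\bf CL}(1)_\om$. Given any two $\omega$\=/languages $L_1, L_2$ accepted by real time $1$-counter B\"uchi automata, the constructions behind Theorem~\ref{thewad}, together with Theorem~\ref{nad2} and Remark~\ref{rem-nonborel}, produce $\omega$\=/languages $L'_1, L'_2$ in \textbf{r}-${\bf BCL}(4)_\om$ with $L_1 \equiv_W L'_1$ and $L_2 \equiv_W L'_2$ (handling uniformly the Borel non self dual case via Lemma~\ref{nad}, the analytic non\=/Borel case via Theorem~\ref{nad2}, and the self dual Borel case via the decomposition argument used at the end of the proof of Theorem~\ref{thewad}).

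The remaining step is to observe that Wadge equivalence preserves the determinacy of Wadge games. Indeed, by Wadge's lemma, the determinacy of $W(L_1,L_2)$ is the disjunction $L_1 \leq_W L_2 \ \text{or}\ L_2 \leq_W L_1^{-}$, and each of these two reducibility statements is preserved under the substitution of $L_i$ by a Wadge equivalent $L'_i$ (continuous reductions compose, and complementation commutes with continuous reductions). Hence $W(L_1,L_2)$ is determined if and only if $W(L'_1,L'_2)$ is determined, so the determinacy of all games $W(L'_1,L'_2)$ with $L'_1, L'_2 \in \textbf{r}\mbox{-}{\bf BCL}(4)_\om$ yields the determinacy of all games $W(L_1,L_2)$ with $L_1, L_2 \in \textbf{r}\mbox{-}{\bf CL}(1)_\om$, and through the equivalence of \cite{Fin13-JSL}, the full effective analytic Wadge determinacy.

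The main (very mild) obstacle is simply to keep track of the case distinction required by the lemmas used to produce the Wadge equivalent languages $L'_i$: Borel non self dual, Borel self dual, and analytic non\=/Borel. Each case is already treated in the results listed above, so no new construction is required. The final conclusion that the equivalent principle is not provable in \textbf{ZFC} then follows from the classical result of Martin and Harrington recalled in the proof of Corollary~\ref{non-complete}.
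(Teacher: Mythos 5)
Your overall plan -- reduce to the equivalence {\bf W-Det}({\bf r}-${\bf CL}(1)_\om$) $\Longleftrightarrow$ {\bf W-Det}($\Si_1^1$) from \cite{Fin13-JSL}, get the easy direction from the inclusion of {\bf r}-${\bf BCL}(4)_\om$ into $\Si_1^1$, and transfer determinacy through the languages $\lang(\mathcal{P}_{\mathcal{A}})$ produced by the simulation construction -- is the same as the paper's. But the step where you transfer determinacy through Wadge equivalence has a genuine gap. You assert that the determinacy of $W(L_1,L_2)$ \emph{is} the disjunction ``$L_1\leq_W L_2$ or $L_2\leq_W L_1^{-}$'', and conclude that determinacy is invariant under replacing each $L_i$ by a Wadge-equivalent $L_i'$. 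Only half of this is true: determinacy \emph{implies} the disjunction (that is Wadge's Lemma), and $L_1\leq_W L_2$ does yield a winning strategy for Player 2, because Player 2 may skip and can therefore implement any continuous reduction. But $L_2\leq_W L_1^{-}$ does \emph{not} in general yield a winning strategy for Player 1: Player 1 moves first and is not allowed to skip, so she can only implement reductions that stay one letter ahead (Lipschitz-type), and this property is destroyed when you compose the strategy witnessing Player 1's win in $W(L_1',L_2')$ with the arbitrary continuous reductions witnessing $L_i\equiv_W L_i'$. Hence ``$W(L_1',L_2')$ determined $\Rightarrow$ $W(L_1,L_2)$ determined'' does not follow from Wadge equivalence alone; establishing it is precisely the non-trivial content of this direction.

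The paper closes this gap by a direct strategy-transfer argument tailored to the coding $h$. In the case where Player 1 has a winning strategy in $W(\lang(\mathcal{P}_{\mathcal{A}}),\lang(\mathcal{P}_{\mathcal{B}}))$, it explicitly constructs a winning strategy for Player 1 in $W(\lang(\mathcal{A}),\lang(\mathcal{B}))$, exploiting the fact that the simulating player's $h$-encoded prefix is always at least as long as the opponent's (so she stays ahead without skipping), and then shows by contradiction that the subcases in which the simulated strategy would leave the closed set $h(\Sio)$ cannot occur. (The Player 2 case can be routed through $\emptyset+(\emptyset+\lang(\mathcal{B}))$ and Lemma \ref{sum-wad2}, since there a $\leq_W$ statement suffices.) The paper also first disposes of the case where both languages are Borel via Borel Wadge determinacy in {\bf ZFC}, and treats separately, via Hurewicz's theorem, the case where one language has finite Wadge degree and the other is non-Borel; your write-up would need these as well. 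The missing idea, in short, is the explicit construction of Player 1's strategy; without it the converse implication is not established.
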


\begin{proof}
It was proved in \cite{Fin13-JSL} that  the following equivalence holds: {\bf W-Det}({\bf r}-${\bf CL}(1)_\om$) $\Longleftrightarrow$  {\bf W-Det}($\Si_1^1$).
The implication {\bf W-Det}($\Si_1^1$)$\Longrightarrow$ {\bf W-Det}({\bf r}-${\bf BCL}(4)_\om$) is obvious since the class ${\bf BCL}(4)_\om$  is included into the class $\Si_1^1$.
To prove the reverse implication, we assume that {\bf W-Det}({\bf r}-${\bf BCL}(4)_\om$) holds and we show that every Wadge game $W(\lang(\mathcal{A}),\lang(\mathcal{B}))$  between two players in charge of
$\om$-languages of the class {\bf r}-${\bf CL}(1)_\om$ is determined (we assume without loss of generality that the two real time 1-counter  B\"uchi automata $\mathcal{A}$ and $\mathcal{B}$ read words over the same alphabet $\Si$).

It is sufficient to consider the cases where at least one of two $\om$-languages $\lang(\mathcal{A})$ and $\lang(\mathcal{B})$ is non-Borel, since the Borel Wadge determinacy is provable in
{\bf ZFC}.  On the other hand, we have seen how we can construct some real time 4-blind-counter  B\"uchi automata $\mathcal{P}_{\mathcal{A}}$ and $\mathcal{P}_{\mathcal{B}}$
such that $\lang(\mathcal{P}_{\mathcal{A}})=h(\lang(\mathcal{A})) \cup \mathcal{L}$  and $\lang(\mathcal{P}_{\mathcal{B}})=h(\lang(\mathcal{B})) \cup \mathcal{L}$.

We can firstly  consider the case where  $\lang(\mathcal{A})$ is Borel of Wadge degree smaller than $\om$, and $\lang(\mathcal{B})$  is non-Borel. In that case $\lang(\mathcal{A})$ is in particular a
${\bf \Pi}^0_2$-set.  Recall now that we can infer from Hurewicz's Theorem, see \cite[page 160]{Kechris94},
that an analytic subset of $\Si^\om$  is either ${\bf \Pi}^0_2$-hard   or  a ${\bf \Si}^0_2$-set. Thus $\lang(\mathcal{B})$  is ${\bf \Pi}^0_2$-hard and Player 2 has a winning strategy
in the game  $W(\lang(\mathcal{A}),\lang(\mathcal{B}))$.

Secondly we consider the case where  $\lang(\mathcal{A})$ and $\lang(\mathcal{B})$ are either non-Borel or Borel of Wadge degree greater  than $\om$.
By hypothesis we know that the Wadge game $W(\lang(\mathcal{P}_{\mathcal{A}}), \lang(\mathcal{P}_{\mathcal{B}}))$ is determined, and that one of the players has a winning strategy.
Using the above constructions and reasonings we used in the proofs of Lemmas \ref{lem8} and \ref{nad},
we can easily show that  the same player has a winning strategy in the Wadge game $W(\lang(\mathcal{A}),\lang(\mathcal{B}))$.

 We now consider the two following cases:\medskip
\nl {\bf First case.} Player 2 has a w.s. in the game $W(\lang(\mathcal{P}_{\mathcal{A}}), \lang(\mathcal{P}_{\mathcal{B}}))$. If $\lang(\mathcal{B})$ is Borel then $\lang(\mathcal{P}_{\mathcal{B}})$ is easily seen to be Borel
 and then $\lang(\mathcal{P}_{\mathcal{A}})$ is also Borel because $\lang(\mathcal{P}_{\mathcal{A}}) \leq_W    \lang(\mathcal{P}_{\mathcal{B}})$.  Thus  $\lang(\mathcal{A})$ is also Borel and  the game
$W(\lang(\mathcal{A}), \lang(\mathcal{B}))$ is determined. Assume now that $\lang(\mathcal{B})$ is not Borel. Consider the Wadge game $W(\lang(\mathcal{A}), \emptyset + (\emptyset + \lang(\mathcal{B})))$. We claim that Player 2 has a w.s. in that
game which is easily deduced from a w.s. of Player 2 in the Wadge game  $W(\lang(\mathcal{P}_{\mathcal{A}}), \lang(\mathcal{P}_{\mathcal{B}}))$ $=W(h(\lang(\mathcal{A})) \cup \mathcal{L}, h(\lang(\mathcal{B})) \cup \mathcal{L})$. Consider a play in this latter game where  Player
1 remains in the closed set $h(\Si^\om)$:  she writes a beginning of a word in the form
$$A0x(1)B0^{2}x(2)A \cdots
B0^{2n}x(2n)A \cdots  $$
\noindent Then player 2 writes a beginning of a word in the form
$$A0x'(1)B0^{2}x'(2)A \cdots
B0^{2p}x'(2p)A \cdots $$
\noindent where $p\leq n$.
Then the strategy of Player 2 in $W(\lang(\mathcal{A}), \emptyset + (\emptyset + \lang(\mathcal{B})))$ consists to write $x'(1)x'(2) \cdots \allowbreak
x'(2p)$ when Player 1 writes  $x(1)x(2) \cdots x(2n)$. (Notice that Player 2 is allowed to skip, provided he really writes an $\om$-word in $\om$ steps).
If the strategy for Player 2 in $W(\lang(\mathcal{P}_{\mathcal{A}}), \lang(\mathcal{P}_{\mathcal{B}}))$ was at some step to go out of
the  closed set $h(\Si^\om)$ then this means that   the word he has now
written is not a prefix of any $\om$-word of  $h(\Sio)$,  and  Player 2 ``enters'' in the open set $h(\Sio)^- = \mathcal{L}\cup \mathcal{L}'$ and will stay in this set. Two subcases may now appear.

\medskip
{\bf Subcase A.} When  Player 2  in the game $W(\lang(\mathcal{P}_{\mathcal{A}}), \lang(\mathcal{P}_{\mathcal{B}}))$  ``enters'' in the open set $h(\Sio)^- = \mathcal{L}\cup \mathcal{L}'$, he actually enters in the open set $ \mathcal{L}$. Then the final word written by Player 2 will surely be inside $\lang(\mathcal{P}_{\mathcal{B}})$. Player 2 in the Wadge game $W(\lang(\mathcal{A}), \emptyset + (\emptyset + \lang(\mathcal{B})))$ can now  write a letter of $\Delta -\Si$ in such a way that he is now like a player in charge of the whole set  and he can
now write an $\om$-word $u$ so that his final $\om$-word will be inside $\emptyset +  (\emptyset + \lang(\mathcal{B}))$. Thus Player 2 wins this play too.

\medskip
{\bf  Subcase B.}   When  Player 2  in the game $W(\lang(\mathcal{P}_{\mathcal{A}}), \lang(\mathcal{P}_{\mathcal{B}}))$  ``enters'' in the open set $h(\Sio)^- = \mathcal{L}\cup \mathcal{L}'$, he does not enter in the open set $ \mathcal{L}$.
Then  Player 2, in the Wadge game $W(\lang(\mathcal{A}), \emptyset + (\emptyset + \lang(\mathcal{B})))$,  being first like a player in charge of the set $(\emptyset +  \lang(\mathcal{B}))$,  can write a  letter of $\Delta -\Si$ in such a way that he is now like a player in charge of the empty set  and he can
now continue, writing an $\om$-word $u$. If Player 2  in the game $W(\lang(\mathcal{P}_{\mathcal{A}}), \lang(\mathcal{P}_{\mathcal{B}}))$  never enters  in the open set $\mathcal{L}$ then the final word written by Player 2 will be in $\mathcal{L}'$ and thus surely outside $\lang(\mathcal{P}_{\mathcal{B}})$, and the final word written by Player 2 will be outside the empty set. So in that case Player 2 wins this play too in the Wadge game $W(\lang(\mathcal{A}), \emptyset + (\emptyset + \lang(\mathcal{B})))$.
If at some step of the play, in the game $W(\lang(\mathcal{P}_{\mathcal{A}}), \lang(\mathcal{P}_{\mathcal{B}}))$,  Player  2 enters  in the open set $\mathcal{L}$ then his final $\om$-word will be surely in $\lang(\mathcal{P}_{\mathcal{B}})$. In that case Player 2, in charge of the set
$\emptyset + (\emptyset + \lang(\mathcal{B}))$   in the Wadge game $W(\lang(\mathcal{A}), \emptyset + (\emptyset + \lang(\mathcal{B})))$, can again write an extra letter and choose to be  in charge of the whole set and he can
now write an $\om$-word $v$ so that his final $\om$-word will be inside $\emptyset + (\emptyset +  \lang(\mathcal{B}))$. Thus Player 2 wins this play too.

  So we have proved that Player 2 has a w.s. in  the Wadge game $W(\lang(\mathcal{A}), \emptyset + (\emptyset + \lang(\mathcal{B})))$  or equivalently that
$\lang(\mathcal{A}) \leq_W   \emptyset + (\emptyset + \lang(\mathcal{B}))$. But by  Lemma \ref{sum-wad2} we know that  $\lang(\mathcal{B})  \equiv_W \emptyset + (\emptyset + \lang(\mathcal{B}))$  and thus
$\lang(\mathcal{A}) \leq_W \lang(\mathcal{B})$ which means that Player 2 has a  w.s. in  the Wadge game $W(\lang(\mathcal{A}), \lang(\mathcal{B}))$.

\medskip
{\bf Second case.} Player 1 has a w.s. in the game $W(\lang(\mathcal{P}_{\mathcal{A}}), \lang(\mathcal{P}_{\mathcal{B}}))$. Notice that this implies that
$ \lang(\mathcal{P}_{\mathcal{B}}) \leq_W \lang(\mathcal{P}_{\mathcal{A}})^-$. Thus if $\lang(\mathcal{A})$ is Borel then $\lang(\mathcal{P}_{\mathcal{A}})$ is Borel, $\lang(\mathcal{P}_{\mathcal{A}})^-$ is also Borel,
and  $ \lang(\mathcal{P}_{\mathcal{B}})$ is Borel as the inverse image of a Borel set by a continuous function, and $ \lang(\mathcal{B})$ is also Borel, so the Wadge game
$W( \lang(\mathcal{A}),  \lang(\mathcal{B}))$ is determined. We now  assume that $\lang(\mathcal{A})$ is not Borel and we consider the Wadge game $W(\lang(\mathcal{A}),  \lang(\mathcal{B}))$.
Player 1 has a w.s. in this game which is easily constructed from a w.s. of the same player in the game  $W(\lang(\mathcal{P}_{\mathcal{A}}), \lang(\mathcal{P}_{\mathcal{B}}))$ as follows.
For this consider a play in this latter game where Player 2 does not go out of the closed set $h(\Sio)$.

\medskip
He writes a beginning of a word in the form
$$A0x(1)B0^{2}x(2)A \cdots
B0^{n}x(n)A \cdots  $$
\noindent Then Player 1 writes a beginning of a word in the form
$$A0x'(1)B0^{2}x'(2)A \cdots
B0^{p}x'(p)A \cdots $$
where $n \leq p$ (notice that here without loss of generality  the notation implies that $n$ and $p$ are even, since the segments $B0^{n}x(n)A$ and $B0^{p}x'(p)A$ begin
 with a letter $B$      but this is not essential in the proof).
Then the strategy for Player 1 in $W(\lang(\mathcal{A}),  \lang(\mathcal{B}))$ consists to write $x'(1)x'(2) \cdots
x'(p)$ when Player 2 writes  $x(1)x(2) \cdots x(n)$.   After $\om$ steps, the $\om$-word written by Player 1 is in $\lang(\mathcal{A})$ iff the $\om$-word written by Player 2 is not in the set
 $\lang(\mathcal{B})$, and thus Player 1 wins the play.

If the strategy for Player 1 in $W(\lang(\mathcal{P}_{\mathcal{A}}), \lang(\mathcal{P}_{\mathcal{B}}))$ was at some step to go out of
the  closed set $h(\Si^\om)$ then this means that  she ``enters'' in the open set $h(\Sio)^- = \mathcal{L}\cup \mathcal{L}'$ and will stay in this set. Two subcases may now appear.

\medskip
{\bf Subcase A.} When  Player 1  in the game $W(\lang(\mathcal{P}_{\mathcal{A}}), \lang(\mathcal{P}_{\mathcal{B}}))$  ``enters'' in the open set $h(\Sio)^- = \mathcal{L}\cup \mathcal{L}'$, she actually enters in the open set $ \mathcal{L}$. Then the final word written by Player 1 will surely be inside $\lang(\mathcal{P}_{\mathcal{A}})$. But she wins the play since she follows a winning strategy and this leads to a contradiction. Indeed if Player 2 decided to also enter in  in the open set $ \mathcal{L}$ then Player 2 would win the play. Thus this case is actually not possible.

\medskip
{\bf  Subcase B.}   When  Player 1  in the game $W(\lang(\mathcal{P}_{\mathcal{A}}), \lang(\mathcal{P}_{\mathcal{B}}))$  ``enters'' in the open set $h(\Sio)^- = \mathcal{L}\cup \mathcal{L}'$, she does not enter in the open set $ \mathcal{L}$. But Player 2 would be able to do the same and enter in $h(\Sio)^- = \mathcal{L}\cup \mathcal{L}'$ but not (for the moment) in  the open set $ \mathcal{L}$. And if at some step of the play, Player 1 would enter in the open set $ \mathcal{L}$ then Player 2 could do the same, and thus Player 2 would win the play.
Again this is not possible since Player 1 wins the play since she follows a winning strategy.

 Finally both subcases A and B cannot occur and this shows that Player 1 has a w.s. in the Wadge game $W(\lang(\mathcal{A}), \lang(\mathcal{B}))$.
\end{proof}

\section{Non-Borel $\om$-languages of one counter Petri nets}
\label{sec:non-borel}

In this section we prove the following result.

\begin{theorem}
There exists a $1$-blind counter automaton $\Aa_1$ such that the  $\om$-language $\lang(\mathcal{A}_1)$ of that automaton is  $\asigma{1}$\=/complete, hence non-Borel.
\end{theorem}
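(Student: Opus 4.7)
The plan is to exhibit explicitly a real-time $1$-blind counter B\"uchi automaton $\Aa_1$ whose accepted $\omega$\=/language is $\asigma{1}$\=/complete. Membership in $\asigma{1}$ is automatic, since every $\omega$\=/language of a non-deterministic B\"uchi counter automaton is also accepted by some B\"uchi Turing machine and so lies in the (lightface) class $\Si_1^1 \subseteq \asigma{1}$. The substantive content is therefore to establish $\asigma{1}$\=/hardness by a continuous reduction from a known $\asigma{1}$\=/complete set, for which a convenient choice is the set $\IF \subseteq \{0,1\}^\om$ of characteristic functions of ill\=/founded (i.e., admitting an infinite branch) recursive trees on $\om$, via a fixed recursive enumeration of $\om^{<\om}$.

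First I would fix a block\=/structured encoding $h: \{0,1\}^\om \to \Si^\om$ in which each input bit is expanded into a finite block over $\Si$ whose length depends recursively on the index of the corresponding node of $\om^{<\om}$. The encoding is designed so that two consecutive nodes on a branch of the tree are represented by blocks whose relative lengths a single blind counter can compare: one block counts the counter up, the next counts it down. Then I would construct $\Aa_1$ to guess non\=/deterministically, level by level, an infinite branch through the tree coded in its input: at each step the automaton claims that the new node extends the previous one by exactly one symbol, and uses the counter to certify the corresponding length relation. The B\"uchi condition would require that a designated "level\=/completed" state is revisited infinitely often, thereby witnessing an infinite branch; the input is accepted iff some such branch is guessed and all its local consistency checks succeed.

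The delicate point is that a blind counter cannot test equality directly, only that the counter never went below zero. I would therefore set up the block lengths so that if the guessed branch is genuine, the two directions of counting match exactly and the run never blocks, while any attempt to "cheat" (guessing a node that is not actually in the tree, or that does not extend the previous one) forces the counter below zero and kills the run; the asymmetry between the two counting directions and the non\=/determinism together enforce the intended semantics. By continuity of $h$ and the construction, $x \in \IF$ iff $h(x) \in \lang(\Aa_1)$, giving the required reduction; combined with $\lang(\Aa_1) \in \asigma{1}$ this yields $\asigma{1}$\=/completeness, hence non\=/Borelness since $\asigma{1}$\=/complete sets are never Borel.

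The main obstacle is the single\=/counter budget. With two or more counters a zigzag simulation would be straightforward, but with just one counter one must arrange the encoding so that after each local verification the counter is restored to zero and can be freely reused, while all the global branch\=/consistency information is smuggled through the one\=/dimensional counter state and the finite control. Finding block lengths and a finite control that simultaneously achieve (i) correct acceptance on $h(\IF)$, (ii) correct rejection on $h(\{0,1\}^\om \setminus \IF)$, and (iii) an honest $\asigma{1}$\=/hardness reduction (i.e., spurious accepts only on inputs outside the range of $h$, which are irrelevant), is the heart of the argument.
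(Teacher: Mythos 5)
Your overall architecture (an explicit real\-/time automaton, a continuous block encoding $h$, hardness via reduction from a $\asigma{1}$\-/complete ill\-/foundedness set, membership for free) matches the paper's, but the concrete reduction you propose has a gap that the paper identifies as \emph{the} central obstacle and is designed specifically to avoid. You reduce from ill\-/founded trees on $\om$ and ask the single blind counter to certify that each guessed node \emph{extends the previous one by exactly one symbol}, claiming that on a genuine branch ``the two directions of counting match exactly'' while any cheat ``forces the counter below zero''. A blind counter cannot do this: the only thing its non\-/negativity constraint certifies along a run is a family of \emph{one\-/sided} inequalities between partial sums. If the counter is decremented by the length of one block and survival of the run is the only test, you learn that the stored value is at least that length, never that it equals it; the complementary inequality would require a zero test or a second counter keeping the complementary count (which is exactly what the paper's four\-/counter construction in Section~4 does with the pairs $\mathcal{C}_1,\mathcal{C}_2$ and $\mathcal{C}_3,\mathcal{C}_4$). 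Consequently your automaton would admit accepting runs on $h(x)$ in which the guessed ``branch'' drifts in the uncontrolled direction --- spurious accepts \emph{inside} the range of $h$ --- which destroys the equivalence $x\in\mathrm{IF}\Leftrightarrow h(x)\in\lang(\Aa_1)$; it is not enough that spurious accepts outside the range of $h$ are harmless. You list ``correct rejection on $h(\{0,1\}^\om\setminus \mathrm{IF})$'' as a requirement but give no mechanism achieving it, and with one blind counter and the extension relation on $\om^{<\om}$ as the relation to be verified, there is none.

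The paper's fix is to change the hard set so that one\-/sided inequalities suffice. It reduces from $\IFinx$, the family of subsets of the binary tree containing an infinite descending chain in the infix order ${\inxeq}$, itself shown $\asigma{1}$\-/complete via the Lusin--Sierpi\'nski theorem on ill\-/founded linear orders. Nodes $\verA$ at depth $n$ are encoded by counter values $e(\verA)=m(n{-}1)\cdot b(\verA)$ with a fast\-/growing threshold $m(n)$, chosen so that the one inequality the blind counter does certify, namely $e(\verA_{n+1})\leq c_{n+1}< m(n)+e(\verA_n d_n)$, yields $b(\verA_{n+1})\leq b(\verA_n d_n)$ and hence $\verA_{n+1}\inxeq\verA_n\dR$: the non\-/determinism can only err \emph{downwards} in the infix order, and a sequence that drops infinitely often is precisely a witness of ill\-/foundedness. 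In other words, the cheating runs your construction cannot exclude are, in the paper's setting, themselves witnesses of membership --- this is the missing idea. To repair your proposal you would need either to switch to an order\-/theoretic hard set in this way, or to supply a genuinely new mechanism by which one blind counter certifies a two\-/sided relation.
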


The crucial obstacle of that construction is the fact that the simulation order ${\preceq}$ for $1$-blind counter automata has a finite width, namely the number of states of the machine. This means that the non-deterministic choices of such a machine are inherently ordered. This explains why we use a $\asigma{1}$-hard $\omega$-language that itself is a set of orders.

\subsection{Topology of orders}
\label{ssec:top-orders}

Consider a~set $X$ and a~relation ${\orrA}\subseteq X\times X$ on $X$. We say that ${\orrA}$ is a~\emph{linear order} if it is reflexive, transitive, and anti\=/symmetric. We interpret a~pair $(x,x')\in \orrA$ as representing the fact that $x$ is $\orrA$-smaller-or-equal than $x'$. A~linear order ${\orrA}$ is \emph{ill\=/founded} if there exists an infinite sequence $x_0,x_1,\ldots$ of pairwise distinct elements of $X$ such that for all $n$ we have $(x_{n+1},x_n)\in\orrA$, i.e.~the sequence is strictly decreasing. Such a~sequence indicates an~infinite ${\orrA}$\=/descending chain. An~order that is not ill\=/founded is called \emph{well\=/founded}.

The binary tree is the set of all sequences of \emph{directions} $\Tt\eqdef\{\dL,\dR\}^\ast$ where the \emph{directions} $\dL$ and $\dR$ are two fixed distinct symbols. For technical reasons we sometimes consider a~third direction $\dM$ (it does not occur in the binary tree).

A set $X\subseteq \Tt$ can be naturally identified with its characteristic function $X\in \{0,1\}^{\big(\{\dL,\dR\}^\ast\big)}$. Thus, the family of all subsets of the binary tree, with the natural product topology, is homeomorphic with the Cantor set $\{0,1\}^\w$.

\medskip
The elements $\verA,\verB\in\Tt$ are called \emph{nodes}. Nodes are naturally ordered by the following three orders:
\begin{itemize}
\itemsep=0.9pt
\item the prefix order ${\sqsubseteq}$, as defined on page~\pageref{def:prefix},
\item the lexicographic order: $\verA\lexeq \verB$ if $\verA$ is lexicographically smaller than $\verB$ (we assume that $\dL\lex \dM \lex \dR$),
\item the infix order: $\verA\inxeq \verB$ if $\verA \dM^\w$ (i.e.~the $\w$\=/word obtained by appending infinitely many symbols $\dM$ after $\verA$) is lexicographically less or equal than $\verB \dM^\w$. This order corresponds to the left\=/to\=/right order on the nodes of the tree, when it is drawn in a standard way.
\end{itemize}

Notice that, for every fixed $n$, when restricted to $\{\dL,\dR\}^n$, the lexicographic and infix orders coincide. However, $\dL \inx \epsilon\inx \dR$ but $\epsilon$ is the minimal element of ${\lexeq}$. Both the lexicographic and infix orders are linear.

Since the infix order is countable, dense, and has no minimal nor maximal elements, we obtain the following fact.

\begin{fact}
$(\Tt,{\inxeq})$ is isomorphic to the order of rational numbers $(\mathbb{Q},{\leq})$.
\end{fact}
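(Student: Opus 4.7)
The plan is to invoke Cantor's classical back-and-forth theorem, which characterises $(\mathbb{Q},{\leq})$ up to isomorphism as the unique countable, dense, linear order without endpoints. Thus it suffices to verify that $(\Tt,{\inxeq})$ enjoys each of these four properties.

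Countability is immediate since $\Tt=\{\dL,\dR\}^\star$ is a countable set of finite words. Linearity of ${\inxeq}$ follows directly from the definition: two distinct nodes $\verA,\verB\in\Tt$ give rise to two distinct words $\verA\dM^\w,\verB\dM^\w\in\{\dL,\dM,\dR\}^\w$ (indeed, if $\verA\dM^\w=\verB\dM^\w$ then inspecting the last non-$\dM$ position forces $\verA=\verB$), so one of the two is lexicographically strictly below the other. The absence of endpoints is also routine: for any $\verA\in\Tt$, at position $|\verA|+1$ the infinite words $\verA\dL\dM^\w$ and $\verA\dR\dM^\w$ carry respectively $\dL$ and $\dR$, which bracket $\dM=(\verA\dM^\w)(|\verA|+1)$ in the order $\dL\lex\dM\lex\dR$, so $\verA\dL\inx\verA\inx\verA\dR$.

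The main step — and the only one requiring a short case analysis — is density. Given $\verA\inx\verB$, let $k$ be the first position where $\verA\dM^\w$ and $\verB\dM^\w$ differ and let $p$ be their common prefix of length $k{-}1$. The characters $(\verA\dM^\w)(k)\lex(\verB\dM^\w)(k)$ form one of exactly three pairs, dictated by $\dL\lex\dM\lex\dR$:
\begin{itemize}
\item $(\dL,\dR)$: take $\verC\eqdef p$, so that $\verC\dM^\w$ has $\dM$ at position $k$, wedging it between $\verA\dM^\w$ and $\verB\dM^\w$.
\item $(\dL,\dM)$: here $\verB=p$ and $\verA$ begins with $p\dL$; take $\verC\eqdef\verA\dR$, so $\verC\dM^\w$ agrees with $\verA\dM^\w$ on the first $|\verA|$ positions and then has $\dR$ (giving $\verA\inx\verC$), while at position $k$ it still starts with $\dL$ below $\dM$ (giving $\verC\inx\verB$).
\item $(\dM,\dR)$: symmetric to the previous case, taking $\verC\eqdef\verB\dL$.
\end{itemize}
In each case $\verA\inx\verC\inx\verB$, establishing density.

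Once these four properties are in place, Cantor's theorem supplies an order-isomorphism $\Tt\to\mathbb{Q}$, which is exactly the statement of the fact. The only subtlety I foresee is being careful in the density argument that when $\verA$ or $\verB$ equals the common prefix $p$, the naive choice $\verC=p$ fails and one must extend by a single direction as above; this is the point I would take most care over when writing out the full proof.
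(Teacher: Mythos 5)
Your proof is correct and follows exactly the route the paper takes: the paper justifies the fact by observing that the infix order is countable, dense, and has no minimal or maximal elements, and then implicitly invokes Cantor's characterisation of $(\mathbb{Q},{\leq})$. You have simply spelled out the verification (including the density case analysis, which is handled correctly), so no further changes are needed.
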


In the following part of the paper we will use the following set.
\begin{equation}
\IFinx \eqdef \{X\subseteq\Tt\mid \text{$X$ contains an~infinite ${\inxeq}$\=/descending chain}\}.
\end{equation}

\begin{lemma}
The set $\IFinx$ is $\asigma{1}$\=/complete.
\end{lemma}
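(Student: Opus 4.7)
The plan splits into the upper bound $\IFinx \in \asigma{1}$ and $\asigma{1}$-hardness. The upper bound is immediate from the definition: $X \in \IFinx$ iff there exists a sequence $(\verA_n)_{n \in \w} \in \Tt^\w$ with $\verA_n \in X$ and $\verA_{n+1} \inx \verA_n$ for every $n$; since $\Tt$ is countable, $\Tt^\w$ is Polish, the matrix condition on $\Tt^\w \times 2^\Tt$ is closed, and its existential projection is analytic.

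For hardness, I would continuously reduce the classical $\asigma{1}$-complete set of ill-founded trees on $\N$ into $\IFinx$, using the \emph{Kleene-Brouwer} order on $\N^\star$: set $\sigma <_{\mathrm{KB}} \tau$ iff $\sigma$ properly extends $\tau$, or else $\sigma,\tau$ are $\sqsubseteq$-incomparable and at the first coordinate of disagreement $\sigma$ is numerically smaller than $\tau$. This is a linear order, and the standard Kleene-Brouwer lemma asserts that a tree $T \subseteq \N^\star$ has an infinite $\sqsubseteq$-chain iff $(T, <_{\mathrm{KB}})$ is ill-founded. By the preceding fact $(\Tt, \inxeq) \cong (\mathbb{Q}, \leq)$, and Cantor's universality theorem embeds every countable linear order into $(\mathbb{Q}, \leq)$; I fix such a strict order-embedding $e \colon (\N^\star, <_{\mathrm{KB}}) \hookrightarrow (\Tt, \inxeq)$, and define $\Phi \colon 2^{\N^\star} \to 2^\Tt$ by $\Phi(T) = \{e(\sigma) : \sigma \in T\}$; injectivity of $e$ makes $\Phi$ continuous. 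For $T$ a tree, an infinite branch $\sigma_0 \sqsubsetneq \sigma_1 \sqsubsetneq \cdots$ produces the infix-descending chain $e(\sigma_0) \inxg e(\sigma_1) \inxg \cdots$ in $\Phi(T)$, and conversely an infix-descending chain in $\Phi(T)$ pulls back through $e^{-1}$ to a $<_{\mathrm{KB}}$-descending chain in $T$, which by the Kleene-Brouwer lemma yields an infinite branch.

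The main obstacle is the bookkeeping between spaces: the paper's definition of $\asigma{1}$-completeness requires a continuous reduction from an arbitrary analytic $E \subseteq Y^\w$, whereas ill-founded trees on $\N$ naturally live in $2^{\N^\star}$. This is handled by fixing any bijection between $\N^\star$ and $\w$, which identifies $2^{\N^\star}$ with $\{0,1\}^\w$, and then composing this identification with $\Phi$ and with any continuous reduction from $E$ into the ill-founded trees (provided by the classical completeness result) to produce the required continuous reduction into $\IFinx \subseteq 2^\Tt \cong \{0,1\}^\w$. The ingredients invoked --- Kleene-Brouwer, Cantor's universality, and the classical completeness of ill-founded trees on $\N$ --- are all standard.
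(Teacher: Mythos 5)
Your proof is correct, but the hardness argument takes a genuinely different route from the paper's. The paper reduces from the set of ill\-/founded linear orders on $\w$ (Lusin--Sierpi\'nski) and, because the order $\orrA$ is the \emph{varying input}, it must build the order\-/embedding of $(\w,\orrA)$ into $(\Tt,{\inxeq})$ on the fly, level by level, arranging that the $n$\-/th node depends only on $\orrA\cap\{0,\ldots,n\}^2$ so that $\orrA\mapsto X_\orrA$ is continuous. You instead reduce from ill\-/founded trees on $\N$: the linear order you embed, namely $(\N^\star,<_{\mathrm{KB}})$, is \emph{fixed}, so a single Cantor embedding $e$ into $(\Tt,{\inxeq})$ can be chosen once and for all, and the reduction $T\mapsto e[T]$ is continuous for the trivial reason that membership of a node in $e[T]$ depends on a single coordinate of $T$. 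The price is that you must invoke the Kleene--Brouwer lemma to transfer ill\-/foundedness of $(T,<_{\mathrm{KB}})$ back to the existence of an infinite branch (which requires, as you implicitly use, that the inputs are genuine trees --- harmless, since the classical reduction into the ill\-/founded trees outputs trees). Both arguments lean on a classical completeness result of comparable weight; yours trades the paper's explicit inductive construction for an extra standard lemma and is arguably the cleaner of the two. The upper\-/bound half and the bookkeeping between $2^{\N^\star}$, $\{0,1\}^\w$ and $2^\Tt$ are handled adequately.
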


\begin{proof}
$\IFinx$ belongs to $\asigma{1}$ just by the form of the definition.

We prove $\asigma{1}$\=/hardness by a~reduction from the set of ill\=/founded linear orders on $\w$ (seen as elements of $\{0,1\}^{\w\times\w}$). Let us prove this fact more formally. Consider an~element $\orrA\in \{0,1\}^{\w\times\w}$ that is a~linear order on $\w$. The latter set is $\asigma{1}$\=/complete by a~theorem by Lusin and Sierpi{\'n}ski~\cite[Theorem~27.12]{Kechris94}. We will inductively define $X_{\orrA}\subseteq\Tt$ in such a~way to ensure that $\orrA\mapsto X_{\orrA}$ is a~continuous mapping and $\orrA$ is ill\=/founded if and only if $X_{\orrA}\in\IFinx$.

Let us proceed inductively, defining a~sequence of nodes $(x_n)_{n\in\w}\subseteq\Tt$. Our invariant says that $|x_k|=k$ and the map $k\mapsto x_k$ is an~isomorphism of the orders $\big(\{0,\ldots,n\}, \orrA\big)$ and $\big(\{x_0,\ldots,x_n\},\allowbreak{\inxeq}\big)$. We start with $x_0=\epsilon$ (i.e.~the root of $\Tt$). Assume that $x_0, \ldots, x_n$ are defined and satisfy the invariants. By the definition of ${\inxeq}$, there exists a~node $x\in\{\dL,\dR\}^{n+1}$ such that for $k=0,1,\ldots,n$ we have $x\inxeq x_k$ if and only if $(n{+}1,k)\in o$. Let $x_{n+1}$ be such a~node.

The above induction defines an~infinite sequence of nodes $x_0,x_1,\ldots$ Let $X_{\orrA}\eqdef\{x_n\mid n\in\w\}\subseteq\Tt$. By the definition of $X_{\orrA}$, the mapping $\orrA\mapsto X_{\orrA}$ is continuous --- the fact whether a~node $x\in\Tt$ belongs to $X_{\orrA}$ depends only on $\orrA\cap\{0,1,\ldots,|x|\}^2$. Using our invariant, we know that the map $k\mapsto x_k$ is an~isomorphism of the orders $\big(\w, \orrA\big)$ and $\big(X_{\orrA}, {\inxeq}\big)$. Thus, $\orrA$ is ill\=/founded if and only if $X_{\orrA}\in \IFinx$.
\end{proof}

To construct our continuous reduction in the one\=/counter case, we need the following simple lemma that provides an~alternative characterisation of the set~$\IFinx$. Let us introduce the following definition.

\begin{definition}
A sequence $\verA_0,\verA_1,\ldots\in \Tt$ is called a~\emph{correct chain} if $\verA_0=\epsilon$ and for every $n=0,1,\ldots$:
\begin{enumerate}
\itemsep=0.9pt
\item $|\verA_{n+1}|=|\verA_n|+1$,
\label{it:cor-ch-len}
\item $\verA_{n+1}\inxeq \verA_n \dR$ (or equivalently $\verA_{n+1}\lexeq \verA_n \dR$).
\label{it:cor-ch-ord}
\end{enumerate}

A correct chain is \emph{witnessing} for a~set $X\subseteq\Tt$ if for infinitely many $n$ we have $\verA_n \in X$ and $\verA_{n+1}\inxeq \verA_n \dL$.
\end{definition}

Intuitively, the definition forces the sequence to be \emph{not so\=/much increasing} in the infix order ${\inxeq}$: the successive element $\verA_{n+1}$ needs to be \emph{to the left} in the tree from $\verA_n\dR$. Such a~sequence is \emph{witnessing} for a set $X$ if infinitely many times it belongs to $X$ and at these moments it actually drops in ${\inxeq}$.

\begin{lemma}
A set $X\subseteq \Tt$ belongs to $\IFinx$ if and only if there exists a~correct chain witnessing for~$X$.
\end{lemma}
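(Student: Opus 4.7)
The plan is to prove both directions separately, dealing with the more delicate ($\Leftarrow$) implication first.

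For ($\Leftarrow$), suppose $\verA_0,\verA_1,\ldots$ is a correct chain witnessing for $X$, and enumerate the witness indices as $n_0 < n_1 < \cdots$, that is, those $n$ satisfying $\verA_n \in X$ and $\verA_{n+1} \inxeq \verA_n \dL$. I want to show that $\verA_{n_0},\verA_{n_1},\ldots$ is an infinite strictly $\inxeq$-descending chain in $X$, giving $X \in \IFinx$. Membership in $X$ is immediate, so only strictness needs work. I claim that as soon as the chain performs a drop $\verA_{N+1} \inxeq \verA_N \dL$ at some step $N$, every later node stays strictly below $\verA_N$: $\verA_m \inx \verA_N$ for all $m > N$. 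I will prove this by comparing $\verA_m$ to the envelope $\finB_m \eqdef \verA_N \dL \dR^{m-N-1}$, which has length exactly $m$, by induction on $m$. The base case $m=N+1$ is the drop assumption. For the inductive step, correctness gives $\verA_{m+1} \inxeq \verA_m \dR$; since $\verA_m$ and $\finB_m$ share the common length $m$, on them $\inxeq$ coincides with $\lexeq$ and is preserved by right-concatenation, so the inductive hypothesis $\verA_m \lexeq \finB_m$ yields $\verA_m \dR \lexeq \finB_m \dR = \finB_{m+1}$, hence $\verA_{m+1} \inxeq \finB_{m+1}$. Finally, $\finB_m \inx \verA_N$ because $\finB_m \dM^\w$ and $\verA_N \dM^\w$ first disagree at position $|\verA_N|+1$, where $\dL \lex \dM$. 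Applying this with $N=n_k$ and $m=n_{k+1}$ gives $\verA_{n_{k+1}} \inx \verA_{n_k}$.

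For ($\Rightarrow$), let $x_0, x_1, \ldots \in X$ be an infinite $\inxeq$-descending chain. Each length class of $\Tt$ is finite, so the chain has unbounded lengths and, after passing to a subsequence, I may assume $|x_0| < |x_1| < \cdots$. Set $n_k \eqdef |x_k|$ and construct the correct chain by interpolation: for $0 \leq j \leq n_0$ let $\verA_j$ be the length-$j$ prefix of $x_0$, and for each $k$ and $n_k < j \leq n_{k+1}$ let $\verA_j$ be the length-$j$ prefix of $x_{k+1}$. Then $\verA_0 = \epsilon$, $|\verA_j| = j$, and $\verA_{n_k} = x_k \in X$. Within each interval consecutive nodes differ by a single appended direction, so $\verA_{j+1} \inxeq \verA_j \dR$ trivially. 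The sole nontrivial condition is the drop at $n_k$: the length-$(n_k{+}1)$ prefix of $x_{k+1}$ must be $\lexeq x_k \dL$. A case split on $x_{k+1} \inx x_k$ via the longest common prefix (using $|x_{k+1}| > |x_k|$ to exclude the case where $x_k$ is a proper prefix of $x_{k+1}$ starting with $\dR$) yields either $x_{k+1} = x_k \dL s$, whence the prefix equals $x_k \dL$, or $x_{k+1} = u \dL s$ and $x_k = u \dR t$, whence the prefix starts $u \dL \ldots$ and differs from $x_k \dL = u \dR t \dL$ already at position $|u|+1$ with $\dL \lex \dR$. In both cases the prefix is $\lexeq x_k \dL$, so the constructed chain witnesses for $X$.

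The main obstacle is the strict-descent argument in ($\Leftarrow$): between two witness indices the chain can drift arbitrarily far to the right, and $\inxeq$ is not in general preserved by right-concatenation (for instance $\dR \inx \dR\dR\dL$ but $\dR\dR \inxg \dR\dR\dL\dR$). The envelope trick succeeds precisely because the words $\verA_m$ and $\finB_m$ compared in the induction share a common length $m$, on which $\inxeq$ collapses to $\lexeq$ and is stable under appending a single symbol.
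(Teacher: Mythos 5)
Your proof is correct and follows essentially the same route as the paper's: for one direction you extract the subsequence at the witness indices and show it is strictly $\inxeq$-descending by bounding the chain against $\verA_N\dL\dR^{m-N-1}$ (the paper compresses this into the single inequality $\verB_{n+1}\dM^\w\lexeq\verB_n\dL\dR^\w$), and for the other you pass to a length-increasing subsequence and interpolate by prefixes exactly as the paper's pseudo-code does. The extra care you take with the envelope induction and the longest-common-prefix case split just makes explicit what the paper leaves to the reader.
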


\begin{proof}
First take a~correct chain witnessing for $X$. Let $\verB_0,\verB_1,\ldots$ be the subsequence that shows that $(\verA_n)_{n\in\w}$ is witnessing for $X$. In that case, by the definition, for all $n$ we have $\verB_n\in X$ and $\verB_{n+1}\inx \verB_n$ (because $\verB_{n+1}\dM^\w\lexeq \verB_n\dL\dR^\w\lex \verB_n\dM^\w$). Thus, $X$ has an~infinite ${\inxeq}$\=/descending chain and belongs to~$\IFinx$.

\medskip
Now assume that $X\in\IFinx$ and $\verB_0\inxg\verB_1\inxg\verB_2\inxg\ldots$ is a~sequence witnessing that. Without loss of generality we can assume that $|\verB_{n+1}|>|\verB_n|$ because for each fixed depth $k$ there are only finitely many nodes of $\Tt$ in $\{\dL,\dR\}^{\leq k}$. We can now add intermediate nodes in\=/between the sequence $(\verB_n)_{n\in\w}$ to construct a~correct chain witnessing for $X$; the following pseudo\=/code realises this goal:\smallskip

\begin{minipage}{0.8\linewidth}
\begin{lstlisting}
$n$ := $0$;
$i$ := $0$;
while (true) {
  if ($n$ > $|\verB_i|$) {
    $i$ := $i+1$;
  }

  $\verA_n$ := $\verB_i[n]$;
  $n$ := $n+1$;
}
\end{lstlisting}
\end{minipage}\smallskip

Clearly, Property~\ref{it:cor-ch-len} in the definition of a~correct chain is guaranteed. Let $i\in\w$ and $n=|\verB_i|$. By the fact that $\verB_{i+1}\inx \verB_i$ we know that $\verB_{i+1}[n+1]\inxeq \verB_i \dL$. Therefore, for every $n\in\w$ we have $\verA_{n+1}\inxeq \verA_{n} \dR$ and if $n=|\verB_i|$ for some $i$ then $\verA_{n+1}\inxeq \verA_{n} \dL$. It implies that the sequence $(\verA_n)_{n\in\w}$ satisfies Property~\ref{it:cor-ch-ord} in the definition of a~correct chain. It is clearly witnessing for $X$ because it contains $(\verB_n)_{n\in\w}$ as a~subsequence.
\end{proof}

\subsection{Hardness for one counter}
\label{ssec:hard-one}

We are now ready to provide a definition of a $1$-blind counter B\"uchi automaton $\Aa_1$ recognising $\asigma{1}$-complete $\omega$-language. The automaton $\Aa_1$ is depicted in Figure~\ref{fig:one-vass}, with the convention that an edge $q\trans{a:j}q'$ represents a transition from the state $q$ to $q'$, over the letter $a$; that modifies the unique counter value by $j$, i.e.~$a\colon (q,c_1)\mapsto_{\Aa_1} (q',c_1+j)$. Moreover, the state $q_0$ is initial and $q_a$ is the only accepting state.

Let $\Sigma_0\eqdef\{{<}, d, {|}, i, {+}, {-}, {>}\}$ and consider the alphabet $\Sigma\eqdef \Sigma_0\cup\{\sharp\}$.

\medskip
An $\w$\=/word accepted by $\Aa_1$ consists of infinitely many \emph{phases} separated by $\sharp$. Each phase is a~finite word over the alphabet $\Sigma_0$. In our reduction we will restrict to phases being sequences of \emph{blocks}, each block being a~finite word of the form given by the following definition (for $n,m\in\w$ and $s\in\{{+},{-}\}$):
\begin{equation}
B^s(-n,+m)\eqdef {<}\ d^n\ {|}\ i^m\ s\ {>}\ \in A_0^\ast.
\label{eq:block-one}
\end{equation}

Such a~block is \emph{accepting} if $s={+}$, otherwise $s={-}$ and the block is \emph{rejecting}. If $\Aa_1$ starts reading a~block and moves from $q_0$ to $q_1$ over ${<}$ then we say that it \emph{chooses} this block. Otherwise $\Aa_1$ stays in $q_0$ and it does not \emph{choose} the given block. By the construction of the automaton $\Aa_1$, in every run it needs to choose exactly one block from each phase. Additionally, the run is accepting if and only if infinitely many of the chosen blocks are accepting.

\newcommand{\Myscale}{0.85}
\newcommand{\Trscale}{1.0}

\begin{figure}[!t]
\vspace*{-2mm}
\centering
\scalebox{0.9}{
\begin{tikzpicture}[->,>=latex,auto,initial text={},scale=\Myscale]
\tikzstyle{trans}=[scale=\Trscale]
\node[state,initial  ] (q0) at (0,0) {$q_0$};
\node[state          ] (q1) at (3,0) {$q_1$};
\node[state          ] (q2) at (6,0) {$q_2$};
\node[state,accepting] (qf) at (9,+1) {$q_a$};
\node[state          ] (qr) at (9,-1) {$q_r$};
\node[state          ] (q3) at (12,0) {$q_3$};

\path (q0) edge [loop, in=60 , out=120, looseness=8] node[trans] {$\Sigma_0$} (q0)
           edge node[trans] {${<}$} (q1)
      (q1) edge [loop, in=60 , out=120, looseness=8] node[trans] {$d:-1$} (q1)
           edge node[trans] {${|}$} (q2)
      (q2) edge [loop, in=60 , out=120, looseness=8] node[trans] {$i:+1$} (q2)
           edge node[above, trans] {${+}$} (qf)
           edge node[below, trans] {${-}$} (qr)
      (qf) edge node[above, trans] {${>}$} (q3)
      (qr) edge node[below, trans] {${>}$} (q3)
      (q3) edge [loop, in=60 , out=120, looseness=8] node[trans] {$\Sigma_0$} (q3)
           edge [in=-90, out=270, looseness=0.5] node[below, trans] {$\sharp$} (q0);

\end{tikzpicture} }\vspace*{-2mm}
\caption{The automaton $\Aa_1$ with $1$-blind counter that recognises a~$\asigma{1}$\=/complete $\w$\=/language.}
\label{fig:one-vass}
\end{figure}

\begin{proposition}
\label{pro:hardness-one}
There exists a~continuous reduction from $\IFinx$ to the $\w$\=/language recognised by $\Aa_1$.
\end{proposition}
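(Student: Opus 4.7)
The plan is to exploit the preceding characterisation of $\IFinx$: $X \in \IFinx$ iff $X$ admits a~witnessing correct chain $(\verA_n)_{n \geq 0}$. I will define $f(X) \in \Sigma^\w$ so that an accepting run of $\Aa_1$ on $f(X)$ amounts to playing out such a chain, with the blind counter tracking the position $\mathrm{pos}(\verA_n) \in \{0, 1, \ldots, 2^n{-}1\}$ of the current chain node and the sign of each chosen block flagging the witness steps.

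Concretely, I will set $f(X) = w_0 \sharp w_1 \sharp w_2 \sharp \cdots$, where phase $w_n$ enumerates, for every pair $(\verA, \verB) \in \{\dL,\dR\}^n \times \{\dL,\dR\}^{n+1}$ with $\verB \inxeq \verA \dR$, a block
\[
B^{s(\verA, \verB)}\bigl(-\mathrm{pos}(\verA),\; +\mathrm{pos}(\verB)\bigr),
\]
with $s(\verA, \verB) = {+}$ iff $\verA \in X$ and $\verB \inxeq \verA \dL$, and $s(\verA, \verB) = {-}$ otherwise. Continuity of~$f$ will be immediate because phase~$n$ depends only on $X \cap \{\dL, \dR\}^n$.

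For the easy $(\Rightarrow)$ direction, given a~witnessing correct chain $(\verA_n)$ I will have $\Aa_1$ pick at phase $n$ the block indexed by $(\verA_n, \verA_{n+1})$, which is available because $\verA_{n+1} \inxeq \verA_n \dR$. A~telescoping argument will yield $c_n = \mathrm{pos}(\verA_n)$ throughout the run, so the block precondition $c_n \geq \mathrm{pos}(\verA_n)$ will be met with equality, and the chosen block will carry $s = {+}$ precisely at witness steps, which are infinite in number by hypothesis.

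The main technical obstacle will be the $(\Leftarrow)$ direction: from an arbitrary accepting run with block choices $(\verB^{(n)}, \verB'^{(n)})$ and accepting indices $n_1 < n_2 < \cdots$ (so $\verB^{(n_k)} \in X$ and $\verB'^{(n_k)} \inxeq \verB^{(n_k)} \dL$), I will need to extract an infinite $\inx$\=/descending subsequence of $(\verB^{(n_k)})_k$ inside $X$, which by the preceding lemma yields $X \in \IFinx$. The subtle point is that the run need not be \emph{consistent} in the sense $\verB^{(n+1)} = \verB'^{(n)}$, so the counter may accumulate an excess $\epsilon_n := c_n - \mathrm{pos}(\verB^{(n)}) \geq 0$. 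My argument will track this excess through the recurrence $\epsilon_{n+1} = \epsilon_n + \mathrm{pos}(\verB'^{(n)}) - \mathrm{pos}(\verB^{(n+1)})$ and combine the strict accepting-step bound $\mathrm{pos}(\verB'^{(n_k)}) \leq 2\,\mathrm{pos}(\verB^{(n_k)})$ with the general block-validity bound $\mathrm{pos}(\verB'^{(n)}) \leq 2\,\mathrm{pos}(\verB^{(n)}) + 1$ to show that any $\inx$\=/ascending jump between two successive accepting steps costs a strictly positive quantity of excess that the accepting-step transitions cannot fully replenish; consequently only finitely many ascending jumps among $(\verB^{(n_k)})_k$ are possible, and the tail of this sequence will form the required $\inx$\=/descending chain in $X$.
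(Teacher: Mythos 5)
Your encoding of nodes by their raw positions $\mathrm{pos}(\verA)=b(\verA)\in\{0,\ldots,2^n{-}1\}$ does not survive the converse direction, and the gap is not merely a missing estimate: the map $f$ you define is not a reduction. The excess $\epsilon_n=c_n-\mathrm{pos}(\verB^{(n)})$ is \emph{not} consumed by ascending jumps, because the position scale doubles from one level to the next, so an excess of a single unit already suffices to move one position to the right at the next level, and the $+1$ slack of each rejecting block $(-p,+2p{+}1)$ exactly replenishes what an accepting block $(-p,+2p)$ gives up. Concretely, take $X=\{(\dR\dL)^k\mid k\geq 1\}$. This is a strictly $\inx$-increasing $\w$-chain, so $X\notin\IFinx$. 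Yet $\Aa_1$ accepts $f(X)$: choose $(\epsilon,\dR)$ and $(\dR,\dR\dR)$ in phases $0,1$, reaching $c_2=3$; in phase $2$ choose the accepting block $(\dR\dL,\dR\dL\dL)$, i.e.\ $(-2,+4)$, banking excess $1$ and reaching $c_3=5$; in phase $3$ choose $(\dR\dL\dR,\dR\dL\dR\dR)$, i.e.\ $(-5,+11)$; in phase $4$ choose the accepting block $((\dR\dL)^2,(\dR\dL)^2\dL)$, i.e.\ $(-10,+20)$; and so on, alternating a rejecting block $(-p,+2p{+}1)$ at odd phases with an accepting block $(-p_{2k},+2p_{2k})$ at even phases, where $p_{2k+2}=4p_{2k}+2$ and $c_{2k}=p_{2k}+1$. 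All counter constraints hold, infinitely many accepting blocks are chosen (all of them parent--child blocks, so the counterexample persists even if you restrict your block set to children), and the accepting nodes $(\dR\dL)^k$ form an $\inx$-\emph{increasing} sequence.

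This is precisely what the paper's fast-growing scaling is for: it encodes $\verA$ at level $n$ by $e(\verA)=m(n{-}1)\cdot b(\verA)$ with $m(n)=m(n{-}1)\cdot 2^n$, so that one unit of position at level $n{+}1$ costs $m(n)$ on the counter, while the invariant $e(\verA)+m(|\verA|{-}1)\leq m(|\verA|)$ forces $c_n<m(n)$ by induction; the total accumulated excess is therefore always strictly below one unit of the next level's granularity, and dividing by $m(n)$ yields the exact inequality $b(\verA_{n+1})\leq b(\verA_n d_n)$, hence $\verA_{n+1}\inxeq\verA_n\dR$, with no upward drift. Your forward direction, the continuity argument, and the use of the correct-chain characterisation are all fine; what must change is the counter encoding itself.
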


We will take a~set $X\subseteq \Tt$ and construct an~$\w$\=/word $\infA(X)$. This $\w$\=/word will be a~concatenation of infinitely many phases $\finA_0\sharp\finA_1\sharp\cdots$. The $n$\=/th phase $\finA_n$ will depend on $X\cap\{\dL,\dR\}^n$. The configurations $(q_0,c)$ reached at the beginning of an~$n$\=/th phase will be in correspondence with the nodes $\verA\in\{\dL,\dR\}^n$. The bigger the value $c$, the higher in the infix order (or the lexicographic order, as they overlap here) the respective node $\verA$ is.

\medskip
To precisely define our $\w$\=/word $\infA(X)$ we need to define some auxiliary functions. First, we define inductively the function $b\colon \Tt\to \w$, assigning to nodes $\verA\in\Tt$ their binary value $b(\verA)$:
\begin{itemize}
\itemsep=0.9pt
\item $b(\epsilon)=0$,
\item $b(\verA \dL) = 2\cdot b(\verA)$,
\item $b(\verA \dR) = 2\cdot b(\verA)+1$.
\end{itemize}

\noindent Note that for every $n\in\w$ we have
\[b\big(\{\dL,\dR\}^n\big)=\{0,1,\ldots,2^n{-}1\},\]
and the function is bijective between these sets.

\medskip
Now we can define fast\=/growing functions: $m\colon \{-1\}\cup\omega\to \w$ and $e\colon \Tt\to \w$:
\begin{align*}
m(-1)&= 1,\\
m(n)&= m(n-1)\cdot 2^n&\text{for $n\in\w$,}\\
e(\verA) &= m(|\verA|-1)\cdot b(\verA) &\text{for $\verA\in\Tt$.}
\end{align*}
These functions allow to use a~big range of the possible values of a~single counter of the automaton to represent particular nodes of the tree. We will use the following two invariants of this definition, for $n\in\w$ and $\verA,\verA'\in\{\dL,\dR\}^n$:\vspace*{-1mm}
\begin{align}
\verA\inx \verA' &\Longleftrightarrow e(\verA)\leq e(\verA'),\label{eq:order}\\
e(\verA) + m(|\verA|-1) &\ \,\leq\ \,m(|\verA|).
\label{eq:upper}
\end{align}

We take any $n=0,1,\ldots$ and define the $n$\=/th phase $\finA_n$. Let $\finA_n$ be the concatenation of the following blocks, for all $\verA\in\{\dL,\dR\}^n$ and $d\in\{\dL,\dR\}$:
\[B^s\big({-}e(\verA), {+}e(\verA d)\big),\]
where $s={+}$ if $\verA\in X$ and $d=\dL$; otherwise $s={-}$. Thus, the $n$\=/th phase is a~concatenation of $2^{n+1}$ blocks, one for each node $\verA d$ in $\{\dL,\dR\}^{n+1}$.

\smallskip
To conclude the proof of Proposition~\ref{pro:hardness-one} it is enough to prove the following two lemmas.

\begin{lemma}
\label{lem:chain-to-run}
If there exists a~correct chain witnessing for $X$ then $\infA(X)\in\lang(\Aa_1)$.
\end{lemma}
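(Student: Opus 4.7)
The plan is to construct an accepting run of $\Aa_1$ on $\infA(X)$ by specifying, for each phase $\finA_n$, exactly which block the non\=/deterministic automaton chooses. The construction is guided by the given correct chain $\verA_0, \verA_1, \ldots$ witnessing for $X$, and it maintains the following invariant: at the moment the automaton enters state $q_0$ just before reading phase $\finA_n$, the counter value $c$ satisfies $c\geq e(\verA_n)$. Since $e(\verA_0) = e(\epsilon) = 0$ and the counter starts at $0$, the invariant holds at the very beginning.

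Now suppose the invariant is satisfied at the start of phase $n$, with counter value $c \geq e(\verA_n)$. The strategy distinguishes two cases. First, when $\verA_n \in X$ and $\verA_{n+1} \inxeq \verA_n\dL$ (a \emph{good} step in the sense of the witnessing property), we choose the block $B^{+}\!\bigl({-}e(\verA_n),{+}e(\verA_n\dL)\bigr)$, which by construction belongs to $\finA_n$ and is accepting, thereby visiting the state $q_a$. Otherwise, let $p$ be the length\=/$n$ prefix of $\verA_{n+1}$ and let $d\in\{\dL,\dR\}$ be the last letter of $\verA_{n+1}$, so that $\verA_{n+1}=p\cdot d$; in this case we choose the (possibly rejecting) block $B^{s}\!\bigl({-}e(p), {+}e(\verA_{n+1})\bigr)$. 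In all other blocks of the phase the automaton stays in $q_0$, and the closing letter $\sharp$ moves it from $q_3$ back to $q_0$ for the next phase.

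Two things must be checked in each case: that the block can actually be executed (the counter never underflows during the $d^{e(\cdot)}$ segment), and that the invariant is restored at the next phase. In the good case the decrement by $e(\verA_n)$ is legal because $c\geq e(\verA_n)$, and the counter afterwards equals $c - e(\verA_n) + e(\verA_n\dL) \geq e(\verA_n\dL) \geq e(\verA_{n+1})$, the last inequality coming from $\verA_{n+1}\inxeq\verA_n\dL$ together with~(\ref{eq:order}). In the second case the crucial point is that $p\inxeq \verA_n$, hence $e(p)\leq e(\verA_n)\leq c$ by~(\ref{eq:order}), so the decrement by $e(p)$ is legal; afterwards the counter equals $c - e(p) + e(\verA_{n+1})\geq e(\verA_{n+1})$, restoring the invariant.

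The witnessing assumption ensures that the good case occurs for infinitely many $n$, so the accepting state $q_a$ is visited infinitely often along the constructed run, which gives $\infA(X)\in\lang(\Aa_1)$. The only non\=/routine step is the comparison showing $p\inxeq\verA_n$ in the non\=/good case: it rests on the correct chain property $\verA_{n+1}\inxeq\verA_n\dR$ and is established by a short case analysis on whether $d=\dL$ or $d=\dR$, using that on nodes of equal length the orders $\inxeq$ and $\lexeq$ coincide (if $p\lexg\verA_n$, at the first position $k<n$ where $p$ and $\verA_n$ differ one would have $p[k]=\dR$ and $\verA_n[k]=\dL$, forcing $\verA_{n+1}=p\cdot d\inxg \verA_n\dR$ in both subcases, a contradiction). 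Everything else in the argument is counter bookkeeping driven by the monotonicity~(\ref{eq:order}).
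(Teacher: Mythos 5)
Your proof is correct and follows essentially the same route as the paper's: the same invariant $c_n\geq e(\verA_n)$ on the counter at the start of each phase, the same choice of the accepting block $B^{+}\bigl({-}e(\verA_n),{+}e(\verA_n\dL)\bigr)$ at the witnessing steps, and the same appeal to~(\ref{eq:order}) to restore the invariant. The only deviation is in the non-witnessing case, where you choose the block indexed by the length-$n$ prefix $p$ of $\verA_{n+1}$ instead of the paper's simpler choice $B^{-}\bigl({-}e(\verA_n),{+}e(\verA_n\dR)\bigr)$; your additional argument that $p\inxeq\verA_n$ (so the decrement by $e(p)$ is legal) is valid, but the extra case analysis it requires is avoided entirely by the paper's choice.
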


\begin{proof}
Consider a~correct chain $(\verA_n)_{n\in\w}$ witnessing for $X$. Assume that $I\subseteq \w$ is an~infinite set such that for $n\in I$ we have $\verA_n\in X$ and $\verA_{n+1}\inxeq \verA_n \dL$. Let us construct inductively a~run $\runA$ of $\Aa_1$ on $\infA(X)$. The invariant is that for each $n\in\w$ the configuration of $\runA$ before reading the $n$\=/th phase of $\infA(X)$ is of the form $(q_0,c_n)$ with $c_n\geq e(\verA_n)$. To define $\runA$ it is enough to decide which block to choose from an~$n$\=/th phase of $\infA(X)$:
\begin{itemize}
\item if $n\in I$ then choose the block $B^{+}\big({-}e(\verA_n), {+}e(\verA_n \dL)\big)$,
\item otherwise choose the block $B^{-}\big({-}e(\verA_n), {+}e(\verA_n \dR)\big)$.
\end{itemize}
Notice that by the invariant, it is allowed to choose the respective blocks as $c_n\geq e(\verA_n)$. Because of~\eqref{eq:order} and the fact that $(\verA_n)_{n\in\w}$ is a~correct chain, the invariant is preserved. As the set $I$ is infinite, the constructed run chooses an~accepting block infinitely many times and thus is accepting.
\end{proof}

\begin{lemma}
\label{lem:run-to-chain}
If $\infA(X)\in\lang(\Aa_1)$ then there exists a~correct chain witnessing for $X$.
\end{lemma}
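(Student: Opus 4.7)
Given an accepting run $\runA$ of $\Aa_1$ on $\infA(X)$, I will extract a correct chain $(\verA_n)_{n\in\w}$ witnessing for $X$. The transition structure of $\Aa_1$, together with the fact that $\sharp\notin\Sigma_0$ and that the only $\sharp$\=/transition leaves $q_3$, forces the run to complete exactly one block inside each phase: to pass a $\sharp$ the automaton must be in $q_3$, and the only way to reach $q_3$ from $q_0$ is through the path $q_0\to q_1\to q_2\to q_a/q_r\to q_3$. Record this by letting the block chosen in phase $n$ correspond to the pair $(\verA^{(n)}, d^{(n)})$ with $\verA^{(n)}\in\{\dL,\dR\}^n$ and $d^{(n)}\in\{\dL,\dR\}$, and set $\verA_n \eqdef \verA^{(n)}$. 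Then $\verA_0 = \epsilon$ and property~\ref{it:cor-ch-len} of the definition of a~correct chain is automatic.

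The core of the argument is the counter invariant
\[ c_n - e(\verA^{(n)}) \;<\; m(n), \qquad n\in\w, \]
where $c_n$ denotes the counter value at the start of phase $n$. The base case $c_0 = e(\epsilon) = 0 < m(0) = 1$ is immediate. For the inductive step, since the $d$\=/loop in $q_1$ decrements the blind counter, processing the chosen block of phase $n$ requires $c_n \geq e(\verA^{(n)})$ and results in $c_{n+1} = c_n - e(\verA^{(n)}) + e(\verA^{(n)}d^{(n)})$. Using the identity $e(\verA) = m(n)\cdot b(\verA)$ for every node of depth~$n{+}1$, feasibility of phase $n{+}1$, namely $c_{n+1} \geq e(\verA^{(n+1)})$, gives $b(\verA^{(n+1)}) \leq b(\verA^{(n)}d^{(n)})$ after dividing by $m(n)$, i.e.~$\verA_{n+1} \lexeq \verA^{(n)}d^{(n)} \lexeq \verA_n \dR$. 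Since $\lexeq$ and $\inxeq$ agree on nodes of equal length, this is exactly property~\ref{it:cor-ch-ord}. The bound $e(\verA^{(n)}d^{(n)}) \leq m(n)(2^{n+1}-1)$ then propagates the invariant to $n{+}1$, as $m(n) + m(n)(2^{n+1}-1) = m(n+1)$.

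Acceptance of $\runA$ says that $q_a$ is visited infinitely often, and $q_a$ is entered precisely when the chosen block is of the form $B^{+}$, i.e.~when $\verA_n \in X$ and $d^{(n)} = \dL$. For such $n$ the inequality proved above specialises to $\verA_{n+1} \lexeq \verA_n \dL$, giving the witnessing property. The delicate point in the whole argument is the counter invariant itself: because the counter is blind, the run may legitimately carry arbitrarily much leftover weight between phases, and the fast\=/growing function $m$ is engineered precisely so that this leftover stays strictly below the gap $m(n)$ between neighbouring admissible $e$\=/values at depth $n{+}1$, which is what prevents the residual slack from pushing $\verA^{(n+1)}$ past $\verA^{(n)} d^{(n)}$ in the lexicographic order.
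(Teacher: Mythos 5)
Your proof is correct and follows essentially the same route as the paper's: extract the chosen block $(\verA_n,d_n)$ from each phase, use counter non\=/negativity to get $e(\verA_n)\leq c_n$, establish the growth invariant (your $c_n-e(\verA_n)<m(n)$ is just a reformulation of the paper's $c_n<m(n)$, and your propagation step in fact re\=/derives the latter), divide by $m(n)$ to conclude $b(\verA_{n+1})\leq b(\verA_n d_n)$, and read off the witnessing property from the infinitely many visits to $q_a$. No gaps.
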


\begin{proof}
Assume that $\runA$ is an~accepting run of $\Aa_1$ over $\infA(X)$. For $n=0,1,\ldots$ let $(q_0,c_n)$ be the configuration in $\runA$ before reading the $n$\=/th phase of $\infA(X)$ and assume that $\runA$ chooses a~block of the form $B^{s_n}\big({-}e(\verA_n), {+}e(\verA_n d_n)\big)$ in the $n$\=/th phase of $\infA(X)$. Our aim is to show that $(\verA_n)_{n\in\w}$ is a~correct chain witnessing for $X$. First notice that by the construction of $\infA(X)$ we have $|\verA_n|=n$.

\medskip
Clearly, as the counter needs to be non\=/negative, we have $e(\verA_n)\leq c_n$. Notice that by~\eqref{eq:upper} we obtain inductively for $n=0,1,\ldots$ that $c_n< m(n)$. Therefore, we have
\begin{align*}
m(n)\cdot b(\verA_{n+1}) = e(\verA_{n+1}) &\leq c_{n+1} =\\
= c_n - e(\verA_n) + e(\verA_n d_n) &< m(n)+e(\verA_nd_n) =\\
&=m(n)+m(n)\cdot b(\verA_n d_n).
\end{align*}
By dividing by $m(n)$ we obtain $b(\verA_{n+1})< 1+b(\verA_nd_n)$, thus $b(\verA_{n+1})\leq b(\verA_nd_n)$ and therefore $\verA_{n+1}\inxeq \verA_nd_n\inxeq \verA_n\dR$. Moreover, if $s_n={+}$ (i.e.~the $n$\=/th chosen block is accepting) then $\verA_n\in X$ and $d_n=\dL$. Therefore, as $\runA$ chooses infinitely many accepting blocks, $(\verA_n)_{n\in\w}$ is witnessing for $X$.
\end{proof}

This concludes the proof of Proposition~\ref{pro:hardness-one}. Thus the $\om$-language of $\Aa_1$ is indeed $\asigma{1}$-hard and therefore $\asigma{1}$\=/complete.

\subsection{Undecidable properties of $1$-blind counter $\om$-languages}
From the above  result we can now easily infer the following undecidability result.

\begin{theorem}\label{undec-borel}
It is undecidable to determine whether the $\om$-language of a  given $1$-blind counter automaton is Borel (respectively, in the Borel class $\bsigma{\alpha}$, in the Borel class
$\bpi{\alpha}$, for a given ordinal $\alpha$).
\end{theorem}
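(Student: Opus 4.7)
The plan is to reduce from a known undecidable problem to the question of Borel-ness of the $\om$-language, simultaneously handling all three parts of the statement (Borel, $\bsigma{\alpha}$, $\bpi{\alpha}$) by arranging the reduction so that the constructed automaton's language is either extremely simple (empty, hence in every Borel class) or $\asigma{1}$-complete (non-Borel, hence outside every class $\bsigma{\alpha}$, $\bpi{\alpha}$). The two essential ingredients are the automaton $\Aa_1$ from Proposition~\ref{pro:hardness-one}, whose $\omega$-language is $\asigma{1}$-complete and therefore non-Borel, and trivial $1$-blind counter B\"uchi automata with $\om$-language $\emptyset$.

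The construction I would use takes as input a Turing machine $M$ (or any instance of a fixed undecidable problem) and produces, effectively, a $1$-blind counter B\"uchi automaton $\mathcal{C}_M$ such that $\lang(\mathcal{C}_M) = \emptyset$ when $M$ halts and $\lang(\mathcal{C}_M)$ is Wadge-equivalent to $\lang(\Aa_1)$ when $M$ does not halt (or vice-versa). One natural shape for $\mathcal{C}_M$ is to read words of the form $u \cdot v$, where the prefix $u$ is a candidate ``certificate'' drawn from a certain language $L_M$ designed so that $L_M \neq \emptyset$ iff $M$ halts, and $v$ is then fed into the $\Aa_1$ simulation; then $\lang(\mathcal{C}_M) = L_M \cdot \lang(\Aa_1)$, which is either empty or topologically equivalent to $\lang(\Aa_1)$. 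Since Borel-ness (resp.~membership in $\bsigma{\alpha}$, $\bpi{\alpha}$) of $\lang(\mathcal{C}_M)$ therefore decides the undecidable problem, the three claims follow simultaneously.

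The main obstacle is clearly the construction of the certificate-checking component using only a single blind counter: 1-blind counter B\"uchi automata are a weak model and cannot simulate arbitrary Turing machines. Hence the reduction cannot go directly from the halting problem in its raw form. The natural resolution is to pick the undecidable problem one reduces from more carefully -- for instance, an undecidable property that is already known for the class {\bf r}-${\bf BCL}(1)_\om$ itself, or an undecidable property of Petri nets that can be encoded by the concatenation trick above -- so that the certificate language $L_M$ is itself recognised by a $1$-blind counter finite-word automaton, after which closure of {\bf r}-${\bf BCL}(1)_\om$ under concatenation with regular/$1$-blind-counter prefixes yields $\mathcal{C}_M$ within the class.

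Finally, to verify the three Wadge/Borel-class statements at once, I would just observe that $\emptyset$ belongs to every $\bsigma{\alpha}$ and $\bpi{\alpha}$ (as well as being Borel), while an $\asigma{1}$-complete set belongs to none of these classes; hence a single reduction $M \mapsto \mathcal{C}_M$ of the form above witnesses undecidability of all the properties stated in the theorem, uniformly in the ordinal $\alpha$.
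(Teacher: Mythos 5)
Your high-level strategy --- effectively produce a $1$-blind counter automaton whose $\omega$-language is either trivially simple or $\asigma{1}$-complete, so that deciding Borel-ness or membership in any class $\bsigma{\alpha}$, $\bpi{\alpha}$ would decide an undecidable problem --- is exactly the strategy of the paper. However, your instantiation has a genuine gap: the dichotomy ``$\lang(\mathcal{C}_M)=\emptyset$ versus $\lang(\mathcal{C}_M)$ is $\asigma{1}$-complete,'' realised as $\lang(\mathcal{C}_M)=L_M\cdot\lang(\Aa_1)$ for a certificate language $L_M$, cannot be carried out in this model. For $\mathcal{C}_M$ to remain a $1$-blind counter automaton, $L_M$ must itself be recognisable by a blind counter automaton, as you note; but then emptiness of $L_M$ --- and hence of $L_M\cdot\lang(\Aa_1)$ --- is \emph{decidable}. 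Emptiness is decidable for blind counter (Petri net) automata both on finite words and under the B\"uchi condition (the paper itself relies on this, see Theorem~\ref{thm:decidable} and Corollary~\ref{cor:decidable}), so there is no undecidable ``emptiness-like'' property of the class {\bf r}-${\bf BCL}(1)_\om$ for your reduction to start from. Your proposed fix (choose the source problem more carefully) does not close this gap, because any source problem compatible with the concatenation-and-emptiness shape of your reduction is forced to be an emptiness question for blind counter certificate languages, which is decidable.

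The paper closes exactly this gap by flipping the dichotomy: the ``simple'' outcome is $\Si^\omega$ rather than $\emptyset$, and the source problem is the \emph{universality} problem for $1$-blind counter B\"uchi automata, which \emph{is} undecidable by~\cite{BohmGHH17}. Concretely, given a $1$-blind counter automaton $\mathcal{A}$ with $L=\lang(\mathcal{A})$, one builds $S={\rm Sh}(L_1,\Si^\omega)\cup{\rm Sh}(\Si^\omega,L)$ using the shuffle rather than concatenation (the union being realised by a non-deterministic choice of which component to simulate, so one blind counter suffices). If $L=\Si^\omega$ then $S=\Si^\omega$ lies in every Borel class; if $L\neq\Si^\omega$, a witness $x\notin L$ yields a continuous reduction $y\mapsto{\rm Sh}(y,x)$ of the $\asigma{1}$-complete set $L_1=\lang(\Aa_1)$ into $S$, so $S$ is $\asigma{1}$-complete and non-Borel. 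If you replace your ``empty versus complete'' dichotomy by this ``universal versus complete'' one, the final step of your argument (that $\Si^\omega$ is in every class $\bsigma{\alpha}$, $\bpi{\alpha}$ while a $\asigma{1}$-hard set is in none) goes through verbatim.
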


\begin{proof}
Let $L_1=\lang(\mathcal{A}_1)\subseteq \Sio$ be the  $\asigma{1}$-complete $\om$-language accepted by the $1$-blind counter automaton $\mathcal{A}_1$ given above.

\medskip
We consider the shuffle operation for two $\w$\=/words x and y  in $\Si^\omega$   given by
$${\rm Sh}(x,y) =x(1)y(1)x(2)y(2) ..  \in \Si^\omega$$
This is extended to the shuffle of $\omega$-languages by ${\rm Sh}(L,L')=\{ {\rm Sh}(x,y)  \mid  x \in L \mbox{ and } y \in L' \}$.

Let now $L$  be an $\omega$-language of a given $1$-blind counter automaton $\mathcal{A}$ over the alphabet $\Si$. We set  $S=  {\rm Sh}(L_1,  \Si^\omega)  \cup   {\rm Sh}( \Si^\omega, L)$
It is easy to see that one can construct, from   $\mathcal{A}$ and   $\mathcal{A}_1$,  another $1$-blind counter automaton $\mathcal{B}$  accepting $S$.

 There are now two cases:

\medskip
{\bf  First Case.} $  L= \Si^\omega$.

In that case $S= \Si^\omega  $ is in every Borel class  (and actually in every Wadge class, except the class of the empty set).

\medskip
{\bf   Second Case.} $L$  is not equal to   $\Si^\omega$.

In that case there exists an $\omega$\=/word $x\in \Si^\omega$ which is not in $L$.  Let now $T$ be the intersection of  $S$ and of  ${\rm Sh}( \Si^\omega, \{x\})$.  Then $T= {\rm Sh}(L_1,  \{x\})$ is also  ${\bf \Sigma}^1_1$-complete, and  since $ L_1$ is continuously reducible to
$S$ by $F: y \rightarrow {\rm Sh}( y, x)$  it follows that $S$ is  ${\bf \Sigma}^1_1$-complete.

Now the conclusion  follows from a recent result  that the universality problem for one blind counter B\"uchi automata is undecidable, see~\cite{BohmGHH17}.
\end{proof}

Notice that one can also get  other undecidability results, using the above one about topological properties.  First we can state the following theorem showing that the arithmetical properties of
 $1$-blind counter $\om$-languages are also undecidable.

\begin{theorem}\label{undec-effective-borel}
It is undecidable to determine whether the $\om$-language of a  given $1$-blind counter automaton is an effective $\Delta_1^1$-set (respectively, an arithmetical  $\Sigma^0_n$-set,  an arithmetical $\Pi^0_n$-set, for a given integer $n\geq 1$).
\end{theorem}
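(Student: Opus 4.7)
The plan is to reuse verbatim the shuffle reduction from the proof of Theorem~\ref{undec-borel}. Given a $1$-blind counter B\"uchi automaton $\mathcal{A}$ recognising $L\subseteq \Si^\om$, I would form
\[
S \ \eqdef\  \mathrm{Sh}(L_1, \Si^\om)\ \cup\ \mathrm{Sh}(\Si^\om, L),
\]
where $L_1=\lang(\mathcal{A}_1)$ is the ${\bf \Si}^1_1$-complete $\om$-language of the $1$-blind counter automaton $\mathcal{A}_1$ constructed in Section~\ref{sec:non-borel}. Exactly as in the proof of Theorem~\ref{undec-borel}, effective closure of the class {\bf r}-${\bf BCL}(1)_\om$ under the shuffle with a regular $\om$-language and under union yields, from $\mathcal{A}_1$ and $\mathcal{A}$, a $1$-blind counter B\"uchi automaton $\mathcal{B}$ with $\lang(\mathcal{B})=S$.

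The key dichotomy is the same as before. If $L=\Si^\om$, then $S=\Si^\om$, which is an arithmetical $\Si^0_1$-set (hence a $\Pi^0_n$-set and $\Si^0_n$-set for every $n\geq 1$) and clearly a $\Delta^1_1$-set. If $L\neq\Si^\om$, then fix $x\in\Si^\om\setminus L$; the set $\mathrm{Sh}(\Si^\om,\{x\})$ is closed, and the continuous map $y\mapsto\mathrm{Sh}(y,x)$ witnesses $L_1\leq_W S\cap \mathrm{Sh}(\Si^\om,\{x\})=\mathrm{Sh}(L_1,\{x\})$, and a fortiori $L_1\leq_W S$. Thus $S$ is ${\bf \Si}^1_1$-hard, so in particular not Borel, and therefore not in the effective class $\Delta^1_1$ (which is included in the Borel class ${\bf \Delta}^1_1$). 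Since each arithmetical class $\Si^0_n,\Pi^0_n$ is included in $\Delta^1_1$, in this second case $S$ lies in none of the classes under consideration.

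Therefore, for every class $\mathcal{C}$ among $\Delta^1_1$, $\Si^0_n$, and $\Pi^0_n$ (for $n\geq 1$), the construction $\mathcal{A}\mapsto\mathcal{B}$ is a recursive reduction from the universality problem for $1$-blind counter B\"uchi automata to the membership problem ``$\lang(\mathcal{B})\in\mathcal{C}$?''. By the result of~\cite{BohmGHH17} invoked already in the proof of Theorem~\ref{undec-borel}, universality for $1$-blind counter B\"uchi automata is undecidable, which yields the undecidability statement of Theorem~\ref{undec-effective-borel}. The only point that required a moment of care was verifying that, in the non\=/universal case, $S$ really does escape every arithmetical class and the effective $\Delta^1_1$ class, and this is immediate from the Borel\=/vs\=/${\bf \Si}^1_1$\=/complete dichotomy together with the inclusions of the lightface classes in ${\bf \Delta}^1_1$.
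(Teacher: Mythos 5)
Your proposal is correct and follows essentially the same route as the paper: both reduce from universality of $1$-blind counter B\"uchi automata via the shuffle construction of Theorem~\ref{undec-borel}, observing that $S=\Si^\om$ lies in every class under consideration when $L$ is universal, while in the non-universal case $S$ is ${\bf \Si}^1_1$-complete, hence non-Borel, hence outside $\Delta^1_1$ and every arithmetical class. The paper's proof is just a terser statement of exactly this dichotomy.
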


\begin{proof}
We can use the above proof of Theorem \ref{undec-borel}. Indeed,  in the first case the  $\omega$-language $S=\Si^\omega$ is in every arithmetical class. Moreover, in the second case the
$\omega$-language $S$ is  not a Borel set, and thus it is not an effective $\Delta_1^1$-set and does not belong to any arithmetical class.
\end{proof}

\begin{remark}
It is open to determine the exact complexity of these undecidable problems. In particular we do not know whether they are highly undecidable, as in the general case of Petri nets or of $4$-blind counter automata.
\end{remark}

We can also use topological properties to prove other undecidability properties which are not directly linked to topology.

\begin{theorem}\label{undec-other-results}
It is undecidable to determine whether the $\om$-language of a  given $1$-blind counter automaton $\mathcal{A}$:
\begin{enumerate}
\item is  a regular $\om$-language.
\item is accepted by a deterministic Petri net.
\item is accepted by a deterministic Turing machine.
\item has a complement $\lang(\mathcal{A})^-$ which is accepted by a Petri net.
\item has a complement $\lang(\mathcal{A})^-$ which is accepted by a Turing machine.
\end{enumerate}
\end{theorem}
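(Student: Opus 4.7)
The plan is to reuse the construction from the proof of Theorem \ref{undec-borel} and exploit the fact that each of the five classes mentioned in the statement consists only of Borel $\omega$\=/languages (or has a Borel complement). Given a $1$-blind counter automaton $\mathcal{A}$ over $\Sigma$ recognising $L$, I would construct, as in the proof of Theorem \ref{undec-borel}, a $1$-blind counter automaton $\mathcal{B}$ recognising $S\eqdef \mathrm{Sh}(L_1,\Sigma^\om)\cup \mathrm{Sh}(\Sigma^\om, L)$, where $L_1=\lang(\mathcal{A}_1)$ is the $\asigma{1}$\=/complete $\om$\=/language of Section~\ref{sec:non-borel}. The same case analysis as in the proof of Theorem \ref{undec-borel} then yields two possibilities: if $L=\Sigma^\om$, then $S=\Sigma^\om$ trivially satisfies each of properties (1)--(5); if $L\neq \Sigma^\om$, then by reducing $L_1$ to $S$ via $y\mapsto \mathrm{Sh}(y,x)$ for some $x\notin L$, I obtain that $S$ itself is $\asigma{1}$\=/complete, in particular non\=/Borel.

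The key step is then to observe that in the second case $S$ cannot have any of properties (1)--(5). For (1), every regular $\om$\=/language is a $\bdelta{3}$\=/set (Landweber), hence Borel. For (2) and (3), the $\om$\=/languages recognised by deterministic Petri nets form a subclass of $\bdelta{3}$ (as established by Carstensen and by Duparc--Finkel--Ressayre, cf.~\cite{DFR4}), and the $\om$\=/languages of deterministic Muller Turing machines are also $\bdelta{3}$ as recalled in the introduction, so in each case the $\om$\=/language must be Borel. For (4) and (5), if $S^-$ were accepted by a Petri net or a Turing machine, then $S^-$ would be an (effective) analytic set, i.e.~$S^-\in \ana$; combined with $S\in\ana$ this would put $S$ in $\ana\cap\api$, which by Suslin's theorem equals $\borel$, contradicting that $S$ is non\=/Borel.

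Hence for each of the five properties, the $\om$\=/language $\lang(\mathcal{B})=S$ satisfies the property if and only if $L=\Sigma^\om$. Since the universality problem for $1$\=/blind counter B\"uchi automata is undecidable~\cite{BohmGHH17}, and since the construction $\mathcal{A}\mapsto \mathcal{B}$ is effective, each of the five decision problems is undecidable.

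The only mild subtlety is the argument for (4) and (5), where one must invoke the Suslin separation theorem $\ana\cap\api=\borel$ to derive a contradiction from $S^-$ being analytic; everything else is a direct consequence of the already available topological bounds together with the construction and the case split already performed in the proof of Theorem \ref{undec-borel}.
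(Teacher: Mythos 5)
Your proposal is correct and follows essentially the same route as the paper: reuse the shuffle construction and case split from Theorem \ref{undec-borel}, note that $S=\Sigma^\om$ trivially has all five properties, and that a non-Borel $S$ can have none of them, reducing to undecidability of universality. The only (immaterial) difference is in items (4)--(5), where you invoke Suslin's theorem $\ana\cap\api{1}=\borel$ while the paper observes directly that $S^-$ is $\api{1}$-complete and hence not analytic; the two arguments are equivalent.
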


\begin{proof}
We can again use the above proof of Theorem \ref{undec-borel}. Indeed,  in the first case the  $\omega$-language $S=\Si^\omega$ satisfies the five items of the theorem.  In the second case the $\omega$-language $S$ is non-Borel hence it is not a Boolean combination of ${\bf \Pi}^0_2$-sets and it does not satisfy any of the three first  items. Moreover, in this case the complement of $S$ is ${\bf \Pi}^1_1$-complete hence it cannot be accepted by any Turing machine and in particular by any Petri net (with B\"uchi acceptance condition).
\end{proof}


\section{Inherent non-determinism}
\label{sec:inherent-nondet}

In this section we formally state and prove the following corollary.

\begin{corollary}
\label{cor:no-model}
No model of deterministic, unambiguous, nor even countably\=/unambiguous automata with countably many configurations and a~Borel acceptance condition can capture the class of $\om$\=/languages recognisable by real\=/time $1$-blind counter B\"uchi automata.
\end{corollary}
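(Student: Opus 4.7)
The plan is to combine two ingredients: the hardness result of the previous section, and a classical descriptive set-theoretic fact about projections with countable fibres. The first ingredient is Proposition~\ref{pro:hardness-one}, which gives a real-time $1$\=/blind counter Büchi automaton $\Aa_1$ such that $\lang(\Aa_1)$ is $\asigma{1}$\=/complete, in particular non-Borel. The second ingredient, which I have to establish, is the following structural claim: every $\omega$\=/language recognised by a countably\=/unambiguous automaton $\Mm$ with countably many configurations and a Borel acceptance condition is itself Borel. Granting the claim, no such $\Mm$ can recognise $\lang(\Aa_1)$, which is precisely what the corollary asserts; the deterministic and unambiguous cases are immediate specialisations where the fibres in the argument below contain at most one element.

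To prove the structural claim, let $\Cc$ denote the countable set of configurations of $\Mm$. Working in the Polish space $\Sio\times\Cc^\om$, the set
\[
R\eqdef\big\{(\sigma,r)\in\Sio\times\Cc^\om \mid r \text{ is a run of } \Mm \text{ on } \sigma\big\}
\]
is closed, being cut out by coordinate-wise constraints derived from the transition relation. By hypothesis, the set $\mathrm{Acc}\subseteq\Cc^\om$ of accepting runs is Borel, and hence $B\eqdef R\cap(\Sio\times\mathrm{Acc})$ is a Borel subset of $\Sio\times\Cc^\om$ whose projection onto the first coordinate is exactly $\lang(\Mm)$.

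The main step is to upgrade this projection from analytic to Borel. This is where the countable\=/unambiguity hypothesis enters: for every $\sigma\in\Sio$ the fibre $B_\sigma$ is by assumption at most countable, so the restriction of the first projection to $B$ is a countable\=/to\=/one Borel map. The Lusin--Novikov uniformisation theorem (see for example \cite{Kechris94}) then guarantees that the image of such a map is Borel, so $\lang(\Mm)\in\borel$ and the claim follows. I do not expect any real obstacle here; the only routine verification is that the acceptance conditions considered in the paper (Büchi, Muller, and the like) are indeed Borel subsets of $\Cc^\om$, which is standard and places them very low in the Borel hierarchy.
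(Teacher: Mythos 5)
Your proposal is correct and follows essentially the same route as the paper: both form the Borel set of pairs $(\sigma,r)$ with $r$ an accepting run, observe that determinism, unambiguity, or countable unambiguity makes every section countable, and invoke the Lusin--Novikov small section theorem to conclude that the projection $\lang(\Mm)$ is Borel, contradicting the $\asigma{1}$\=/completeness of $\lang(\Aa_1)$. The only cosmetic difference is that you spell out the decomposition of $P$ as a closed run-set intersected with $\Sio\times\mathrm{Acc}$, which the paper leaves implicit.
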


It is expressed in the same spirit as the corresponding Theorem~5.5 in~\cite{hummel_msouplus}: we consider an~abstract model of automata $\Aa$ with a~countable set of configurations $C$, an initial configuration $c_\init\in C$, a~transition relation $\delta\subseteq C\times \Sigma\times C$, and an~acceptance condition $W\subseteq C^\w$. The notions of a~run $\run(\alpha,\rho)$; an~accepting run $\acc(\rho)$; and the language $\lang(\Aa)$ are defined in the standard way. Thus, under the assumption that the acceptance condition $W$ is Borel, the set
\[P\eqdef \big\{(\alpha,\rho)\in \Si^\w\times C^\w\mid \run(\alpha,\rho)\wedge \acc(\rho)\big\},\]
as in Definition~\ref{def:buchi-k-counter} is also Borel. The assumptions that the machine is deterministic, unambiguous, or countably\=/unambiguous imply that the cardinality of the sections $P_\alpha\eqdef \{\rho\mid (\alpha,\rho)\in P\}$ for $\alpha\in \Si^\w$ is at most countable. Therefore, the following \emph{small section theorem} by Lusin and Novikov applies.

\begin{theorem}[{see~\cite[Theorem~18.10]{Kechris94}}]
Let $X$, $Y$ be standard Borel spaces and let $P \subseteq X\times Y$ be Borel. If every section $P_x$ is countable, then $P$ has a~Borel uniformisation and therefore $\pi_X(P)$ is Borel.
\end{theorem}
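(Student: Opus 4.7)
The plan is to reduce the theorem to the following decomposition lemma: every Borel set $P\subseteq X\times Y$ with countable vertical sections can be written as a countable union $P=\bigcup_{n\in\w}\Gamma_n$, where each $\Gamma_n$ is the graph of a partial Borel function $f_n\colon X\to Y$ with Borel domain. Once this lemma is established, both conclusions of the theorem follow easily: the projection satisfies $\pi_X(P)=\bigcup_n \dom(f_n)$, which is Borel as a countable union of Borel sets; and a Borel uniformisation is given by $f(x)\eqdef f_{n(x)}(x)$, where $n(x)$ is the least index with $x\in\dom(f_n)$. As a preliminary reduction I would replace $X$ and $Y$ by $\w^\w$ via the Borel isomorphism theorem for uncountable standard Borel spaces (countable factors are trivial), so that we may assume $Y$ is a Polish space equipped with a fixed countable basis $\{U_m\}_{m\in\w}$ and a compatible complete metric.

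The core of the proof would be the construction of the $\Gamma_n$ via a transfinite \emph{fibrewise isolated-point derivation}. Define $P^{(0)}\eqdef P$; at successor stages let $I^{(\alpha)}\subseteq P^{(\alpha)}$ be the set of pairs $(x,y)$ such that $y$ is isolated in the section $P^{(\alpha)}_x$, and put $P^{(\alpha+1)}\eqdef P^{(\alpha)}\setminus I^{(\alpha)}$; at limits take intersections. Because the section $P_x$ is countable hence scattered in the Polish space $Y$, a standard Cantor\=/Bendixson argument shows that for every $x$ the section $P^{(\alpha)}_x$ becomes empty at some countable ordinal $\alpha(x)$. The \emph{isolated part} $I^{(\alpha)}$ at each stage decomposes as a countable union of Borel graphs: for each basic open $U_m$, the set $D_m\eqdef\{x:|P^{(\alpha)}_x\cap U_m|=1\}$ is Borel, and the assignment sending $x\in D_m$ to the unique element of $P^{(\alpha)}_x\cap U_m$ is a partial Borel function whose graph is contained in $I^{(\alpha)}$.

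The step I expect to be the main obstacle is obtaining a \emph{uniform} countable bound on the ordinals $\alpha(x)$, so that the derivation stabilises at the empty set after countably many stages globally (not just fibrewise). The standard way around this is a $\ca$\=/boundedness argument: one codes the derivation by a Borel tree whose branches correspond to descending sequences in the fibres, uses $\ca$\=/rank\=/boundedness to assert that the Borel set of such codes has rank bounded below $\w_1$, and deduces a uniform countable ordinal $\alpha_0<\w_1$ at which $P^{(\alpha_0)}=\emptyset$. Alternatively, one can bypass ordinal derivation entirely by a single\=/step Luzin\=/sieve construction: enumerate finite refinements $\mathcal{U}_k$ of $Y$ by basic opens of diameter $<2^{-k}$, and for each infinite branch $\sigma\in\w^\w$ through the resulting refining tree define the partial Borel map sending $x$ to the unique $y\in\bigcap_k U_{\sigma(k)}$ with $(x,y)\in P$, if such a $y$ exists and is unique; countability of sections together with an analytic\=/section argument shows that only countably many branches $\sigma$ contribute non\=/trivially.

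With the decomposition $P=\bigcup_n\Gamma_n$ in hand, replacing $\Gamma_n$ by $\Gamma_n\setminus\bigcup_{k<n}\Gamma_k$ makes the domains $\dom(f_n)$ pairwise disjoint and still Borel, and the uniformisation and the Borelness of $\pi_X(P)$ both follow as described. The overall difficulty is concentrated entirely in the decomposition lemma; the deduction of the theorem from the lemma is essentially bookkeeping.
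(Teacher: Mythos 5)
First, note that the paper does not prove this statement at all: it is the classical Lusin--Novikov small section theorem, imported verbatim from \cite[Theorem~18.10]{Kechris94} and used as a black box in Section~\ref{sec:inherent-nondet}. So the only question is whether your sketch would constitute a correct proof, and it would not: the step ``the section $P_x$ is countable hence scattered'' is false. A countable subset of a Polish space need not be scattered; $\mathbb{Q}\subseteq\mathbb{R}$ is countable and dense-in-itself, so it has no isolated points and your fibrewise Cantor--Bendixson derivation removes nothing from it. Concretely, for $P=X\times\mathbb{Q}\subseteq X\times\mathbb{R}$ (a Borel set with countable sections, for which the theorem is trivially true) one has $I^{(0)}=\emptyset$ and $P^{(\alpha)}=P$ for every $\alpha$, so the derivation never reaches the empty set and produces no graphs at all. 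What is true is that countable \emph{compact} (or countable Polish, i.e.\ countable $G_\delta$) sets are scattered; general countable Borel sections are not. The difficulty you flag --- the uniform $\ca$-boundedness of the ranks $\alpha(x)$ --- is therefore not the main obstacle: the derivation already fails fibrewise. Your fallback ``single-step Luzin sieve'' does not repair this either, because the set of branches $\sigma$ that contribute non-trivially can have the cardinality of the continuum (take $P$ to be the graph of a Borel bijection $X\to Y$: every section is a singleton, yet continuum many branches isolate some point of some section), so you do not obtain a \emph{countable} union of graphs this way.

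The decomposition lemma you aim for is indeed the right intermediate statement, and your deduction of both conclusions from it (disjointify the graphs, take the least inhabited index, observe $\pi_X(P)=\bigcup_n\dom(f_n)$) is fine. But the standard proof of that lemma requires a genuinely different tool: one relativises to $\w^\w$ and invokes the effective perfect set theorem, by which every element of a countable $\Sigma^1_1(x)$ set is $\Delta^1_1(x)$; a uniform enumeration of the $\Delta^1_1(x)$ points then yields countably many Borel partial functions whose graphs cover $P$. (Alternatively one can argue via the first reflection theorem for $\Pi^1_1$.) Some input of this effective or reflection-theoretic kind is unavoidable, and it is precisely what your sketch is missing.
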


Therefore, we know that $\lang(\Aa)=\pi_{\Si^\omega}(P)$ is Borel. Thus, no such machine can recognise $\lang(\Aa_1)$ for the automaton $\Aa_1$ from Section~\ref{sec:non-borel}, or any non-Borel $\om$-language of Petri nets obtained in Section \ref{sec:wadge} as these languages are  non\=/Borel.

\bigskip
Notice that the above Theorem of  Lusin and Novikov had already been used in the study of ambiguity of  context-free $\om$-languages in \cite{Fink-Sim} or of $\om$-languages of Turing machines in \cite{Fin-ambTM} and even  for tree languages  of tree automata \cite{FinkelS09}.  In particular,  it is proved in \cite{Fin-ambTM}  that if $L \subseteq  X^\om$ is accepted by a
B\"uchi Turing machine $\mathcal{T}$ and  $L$
is an analytic  but non-Borel set, then the set of $\om$-words,
which have $2^{\aleph_0}$ accepting runs by $\mathcal{T}$, has cardinality $2^{\aleph_0}$. This extends a similar result of  \cite{Fink-Sim} in the
case of context-free $\om$-languages and infinitary rational relations.
In that case we say that the  $\om$-language
$L$ has the maximum degree of ambiguity (with regard to acceptance by  B\"uchi Turing machines). From this result we can also infer the following one about $\om$-languages of Petri nets.

\begin{theorem}
Let $L \subseteq  \Sio$ be an $\om$-language accepted by a  B\"uchi $k$-counter automaton $\mathcal{A}$
   such that $L$
is an analytic but non-Borel set. The set of $\om$-words,
which have $2^{\aleph_0}$ accepting runs by $\mathcal{A}$, has cardinality $2^{\aleph_0}$.

\end{theorem}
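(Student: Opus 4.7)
The plan is to apply the Lusin--Novikov small section theorem twice, together with the perfect set property for analytic sets, in the spirit of the preceding section and of \cite{Fink-Sim,Fin-ambTM}. First I would encode the pairs (input, run) as a Borel set: let $C$ denote the countable set of configurations of $\mathcal{A}$ (a state together with a tuple of counter values) and set
\[
P \eqdef \{(\alpha, \rho) \in \Sio \times C^\om \mid \rho \text{ is an accepting run of } \mathcal{A} \text{ on } \alpha\}.
\]
The condition ``$\rho$ is a run on $\alpha$'' is closed (a conjunction of local transition constraints) and the B\"uchi acceptance condition on $\rho$ is $\bpi{2}$, so $P$ is a Borel subset of $\Sio\times C^\om$. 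By definition, $L=\pi_{\Sio}(P)$. Let
\[
U \eqdef \{\alpha \in \Sio \mid |P_\alpha| > \aleph_0\}
\]
be the set of $\om$-words having uncountably many accepting runs; the goal is to show that $|U|=2^{\aleph_0}$.

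The first step is to prove that $U$ is uncountable. Suppose for contradiction that $U$ is countable; then $U$ is Borel and the set $P' \eqdef P \setminus (U \times C^\om)$ is a Borel subset of $\Sio\times C^\om$ whose vertical sections are all at most countable. Applying Lusin--Novikov to $P'$ yields that $L \setminus U = \pi_{\Sio}(P')$ is Borel, whence $L = U \cup (L \setminus U)$ would be Borel, contradicting the hypothesis that $L$ is not Borel.

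The second step upgrades ``uncountable'' to ``of cardinality $2^{\aleph_0}$''. Here I would invoke the classical descriptive set theory result that for a Borel subset $P$ of a product of Polish spaces, the set $\{\alpha \mid P_\alpha \text{ is uncountable}\}$ is analytic; this can be seen by rewriting it as the projection of an analytic relation asserting the existence of a Cantor scheme inside $P_\alpha$ (see \cite[Theorem~29.19]{Kechris94}). By the perfect set property for analytic sets, the uncountable analytic set $U$ must satisfy $|U|=2^{\aleph_0}$. Finally, for every $\alpha \in U$ the vertical section $P_\alpha$ is a Borel subset of $C^\om$ that is uncountable by definition of $U$, so by the perfect set property for Borel sets $|P_\alpha|=2^{\aleph_0}$; hence $U$ is precisely the set of $\om$-words with $2^{\aleph_0}$ accepting runs, and it has cardinality $2^{\aleph_0}$.

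The only non-routine point is the measurability assertion used in the second step (that $U$ itself is analytic when $P$ is Borel); the remaining ingredients --- Lusin--Novikov uniformisation and the perfect set property for analytic/Borel sets --- are already invoked in the preceding paragraph of the excerpt, so no new machinery is needed.
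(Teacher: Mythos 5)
Your proof is correct and rests on essentially the same machinery the paper relies on: the paper deduces this theorem from the analogous statement for B\"uchi Turing machines in \cite{Fin-ambTM}, whose proof is precisely this combination of the Lusin--Novikov small section theorem applied to the Borel set $P$ of (input, accepting run) pairs together with the perfect set property for analytic and Borel sets, the same tools the paper already invokes for Corollary~\ref{cor:no-model}. Your version is simply self-contained rather than by citation, and each step --- the Borelness of $P$, the contradiction with non-Borelness of $L$ if $U$ were countable, the analyticity of $U$, and the perfect set property applied to $U$ and to the uncountable Borel sections $P_\alpha$ --- is sound.
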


Moreover, it is proved in \cite[Theorem 4.12]{Fin-ambTM}  that it is consistent with {\bf ZFC} that there exists an $\om$-language accepted by a real-time $1$-counter B\"uchi automaton
which belongs to the Borel class ${\bf \Pi}^0_2$ and which has the maximum degree of ambiguity with regard to acceptance by Turing machines. It is then easy to infer from this result and from the previous constructions of Section \ref{sec:wadge} that a similar result holds for  an $\om$-language in the Borel class ${\bf \Pi}^0_2$ accepted by a  $4$-blind counter B\"uchi automaton.

We end this section with the following undecidability result.

\begin{theorem}\label{undec-unambiguity}
It is undecidable to determine whether the $\om$-language of a  given $1$-blind counter automaton $\mathcal{A}$:
\begin{enumerate}

\item is accepted by an unambiguous Petri net.
\item has the maximum degree of ambiguity with regard to acceptance by Petri nets.
\item has the maximum degree of ambiguity with regard to acceptance by Turing machines.
\end{enumerate}
\end{theorem}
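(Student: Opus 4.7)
The plan is to reuse the reduction constructed in the proof of Theorem \ref{undec-borel}. Given a $1$-blind counter B\"uchi automaton $\Aa$ over an alphabet $\Si$, effectively construct a $1$-blind counter B\"uchi automaton $\mathcal{B}$ accepting
\[S \;\eqdef\; {\rm Sh}(L_1, \Si^\om) \,\cup\, {\rm Sh}(\Si^\om, \lang(\Aa)),\]
where $L_1=\lang(\Aa_1)$ is the $\asigma{1}$\=/complete $\om$\=/language from Section~\ref{sec:non-borel} and ${\rm Sh}$ is the shuffle operation used in the proof of Theorem~\ref{undec-borel}. As shown there, $\mathcal{B}$ is obtained effectively from $\Aa$ and $\Aa_1$, and we reduce from the universality problem for $1$-blind counter B\"uchi automata, which is known to be undecidable by~\cite{BohmGHH17}.

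The analysis splits into two cases. If $\lang(\Aa)=\Si^\om$, then $S=\Si^\om$, which is recognised by a one\=/state deterministic B\"uchi automaton; it is therefore accepted by an unambiguous Petri net and every $\om$\=/word has exactly one accepting run, so $S$ does \emph{not} have the maximum degree of ambiguity with regard to acceptance by Petri nets nor by Turing machines. If on the other hand $\lang(\Aa)\neq \Si^\om$, the argument used in the proof of Theorem~\ref{undec-borel} shows that $S$ is $\asigma{1}$\=/complete, hence analytic but non\=/Borel. Corollary~\ref{cor:no-model} then prevents $S$ from being recognised by any countably\=/unambiguous Petri net, so in particular $S$ is not accepted by an unambiguous Petri net. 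Moreover, by the theorem preceding this one, every Petri net $\mathcal{B}'$ recognising the analytic non\=/Borel set $S$ must have $2^{\aleph_0}$\=/many $\om$\=/words carrying $2^{\aleph_0}$ accepting runs in $\mathcal{B}'$, i.e.~$S$ has the maximum degree of ambiguity for Petri nets; the analogous statement for Turing machines is the corresponding result of~\cite{Fin-ambTM} applied to the same non\=/Borel set $S$.

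Putting both cases together, each of the three properties in the statement holds in exactly one of the two cases: item~(1) holds iff $\lang(\Aa)=\Si^\om$, while items~(2) and~(3) hold iff $\lang(\Aa)\neq \Si^\om$. Thus the effective map $\Aa\mapsto \mathcal{B}$ reduces universality (or its complement) of $1$-blind counter B\"uchi automata to each of the three decision problems, yielding their undecidability.

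The main delicate point is that the max\=/ambiguity conclusion in Case~2 has to hold uniformly over \emph{every} Petri net (resp.~Turing machine) accepting $S$, rather than for a single specific recogniser; this is immediate because the cited ambiguity theorem is phrased for an arbitrary such recogniser and applies as soon as the recognised language is analytic and non\=/Borel, which is exactly what $\asigma{1}$\=/completeness of $S$ provides.
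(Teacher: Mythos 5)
Your proposal is correct and follows essentially the same route as the paper: reuse the shuffle construction and case split from Theorem \ref{undec-borel}, noting that $S=\Si^\om$ is accepted by a deterministic (hence unambiguous, non-maximally-ambiguous) Petri net in the universal case, and that in the non-universal case the $\asigma{1}$-completeness of $S$ rules out unambiguous acceptance (via Corollary \ref{cor:no-model}) and forces the maximum degree of ambiguity for Petri nets and Turing machines via the preceding ambiguity theorem. Your explicit remark that the max-ambiguity conclusion must hold for \emph{every} recogniser of $S$ is a useful clarification of a point the paper leaves implicit, but the argument is the same.
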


\begin{proof}
We can again use the above proof of Theorem \ref{undec-borel}. Indeed,  in the first case the  $\omega$-language $S=\Si^\omega$ is accepted by an  unambiguous (and even deterministic) Petri net (without any counter).
In the second case the $\omega$-language $S$ is non-Borel  and thus it has  the maximum degree of ambiguity with regard to acceptance by Petri nets or even by Turing machines.
\end{proof}

\section{Determinisation of unambiguous Petri nets}
\label{sec:determinisation}

The previous sections showed that non\=/deterministic blind counter automata are stronger in expressive power than any reasonable model of computation with a restricted form of non\=/determinism. This opens the question what is the actual expressive power of unambiguous blind counter automata. In this section we provide a construction allowing to simulate them using a variant of {\it deterministic } counter automata with \emph{copying}.

A counter automaton $\Mm=\langle K, \Si,$ $\Delta, q_0\rangle$ allows \emph{copying} if its transitions can additionally require to copy the value of one counter $\Cc_j$ to another counter $\Cc_{j'}$, symbolically $\Cc_{j'}:= \Cc_j$. The copying instructions can be represented by another component of the transition relation taken from the set $2^{(\{1,\ldots,k\}^2)}$ indicating which counters should be copied to which (we can assume that all the copying is executed simultaneously). A run of such a machine is defined analogously as in Section~\ref{sec:basic} with the additional requirement that the copying instructions are executed in the natural way.

\begin{theorem}
\label{thm:main-unamb}
If $\Aa$ is an~unambiguous blind counter B\"uchi automaton then $\lang(\Aa)$ can be recognised by a~deterministic Muller counter machine $\Mm$ that allows copying, hence by a deterministic Muller Turing machine. Moreover, the translation from $\Aa$ to $\Mm$ is effective.
\end{theorem}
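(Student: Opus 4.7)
My approach is to realise $\Mm$ as a subset-style determinisation of $\Aa$: after reading each prefix $w$, $\Mm$ deterministically maintains a bounded finite representation of the partial runs of $\Aa$ on $w$ that could still become the unique accepting run. The copying feature $\Cc_{j'} := \Cc_j$ duplicates counter values when a partial run branches non-deterministically, and zero-tests on suitable differences let $\Mm$ detect which branches can be pruned.

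The principal lever provided by unambiguity is the following structural lemma, which I would prove first. For any prefix $w$ and any state $q$, if two configurations $(q,\bar c)$ and $(q,\bar c')$ with $\bar c \succeq \bar c'$ coordinate-wise are both reached on $w$ by partial runs extending to accepting runs, then Remark~\ref{rem:residual-order} lifts an accepting continuation from $(q,\bar c')$ on some $y'$ to an accepting continuation from $(q,\bar c)$ on the same $y'$, obtained by uniformly shifting the counter values up by $\bar c - \bar c'$; combined with the original continuation from $(q,\bar c')$, this yields two distinct accepting runs of $\Aa$ on $w\cdot y'$, contradicting unambiguity. Consequently, the set of live configurations of $\Aa$ reachable on $w$ at state $q$ forms an antichain in $(\N^k, \succeq)$, so every dominated configuration can be safely pruned from any tracked representation. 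With this in hand, $\Mm$ stores the \emph{shape} of the current candidate set in its finite control as a finite labelled object over $K$, and the concrete counter vectors in its counters; each input step applies the non-deterministic successor relation of $\Aa$ by copying, adjusting the affected counters by the prescribed $\pm 1$ values, and then pruning dominated pairs by means of zero-tests on counter differences. The Muller condition of $\Mm$ is defined on the shape component and accepts exactly when some tracked candidate visits accepting states of $\Aa$ infinitely often; by the antichain argument and unambiguity, such a candidate is necessarily the unique accepting run of $\Aa$.

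The main obstacle is that \emph{a priori} antichains in $\N^k$ can be arbitrarily large, so the naive version of the construction would need unboundedly many counters, which $\Mm$ as a counter machine with a fixed number of counters cannot provide. Overcoming this requires a finer argument specific to unambiguous blind counter automata showing that only boundedly many pairwise incomparable live configurations per state can ever coexist along any run, so that a fixed number of counters of $\Mm$ suffices; the detailed combinatorics supporting this bound are the content of Appendix~\ref{app:construction}. Effectivity of the translation $\Aa \mapsto \Mm$ is then evident from the construction, and the final clause about deterministic Muller Turing machines follows immediately by simulating the counter operations and the copying instructions on a Turing tape.
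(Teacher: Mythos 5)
The central difficulty of this theorem is precisely the point you flag and then defer: bounding, uniformly over all input prefixes, the number of configurations that must be tracked. Your antichain lemma is correct as far as it goes --- if $(q,\bar c)\succeq(q,\bar c')$ are both reachable on $w$ and $(q,\bar c')$ has an accepting continuation $y'$, then Remark~\ref{rem:residual-order} lifts it to an accepting continuation of $(q,\bar c)$ on the same $y'$, producing two distinct accepting runs on $w\cdot y'$; hence the live configurations at each state form a $\succeq$-antichain. But for $k\geq 2$ blind counters antichains in $(\N^k,\succeq)$ are unbounded (the paper's observation that antichains are bounded by the number of states applies only to $k=1$), and your proposal contains no argument for the asserted finer bound on the number of pairwise incomparable live configurations. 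Pointing to Appendix~\ref{app:construction} for "the detailed combinatorics supporting this bound" does not close the gap: that bound \emph{is} the crux of the theorem, and without it your machine needs unboundedly many counters and the construction collapses.

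The paper closes this gap with a device that does not go through antichains at all: it partitions the entire configuration space into finitely many \emph{optimal clubs} $[q,\gamma]$ (Proposition~\ref{pro:split-clubs}), where optimality means that non-emptiness of $\lang\big(\Aa,[q,\gamma]\big)$ already forces non-emptiness at the club's $\preceq$-minimal configuration. Then \emph{any} two distinct reachable configurations in the same non-trivial optimal club --- comparable or not --- both dominate that minimal configuration, so both accept a common suffix, contradicting unambiguity (Lemma~\ref{lem:unique}). Hence at most one live configuration per club survives, and since the partition $F$ is finite, $k\cdot|F|$ counters suffice. Moreover, computing the optimal-club partition effectively is itself nontrivial and rests on the lasso-pattern emptiness test (Theorem~\ref{thm:decidable} and Proposition~\ref{pro:update-club}); this is where the effectiveness claim of the theorem is earned, another point your proposal leaves unsubstantiated.
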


The machine $\Mm$ above is not blind and during the construction we extensively use its ability to perform zero tests. For a discussion of the possible variants of the machine models involved in that theorem, see the end of this section.

It is known that every $\om$-language accepted by a deterministic Muller Turing machine is a Boolean combination of  arithmetical $\Pi^0_2$-sets, hence an arithmetical $\Delta^0_3$-set.  In particular,  an
$\om$-language accepted by a deterministic Muller Turing machine  is a~Boolean combination of $\bpi{2}$\=/sets, hence a~Borel $\bdelta{3}$\=/set \cite{Staiger97}.  Thus we can also state the following corollary of Theorem \ref{thm:main-unamb}.

\begin{corollary}
Assume that $\Aa$ is an unambiguous blind counter B\"uchi automaton. Then the  $\om$\=/lan\-guage $\lang(\Aa)$ is a~Boolean combination of $\bpi{2}$\=/sets, hence a~Borel $\bdelta{3}$\=/set. It is actually an~effective $\bdelta{3}$\=/set, i.e.~an~arithmetical (lightface) $\Delta^0_3$\=/set.
\end{corollary}

The overall structure of the construction is a variant of the powerset construction with an additional trimming. First we show how to split the space of configurations of a given unambiguous machine into finitely many regions (called clubs). Then we argue that the assumption of unambiguity implies, that after reading a prefix $w$ of the input word $\sigma$, the machine cannot reach two distinct configurations from the same club --- it would lead to two distinct accepting runs on a certain word $\sigma'\sqsupseteq w$. This means that it is enough to keep track of at most one run of the machine in each of the finitely many clubs. Based on that, we build a deterministic counter machine that stores all those finitely many runs in its memory.

\subsection{Lasso patterns}

We begin by recalling known structural properties of blind counter automata, namely \emph{lasso patterns} that are used to \emph{pump} runs of the automaton. The patterns lie at the core of the decidability algorithms for these automata. We will use these properties later to construct accepting runs of the machine under certain assumptions, which leads to the effectiveness of the provided translation.

Fix a~B\"uchi blind counter automaton $\Aa=\langle K,\Si,\Delta, q_0,F\rangle$ with a~set of states $K$, $k$\=/counters $\Cc_1,\ldots,\Cc_k$, and a transition relation $\Delta$. To avoid double indexing, we will denote a transition $(q,a,i_1$, $\ldots$, $i_k,q',j_1,\ldots,j_i)$ of $\Aa$ by $(q,a,I,q',J)$ with $I=(i_1,\ldots,i_k)\in \{0, 1\}^k$ and $J=(j_1,\ldots,j_k)\in \{-1, 0, 1\}^k$. Similarly, a configuration $(q,c_1,\ldots,c_k)$ of such a counter automaton can be written $(q,\tau)$ with $\tau=(c_1,\ldots,c_k)\in \N^k$.

\medskip
A~\emph{lasso pattern} is a~sequence of transitions $\big(\delta_i{=}(q_i,a_i,I_i,q'_i,J_i)\big)_{i=0,\ldots,\ell}\subseteq\Delta$ and a~number $0\leq \ell'\leq \ell$, such that the following conditions hold:
\begin{enumerate}
\itemsep=0.9pt
\item $q_0$ is the initial state of $\Aa$, and $I_0=(0, 0, \ldots , 0)$,
\item for each $i=0,1,\ldots,\ell{-}1$ we have $q'_i=q_{i+1}$,
\item $q'_\ell=q_{\ell'}$,
\item for each $i =0,1,\ldots,\ell$ the sum $\sum_{j=0}^{i} J_j$ belongs to $\N^k$ (i.e.~is coordinate\=/wise non\=/negative),
\item the sum $\sum_{j=\ell'}^\ell J_j$ also belongs to $\N^k$,
\item for each $i=0,1,\ldots,\ell$ and $c=1,\ldots,k$ if $(I_i)_c=1$ then $\sum_{j=0}^{i-1} (J_j)_c>0$.
\end{enumerate}
These conditions are meant to ensure that one can construct a run of $\Aa$ that uses consecutively the transitions from the lasso pattern. Notice that if $(I_i)_c=0$ in the last item of the definition then the assumption of blindness guarantees that there is another transition in $\Delta$ with $(I_i)_c=1$, so we don't need to restrict the sum $\sum_{j=0}^{i-1} (J_j)_c$ in that case.

We call $\ell$ the \emph{length} of the lasso pattern and $\ell'$ is its \emph{looping point}. A~lasso pattern as above is \emph{accepting} if for some $i=\ell',\ldots,\ell$ the state $q_i$ is accepting. The definition of a~lasso pattern is based on Problem~3.2 in~\cite{habermehl_petri_upper}. The exact properties used in the definition of a lasso pattern are chosen in such a way to ensure the following remark.

\begin{remark}
If a~blind counter B\"uchi automaton $\Aa$ has an~accepting lasso pattern $(\delta_i)_{i=0,\ldots,\ell}\subseteq\Delta$ with a looping point $0\leq \ell'\leq \ell$ then $\lang(\Aa)\neq\emptyset$.
\end{remark}

\begin{proof}
Let $u=a_0a_1\cdots a_{\ell'-1}$ and $w=a_{\ell'}a_{\ell'+1}\cdots a_{\ell}$. Then the $\w$\=/word $\alpha=uwww\cdots$ belongs to $\lang(\Aa)$ because one can construct an accepting run of $\Aa$ over this $\w$\=/word using the transitions of the assumed lasso pattern.
\end{proof}

The following theorem from~\cite[Section~3]{habermehl_petri_upper} implies decidability of the emptiness problem for blind counter B\"uchi automata by providing the converse implication.

\begin{theorem}
\label{thm:decidable}
If $\Aa$ is a~blind counter B\"uchi automaton such that $\lang(\Aa)\neq\emptyset$ then $\Aa$ has an~accepting lasso pattern of length bounded by a~function computable\footnote{The function is doubly\=/exponential, see the comment before Theorem~3.1 in~\cite{habermehl_petri_upper}.} based on~$\Aa$. As a~consequence, it is decidable if $\lang\big(\Aa\big)$ is empty.
\end{theorem}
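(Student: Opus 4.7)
The plan is to establish both directions of the equivalence ``$\lang(\Aa)\neq\emptyset$ iff $\Aa$ admits an accepting lasso pattern of bounded length''. The easy direction is already covered by the remark preceding the theorem: any accepting lasso pattern gives rise to an ultimately periodic $\omega$\=/word $uw^\w\in\lang(\Aa)$. So the work is in the converse direction, followed by a quantitative bound on the witness.

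For the converse direction, suppose $\sigma\in\lang(\Aa)$ and fix an accepting run $\runA=(q_i,\tau_i)_{i\geq 0}$ on $\sigma$, so infinitely many $q_i$ belong to $F$. I would apply a classical self\=/covering argument: let $I=\{i \mid q_i\in F\}$, which is infinite, and consider the sequence $(q_i,\tau_i)_{i\in I}$ of configurations at accepting positions. Since $K$ is finite, an infinite subsequence sits on a single accepting state $q\in F$, and then Dickson's lemma applied to $\N^k$ produces indices $i<j$ in $I$ with $q_i=q_j=q$ and $\tau_i\leq \tau_j$ coordinate\=/wise. Concatenating the transitions used by $\runA$ on positions $0,1,\ldots,j{-}1$ with looping point $\ell'=i$ yields a sequence of transitions that satisfies every clause in the definition of a lasso pattern: the prefix\=/sum condition holds because the counter values along $\runA$ are non\=/negative, the loop\=/sum condition holds because $\tau_j\geq\tau_i$, the blindness clause about $I_i$ is automatic from $\runA$ being a valid run, and the loop contains an accepting state because $q_i=q\in F$. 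This establishes the existence of an accepting lasso pattern but not yet a bound on its length.

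The main obstacle is obtaining the \emph{computable} length bound on the shortest such lasso, and this is where I would invoke the Rackoff\=/style machinery for vector addition systems. The problem of deciding whether a given Petri net has an infinite run visiting a designated state infinitely often (equivalently, the repeated coverability / self\=/covering problem) is known to admit a witness of length bounded by a doubly\=/exponential function of the size of the net, by a standard adaptation of Rackoff's argument for coverability; see \cite{habermehl_petri_upper} for the precise form used here. The automaton $\Aa$ translates directly into a Petri net with a distinguished transition set corresponding to accepting states, and the lasso pattern I extracted above is exactly a repeated\=/coverability witness. Plugging in the Rackoff\=/type bound gives a computable function $f(\Aa)$ such that, whenever an accepting lasso pattern exists at all, one exists of length $\leq f(\Aa)$.

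The decidability consequence then follows immediately: one enumerates all sequences of transitions of length at most $f(\Aa)$ together with a choice of looping point $\ell'$, and checks mechanically the five\=/plus\=/one syntactic conditions defining an accepting lasso pattern. The $\omega$\=/language $\lang(\Aa)$ is non\=/empty if and only if at least one such sequence passes the check. The only non\=/elementary step in this outline is the invocation of the Rackoff bound, which is the heart of the argument; everything else is a compactness and combinatorial manipulation on an already\=/fixed accepting run.
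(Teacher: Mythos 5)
Your proposal is correct and takes essentially the same route as the paper, which in fact states Theorem~\ref{thm:decidable} without proof, importing both the existence of the accepting lasso pattern and the doubly\-/exponential length bound from \cite[Section~3]{habermehl_petri_upper} --- exactly the step you also delegate to that reference. Your added Dickson's\-/lemma extraction of a self\-/covering lasso from a fixed accepting run is sound and does verify all the clauses of the paper's definition of a lasso pattern, but the quantitative heart of the statement (the computable bound, hence decidability) rests on the same cited Rackoff\-/style result in both your write\-/up and the paper.
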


\subsection{Clubs of configurations}

This section is devoted to an~introduction of a~technical concept used in the determinisation procedure: \emph{clubs of configurations}. These are regions of the configuration space of a counter automaton that represent somehow similar behaviour of the machine. We will see later on that certain clubs that are \emph{optimal} can be treated in a~homogeneous way (Lemma~\ref{lem:club-non-empty}); and moreover each club can be split into a finite family of optimal ones (Proposition~\ref{pro:split-clubs}).

\medskip
Fix a~counter automaton~$\Aa$ with a~set of states~$K$ and $k$\=/counters $\Cc_1,\ldots,\Cc_k$. Let $N\in\N$ and $\gamma=\big(\gamma_1,\ldots,\gamma_k\big)$ be a~vector where each $\gamma_i$ for $i=1,\ldots,k$ is either a~natural number or the expression $({\geq} N)$. Recall that we will denote the configurations of $\Aa$ by $(q,\tau)$ with $\tau=(c_1,\ldots,c_k)\in\N^k$ being the vector of counter values. A~\emph{club} is a~set of configurations of $\Aa$ of the form
\begin{equation}
[q,\gamma]\eqdef \big\{ (q,\tau)\mid \forall{1\leq i \leq k}.\ (\gamma_i=\tau_i\in\N) \lor (\gamma_i=({\geq} N) \land \tau_i\geq N) \big\}.
\end{equation}
The \emph{dimension} of a~club is the number of expressions $({\geq} N)$ that appear in~$\gamma$. Similarly, the value $N\in \N$ is the \emph{threshold} of the club. The \emph{minimal configuration} of a~club $[q,\gamma]$ is the configuration $(q,\tau)$ where for $i=1,\ldots,k$ the coordinate $\tau_i$ equals $\gamma_i$ when $\gamma_i\in\N$ and equals $N$ otherwise. Notice that the minimal configuration of a~club is the ${\preceq}$\=/least element of the club (see the definition of the simulation order on page~\pageref{def:sim-order} and Remark~\ref{rem:residual-order}).

\medskip
Let $[q,\gamma]$ be a~club with threshold $N$ and $M\geq N$ be a~natural number. By $[q,\gamma\restr_M]$ (called the \emph{restriction} of $[q,\gamma]$ to the threshold $M$) we denote the club obtained from $[q,\gamma]$ by replacing each occurrence of $({\geq}N)$ by $({\geq}M)$. Notice that, as sets of configurations we have
\begin{equation}
\label{eq:subset-clubs}
[q,\gamma\restr_M]\subseteq [q,\gamma].
\end{equation}

\subsection{Languages of clubs}

Each club $[q,\gamma]$, seen as a set of configurations of $\Aa$, induces its $\omega$\=/language $\lang\big(\Aa, [q,\gamma]\big)$ being just the set theoretic union of all the $\omega$\=/languages $\lang\big(\Aa,(q,\tau)\big)$ for all configurations $(q,\tau)\in[q,\gamma]$. Although the notation might suggest that, we do not consider the possibility to treat clubs as configurations of a counter automaton and perform transitions on clubs --- instead we execute the automaton from each of the single configurations $(q,\tau)\in[q,\gamma]$ and then take the union of these $\omega$-languages.

We will now show how to decide if the $\omega$-language of a club is empty, see Corollary~\ref{cor:decidable} at the end of this subsection. For that, fix a~$k$-blind counter B\"uchi automaton $\Aa$ and a~club $[q,\gamma]$ of dimension $d$ and threshold $N$.

Let $(q,\tau_0)$ be the minimal configuration of $[q,\gamma]$. Without loss of generality we can assume that $\gamma=\big(\gamma_1,\ldots,\gamma_{k'}, ({\geq} N), \ldots,({\geq} N)\big)$ with the values $\gamma_1,\ldots,\gamma_{k'}$ being natural numbers and $k={k'}+d$. Notice that by the choice of $\tau_0$ we know that $\tau_0=\big(\gamma_1,\ldots,\gamma_{k'}, N, \ldots, N\big)$.

Let $a_0,a_1,a_2\in \Sigma$ be any (not necessarily distinct) fixed letters of the alphabet $\Sigma$. Let $\tau_0^1,\ldots,\tau_0^Z$ be a sequence of vectors in $\{0,1\}^k$ such that $\sum_{j=1}^Z \tau_0^j = \tau_0$, with $Z\in\N$ being the maximal of the coordinates of $\tau_0$. Let $\tau_1=(0,\ldots,0,1,\ldots,1)$ be a~vector with $k'$ zeros followed by $d$ ones.

Consider the blind counter B\"uchi automaton denoted $\big([q,\gamma]\cdot\Aa\big)$ depicted on Figure~\ref{fig:club-test}. This automaton first reads a sequence of letters $a_0$ increasing all the counters to the exact values given by $\tau_0$ (using the vectors $\tau_0^j$ for that); then it can arbitrarily many times read $a_1$ and increase the counters numbered $k'{+}1,k'{+}2,\ldots,k$, i.e.~those corresponding to the value $(\geq N)$ in $\gamma$; and then it reads $a_2$ and moves to a~copy of the automaton $\Aa$ into the state $q$.

\begin{figure}[h]
\centering
\scalebox{1.1}{
\begin{tikzpicture}[->,>=latex,auto,initial text={},scale=0.9]
\tikzstyle{trans}=[scale=0.8]
\node[state,initial  ] (q0) at (0.0,0) {$q_0$};
\node[state          ] (q1) at (2.5,0) {$q_1$};
\node (qd) at (5.0,0) {$\cdots$};
\node[state          ] (qZ) at (7.5,0) {$q_Z$};

\node at (12,1.5) {$\Aa$};
\draw[rounded corners=10] (9, -1) rectangle (13, 2.5);

\node[state          ] (qf) at (10,0) {$q$};

\path (q0) edge node[trans] {$a_0:\tau_0^1$} (q1);
\path (q1) edge node[trans] {$a_0:\tau_0^2$} (qd);
\path (qd) edge node[trans] {$a_0:\tau_0^Z$} (qZ);
\path (qZ) edge [loop, in=60 , out=120, looseness=8] node[trans] {$a_1:\tau_1$} (qZ)
           edge node[trans] {$a_2$} (qf);
\end{tikzpicture} }\vspace*{1mm}
\caption{The automaton denoted $\big([q,\gamma]\cdot\Aa\big)$ that checks non\=/emptiness of the set $\lang\big(\Aa,[q,\gamma]\big)$.}
\label{fig:club-test}
\end{figure}

\begin{proposition}
\label{pro:update-club}
Fix a~blind counter B\"uchi automaton $\Aa$ and a~club $[q,\gamma]$ with threshold $N$. Then one can effectively compute a~number $M\geq N$ such that the following conditions are equivalent:
\begin{enumerate}
\item $\lang\big(\Aa,[q,\gamma]\big)\neq\emptyset$,
\item $\lang\big([q,\gamma]\cdot\Aa\big)\neq\emptyset$,
\item $\lang\big(\Aa,(q,\tau'_0)\big)\neq\emptyset$ for $(q,\tau'_0)$ the minimal configuration of $[q,\gamma\restr_M]$,
\item $\lang\big(\Aa,[q,\gamma\restr_M]\big)\neq\emptyset$.
\end{enumerate}
\end{proposition}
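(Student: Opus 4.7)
My plan is to establish the cycle of implications $(3) \Rightarrow (4) \Rightarrow (1) \Rightarrow (2) \Rightarrow (3)$ once a suitable $M$ has been defined, and to show that this $M$ is effectively computable. The easy implications are $(3) \Rightarrow (4)$, which holds because the minimal configuration of $[q,\gamma\restr_M]$ lies in this club, and $(4) \Rightarrow (1)$, which follows directly from the inclusion~\eqref{eq:subset-clubs}. The equivalence $(1) \Leftrightarrow (2)$ is essentially by construction of $([q,\gamma]\cdot\Aa)$: inspecting Figure~\ref{fig:club-test}, every accepting run of $([q,\gamma]\cdot\Aa)$ must enter the state $q$ of the embedded copy of $\Aa$ with counter values exactly of the form $(\gamma_1,\ldots,\gamma_{k'},N+n_{k'+1},\ldots,N+n_k)$ for some $n_j\geq 0$, i.e. in some configuration of $[q,\gamma]$; conversely, for any configuration $(q,\tau)\in[q,\gamma]$ the prefix gadget can reach exactly $(q,\tau)$, so the two $\omega$\=/languages are empty together.

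The core step is $(2) \Rightarrow (3)$, together with the extraction of the threshold $M$. Assuming condition (2), I apply Theorem~\ref{thm:decidable} to the automaton $([q,\gamma]\cdot\Aa)$ to obtain an accepting lasso pattern of length bounded by a computable number $B$ depending on $([q,\gamma]\cdot\Aa)$, hence on $\Aa$, $q$, $\gamma$ and $N$. The transitions of the prefix gadget preceding the state $q$ only increment counters, so along the initial portion of the lasso pattern the counter values can rise by at most $B$ in total. In particular, the configuration $(q,\tau)$ reached when the lasso pattern first enters the embedded copy of $\Aa$ satisfies $\tau_i = \gamma_i$ for $i\leq k'$ and $\tau_i \leq N + B$ for $i>k'$. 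I define $M \eqdef N+B$, which is effectively computable.

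It remains to observe that the continuation of the lasso pattern inside the embedded $\Aa$ witnesses an accepting run of $\Aa$ starting from this $(q,\tau)$, so $\lang\big(\Aa,(q,\tau)\big)\neq\emptyset$. The minimal configuration of $[q,\gamma\restr_M]$ is $(q,\tau'_0)$ with $\tau'_0 = (\gamma_1,\ldots,\gamma_{k'},M,\ldots,M)$, and by the choice of $M$ we have $\tau'_0 \succeq \tau$. Applying Remark~\ref{rem:residual-order} then yields $\lang\big(\Aa,(q,\tau'_0)\big)\supseteq\lang\big(\Aa,(q,\tau)\big)\neq\emptyset$, which is condition (3), completing the cycle.

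The only subtle point is coordinating the effectiveness of $B$ from Theorem~\ref{thm:decidable} with the precise accounting of counter increments in the prefix gadget; here the gadget is designed so that exactly one $+1$ per counter per transition is possible, hence the crude bound $M=N+B$ works and is computable from the inputs. I expect no further difficulty, since both the uses of Remark~\ref{rem:residual-order} and of Theorem~\ref{thm:decidable} are invoked in a black\=/box fashion and the blindness of the counters is precisely what makes the monotonicity in Remark~\ref{rem:residual-order} available.
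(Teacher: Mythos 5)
Your overall route is the same as the paper's: you take $M$ to be $N$ plus the computable bound on the length of an accepting lasso pattern for $\big([q,\gamma]\cdot\Aa\big)$ supplied by Theorem~\ref{thm:decidable}, you extract condition~(3) from such a lasso pattern via Remark~\ref{rem:residual-order}, and you dispatch $(3)\Rightarrow(4)\Rightarrow(1)$ by the inclusion~\eqref{eq:subset-clubs}; this is exactly the paper's cycle $1\Rightarrow2\Rightarrow3\Rightarrow4\Rightarrow1$ traversed from a different starting point. Your accounting in $(2)\Rightarrow(3)$ (the entry configuration has $({\geq}N)$-coordinates at most $N$ plus the lasso length, then dominate it by the minimal configuration of $[q,\gamma\restr_M]$ and apply Remark~\ref{rem:residual-order}) matches the paper's computation $\tau=\tau_0+\tau_1\cdot(j{-}Z{-}1)$ with $j\leq M_\ell$. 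You should, however, say explicitly why the lasso pattern enters the embedded copy of $\Aa$ at all, and does so before its looping point: the gadget states $q_0,\ldots,q_Z$ are non-accepting and unreachable from the copy, so the loop must live inside the copy; this is what makes the tail of the pattern an accepting run of $\Aa$ from $(q,\tau)$.

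There is one genuinely false claim, in your justification of $(1)\Rightarrow(2)$: you assert that for any configuration $(q,\tau)\in[q,\gamma]$ the prefix gadget can reach \emph{exactly} $(q,\tau)$. It cannot. After the $a_0$-segment the gadget sits precisely at the minimal configuration $\tau_0$, and each subsequent $a_1$ adds the single fixed vector $\tau_1$, which increments \emph{all} coordinates marked $({\geq}N)$ in lockstep; so the configurations reachable at the entry to state $q$ are exactly $\tau_0+m\cdot\tau_1$ for $m\geq 0$, and a club configuration whose $({\geq}N)$-coordinates exceed $N$ by different amounts is not among them. The implication survives, but the argument must be the one the paper gives (and which you already use in the other direction): take $m$ to be the maximal coordinate of $\tau$, reach $(q,\tau')$ with $\tau'\succeq\tau$, and invoke Remark~\ref{rem:residual-order} to lift the accepting run from $(q,\tau)$ to $(q,\tau')$. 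With that sentence repaired, your proof coincides with the paper's.
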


\begin{proof}
Let us denote by $\Aa'$ the automaton $\big([q,\gamma]\cdot\Aa\big)$. Let $M_\ell$ be the bound on the length of a~lasso pattern for $\Aa'$ given by Theorem~\ref{thm:decidable} and take $M\eqdef N+M_\ell$.

We start with the implication $1\Rightarrow 2$. Assume that $\alpha\in \lang\big(\Aa,(q,\tau)\big)$ for some $(q,\tau)\in [q,\gamma]$. Let $M'$ be the maximal coordinate of $\tau$. Consider $\alpha'=a_0^Z a_1^{M'} a_2\alpha$, where $Z$ is taken as in the construction of $\Aa'$. Let $(q,\tau')$ be the configuration of $\Aa'$ reached after reading the prefix $a_0^Z a_1^{M'} a_2$ of $\alpha'$. By the choice of $M'$, we know that $\tau'\succeq \tau$. Therefore, by Remark~\ref{rem:residual-order} we have $\alpha\in\lang\big(\Aa,(q,\tau')\big)$. Thus, there exists an~accepting run of $\Aa'$ over $\alpha'$ and $\lang\big(\Aa'\big)\neq\emptyset$.

Now consider the implication $2\Rightarrow 3$. Theorem~\ref{thm:decidable} together with the choice of $M_\ell$ guarantee that if $\lang\big(\Aa'\big)\neq\emptyset$ then there exists an accepting lasso pattern of $\Aa'$. Let $\big(\delta_i{=}(q_i,a_i,I_i,q'_i,J_i)\big)_{i=0,\ldots,\ell}\subseteq\Delta$ and $0\leq\ell'\leq \ell$ be such a~lasso pattern.

As the states $q_0,\ldots, q_Z$ of $\Aa'$ are not accepting, it means that the state $q$ must appear among $(q_i)_{i\leq \ell}$. Let $j$ be the minimal index such that $q_j=q$. Since $q_0,\ldots,q_Z$ are not reachable from the copy of $\Aa$ within $\Aa'$, we know that $Z< j\leq \ell'$. Consider the run $\runA$ of $\Aa'$ over an $\w$\=/word $\alpha$ that is obtained by following the transitions of the considered lasso pattern. We know that $\runA(j)=(q,\tau)$ for a~certain configuration $(q,\tau)$ and the considered state $q$. The rest of this run is accepting, witnessing that $\lang\big(\Aa,(q,\tau)\big)\neq\emptyset$. However, by the construction of $\Aa'$ we know that $\tau=\tau_0+\tau_1\cdot (j{-}Z{-}1)$.

Recall that $M=N+M_\ell$, which implies that $\tau_0'$ from Item~3 of the statement has the form $\tau_0+\tau_1\cdot M_\ell$. As $j\leq M_\ell$, we know that $(q,\tau)\preceq (q,\tau_0')$ and Remark~\ref{rem:residual-order} implies that $\lang\big(\Aa,(q,\tau'_0)\big)\neq\emptyset$.

The implication $3\Rightarrow 4$ is obvious, as $(q,\tau'_0)\in [q,\gamma\restr_M]$. Similarly, $4\Rightarrow 1$ is also clear because $[q,\gamma\restr_M]\subseteq [q,\gamma]$, see~\eqref{eq:subset-clubs}.
\end{proof}

\begin{corollary}
\label{cor:decidable}
It is decidable for a~club $[q,\gamma]$ of $\Aa$ if the set of $\omega$\=/words $\lang\big(\Aa,[q,\gamma]\big)$ is empty.
\end{corollary}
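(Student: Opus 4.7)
The plan is to derive this decidability result directly from Proposition~\ref{pro:update-club} together with Theorem~\ref{thm:decidable}. Given a club $[q,\gamma]$ of $\Aa$, the proof proceeds in two short steps.

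First, I would observe that the automaton $\big([q,\gamma]\cdot\Aa\big)$ constructed in the previous subsection (see Figure~\ref{fig:club-test}) is effectively computable from $\Aa$ and the finite data defining $[q,\gamma]$, namely the state $q$, the threshold $N$, and the tuple $\gamma$. Indeed, its additional finite control only needs the vectors $\tau_0^1,\ldots,\tau_0^Z$ that sum to the minimal configuration $\tau_0$ of $[q,\gamma]$, together with the increment vector $\tau_1$; all of these are effectively readable off $\gamma$. The resulting machine is a blind counter B\"uchi automaton of the same dimension as $\Aa$.

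Second, by the equivalence $1\Leftrightarrow 2$ in Proposition~\ref{pro:update-club}, we have
\[\lang\big(\Aa,[q,\gamma]\big)\neq\emptyset \Longleftrightarrow \lang\big([q,\gamma]\cdot\Aa\big)\neq\emptyset.\]
The right\=/hand side is the emptiness question for a blind counter B\"uchi automaton, which is decidable by Theorem~\ref{thm:decidable}. Composing the effective construction of $\big([q,\gamma]\cdot\Aa\big)$ with the emptiness procedure yields the desired decision algorithm.

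There is no real obstacle here beyond verifying that the construction of $\big([q,\gamma]\cdot\Aa\big)$ is genuinely effective, which is immediate from its finite description. Alternatively, one could invoke the equivalence $1\Leftrightarrow 3$ of Proposition~\ref{pro:update-club}: compute the effective threshold $M$, form the single minimal configuration $(q,\tau'_0)$ of $[q,\gamma\restr_M]$, and apply Theorem~\ref{thm:decidable} to the blind counter B\"uchi automaton obtained from $\Aa$ by hard\=/wiring $(q,\tau'_0)$ as its initial configuration. Either route gives the corollary.
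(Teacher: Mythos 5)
Your proof is correct and follows exactly the route the paper intends: the corollary is derived from the equivalence $1\Leftrightarrow 2$ of Proposition~\ref{pro:update-club}, the effective constructibility of the automaton $\big([q,\gamma]\cdot\Aa\big)$, and the decidability of emptiness for blind counter B\"uchi automata from Theorem~\ref{thm:decidable}. The alternative route via $1\Leftrightarrow 3$ that you mention is equally valid and requires no further comment.
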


\subsection{Optimal clubs}

A~club $[q,\gamma]$ is called \emph{trivial} if $\lang\big(\Aa,[q,\gamma]\big)=\emptyset$; otherwise $[q,\gamma]$ is called \emph{non trivial}. Proposition~\ref{pro:update-club} implies that each non-trivial club $[q,\gamma]$ can be restricted to another non-trivial club $[q,\gamma\restr_M]\subseteq [q,\gamma]$ such that already the $\omega$-language $\lang\big(\Aa,(q,\tau'_0)\big)$ is non empty for $\tau'_0$ the minimal configuration of $[q,\gamma\restr_M]$. In the latter part of the construction we will be interested in such \emph{optimal} clubs.

Let $[q,\gamma]$ be a~club with the minimal configuration $(q,\tau)$. Then $[q,\gamma]$ is called \emph{optimal} if the following implication holds:
\[\lang\big(\Aa,[q,\gamma]\big)\neq\emptyset\ \Longrightarrow\ \lang\big(\Aa,(q,\tau)\big)\neq\emptyset.\]
Notice that each club of dimension $0$, as a~set of configurations, is a~singleton and therefore it is optimal by the definition. Obviously, an~optimal non\=/trivial club with the minimal configuration $(q,\tau)$ satisfies $\lang\big(\Aa,(q,\tau)\big)\neq\emptyset$.

\begin{remark}
By Theorem~\ref{thm:decidable} and Corollary~\ref{cor:decidable} it is decidable whether a~given club is trivial and whether it is optimal.
\end{remark}

\begin{lemma}
\label{lem:subset-optimal}
If $[q,\gamma]\supseteq [q,\gamma']$ are two clubs and $[q,\gamma]$ is optimal then also $[q,\gamma']$ is optimal.
\end{lemma}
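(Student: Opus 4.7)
The plan is to argue by monotonicity of the residual languages with respect to the simulation order. Let $(q,\tau)$ and $(q,\tau')$ denote the minimal configurations of $[q,\gamma]$ and $[q,\gamma']$ respectively. First I would establish the key containment $\tau'\succeq\tau$, which follows from a coordinate-wise inspection of $\gamma$ and $\gamma'$ and the set-theoretic inclusion $[q,\gamma']\subseteq[q,\gamma]$: on each coordinate $i$, if $\gamma_i\in\N$ then forcibly $\gamma'_i=\gamma_i$, and if $\gamma_i=({\geq}N)$ then $\gamma'_i$ is either a natural number at least $N$ or an expression $({\geq}N')$ with $N'\geq N$. Either way the minimal value prescribed by $\gamma'$ on coordinate $i$ dominates the minimal value prescribed by $\gamma$.

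Next I would chain the implications required to verify optimality of $[q,\gamma']$. Assume $\lang\big(\Aa,[q,\gamma']\big)\neq\emptyset$. Since $[q,\gamma']\subseteq[q,\gamma]$ we also have $\lang\big(\Aa,[q,\gamma]\big)\neq\emptyset$, so the assumed optimality of $[q,\gamma]$ yields $\lang\big(\Aa,(q,\tau)\big)\neq\emptyset$. Finally, $\tau'\succeq\tau$ together with Remark~\ref{rem:residual-order} (which relies on the blindness of the counters) gives $\lang\big(\Aa,(q,\tau')\big)\supseteq\lang\big(\Aa,(q,\tau)\big)\neq\emptyset$, exactly witnessing optimality of $[q,\gamma']$.

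There is no real obstacle here; the lemma is a routine combination of three ingredients: (i) the geometric observation that shrinking a club can only raise the coordinate\-/wise minimum, (ii) the monotonicity of accepted languages under the simulation order in the blind setting, and (iii) the definition of optimality applied to the larger club. The only subtle point worth double-checking is the case analysis showing $\tau'\succeq\tau$, because the thresholds of $[q,\gamma]$ and $[q,\gamma']$ need not coincide; the argument above handles both sub-cases uniformly.
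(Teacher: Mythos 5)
Your proof is correct and follows essentially the same route as the paper: compare the minimal configurations of the two clubs, observe that the minimal configuration of the smaller club ${\succeq}$\=/dominates that of the larger one, and transfer non\=/emptiness via Remark~\ref{rem:residual-order}. Your write\=/up is in fact slightly more explicit than the paper's, since you spell out the intermediate step that $\lang\big(\Aa,[q,\gamma']\big)\neq\emptyset$ implies $\lang\big(\Aa,[q,\gamma]\big)\neq\emptyset$ before invoking the optimality of the larger club, and you justify the coordinate\=/wise domination of the minimal configurations, both of which the paper leaves implicit.
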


\begin{proof}
By the assumption that $[q,\gamma]\supseteq [q,\gamma']$, the minimal configuration $(q,\tau)$ of $[q,\gamma]$ is ${\preceq}$-smaller than the minimal configuration $(q,\tau')$ of $[q,\gamma']$. Thus, Remark~\ref{rem:residual-order} implies that if the $\omega$-language $\lang\big(\Aa,(q,\tau)\big)$ is non empty then also $\lang\big(\Aa,(q,\tau')\big)$ is non empty.
\end{proof}

The following lemma shows that each club can be made optimal by increasing its threshold.

\begin{lemma}
\label{lem:club-non-empty}
If $[q,\gamma]$ is a~club with threshold $N$ then there exists $M\geq N$ such that the club $[q,\gamma\restr_M]$ is optimal. Moreover, the value of $M$ can be effectively computed based on $\Aa$ and a~representation of $[q,\gamma]$.
\end{lemma}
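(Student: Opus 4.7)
The plan is to apply Proposition~\ref{pro:update-club} essentially as a black box. Given the club $[q,\gamma]$ with threshold $N$, let $M\geq N$ be precisely the value produced by that proposition. I will then verify that $[q,\gamma\restr_M]$, defined as the restriction of $[q,\gamma]$ to the threshold $M$, is already optimal; effectiveness of the construction is then inherited directly from the effectiveness claim in Proposition~\ref{pro:update-club}.

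To verify optimality, let $(q,\tau'_0)$ denote the minimal configuration of $[q,\gamma\restr_M]$ (this is the same $(q,\tau'_0)$ as in Item~3 of Proposition~\ref{pro:update-club}). Assume that $\lang\big(\Aa,[q,\gamma\restr_M]\big)\neq\emptyset$. Since the inclusion~\eqref{eq:subset-clubs} gives $[q,\gamma\restr_M]\subseteq [q,\gamma]$, the monotonicity of the $\omega$\=/language assignment on sets of configurations yields $\lang\big(\Aa,[q,\gamma]\big)\neq\emptyset$. This is Item~1 in Proposition~\ref{pro:update-club}, so by the equivalence established there we obtain Item~3, namely $\lang\big(\Aa,(q,\tau'_0)\big)\neq\emptyset$. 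This is exactly the defining implication of optimality for $[q,\gamma\restr_M]$.

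I do not expect any genuine obstacle here: all the hard work has been done in Proposition~\ref{pro:update-club}, whose proof in turn relied on the lasso pattern bound of Theorem~\ref{thm:decidable}. The only subtlety worth noting explicitly is that the minimal configuration of the restricted club $[q,\gamma\restr_M]$ coincides with the configuration $(q,\tau'_0)$ appearing in Item~3 of Proposition~\ref{pro:update-club}, which is immediate from unpacking the definitions: raising the threshold from $N$ to $M$ replaces every $({\geq}N)$\=/coordinate of the minimal configuration by $M$, matching the $\tau_0+\tau_1\cdot M_\ell$ vector used in that proof. The effectiveness statement is then clear, since Proposition~\ref{pro:update-club} delivers $M$ effectively from $\Aa$ and the finite representation of $[q,\gamma]$.
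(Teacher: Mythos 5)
Your proof is correct and is essentially the paper's own argument: the paper likewise takes $M$ from Proposition~\ref{pro:update-club} and invokes the implication $4\Rightarrow 3$ there, which is exactly the optimality of $[q,\gamma\restr_M]$ (your detour through Item~1 via the inclusion~\eqref{eq:subset-clubs} is just the chain $4\Rightarrow 1\Rightarrow 3$ spelled out, and that step is already contained in the proposition's proof).
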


\begin{proof}
It is enough to take the value $M$ from Proposition~\ref{pro:update-club}. The implication $4\Rightarrow 3$ of the proposition implies that $[q,\gamma\restr_M]$ is always optimal.
\end{proof}

If one doesn't care about the computability of the value $M$ in Lemma~\ref{lem:club-non-empty}, then one can use directly Remark~\ref{rem:residual-order}, as expressed by the following remark.

\begin{remark}
\label{rem:subclub-noneff}
Every machine model satisfying Remark~\ref{rem:residual-order} has the following property:
if $[q,\gamma]$ is a~club with threshold $N$ then there exists $M\geq N$ such that the club $[q,\gamma\restr_M]$ is optimal.
\end{remark}

\begin{proof}
Let $\Aa$ be a machine of the considered model that satisfies Remark~\ref{rem:residual-order}. If $\lang\big(\Aa,[q,\gamma]\big)$ is empty then the club is already optimal for $M=N$. Otherwise, let $(q,\tau)\in[q,\gamma]$ be a configuration such that $\lang\big(\Aa,(q,\tau)\big)\neq\emptyset$. Let $M$ be the maximal coordinate of $\tau$ and let $(q,\tau_0)$ be the minimal configuration of $[q,\gamma\restr_M]$ (it is obtained from $\gamma$ by replacing each coordinate $({\geq}N)$ by $M$). Then $(q,\tau)\preceq (q,\tau_0)$ and therefore $\lang\big(\Aa,(q,\tau_0)\big)\neq\emptyset$, which means that the club $[q,\gamma\restr_M]$ is optimal.
\end{proof}

The following proposition is the technical core of the construction. We believe that it is of independent interest.

\begin{proposition}
\label{pro:split-clubs}
Let $[q,\gamma]$ be a~club. Then $[q,\gamma]$, as a~set of configurations, can be written as a~finite pairwise disjoint union of optimal clubs. Moreover, such a~decomposition can be computed effectively.
\end{proposition}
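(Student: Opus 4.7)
The plan is to proceed by induction on the dimension $d$ of the club $[q,\gamma]$, using Lemma~\ref{lem:club-non-empty} as the key ingredient at each step and partitioning on the exact values of the ``unbounded'' coordinates that happen to be small.

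For the base case $d=0$, the set $[q,\gamma]$ is a singleton $\{(q,\tau_0)\}$, which is automatically optimal, so there is nothing to do.

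For the inductive step, assume the claim for all clubs of dimension strictly less than $d$, and let $[q,\gamma]$ be a club of dimension $d\geq 1$ with threshold $N$. Say positions $i_1,\ldots,i_d$ carry the entry $({\geq}N)$ in~$\gamma$. First, apply Lemma~\ref{lem:club-non-empty} to obtain an~effectively computable $M\geq N$ such that $[q,\gamma\restr_M]$ is optimal; this is the ``top'' piece of the decomposition. The remaining configurations $[q,\gamma]\setminus[q,\gamma\restr_M]$ are precisely those $(q,\tau)\in[q,\gamma]$ for which at least one unbounded coordinate has value in $\{N,N{+}1,\ldots,M{-}1\}$. Partition them according to which nonempty subset $S\subseteq\{i_1,\ldots,i_d\}$ collects the unbounded coordinates whose value is strictly less than $M$, together with the exact values assigned to the coordinates of $S$. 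For each pair $(S,f)$ with $S\neq\emptyset$ and $f\colon S\to\{N,\ldots,M{-}1\}$, define a club $[q,\gamma^{S,f}]$ by: keep the original natural-number entries of $\gamma$ unchanged; set the coordinates in $S$ to their values $f$; and replace the entries $({\geq}N)$ at the positions of $\{i_1,\ldots,i_d\}\setminus S$ by $({\geq}M)$. This yields finitely many pairwise disjoint clubs, and together with $[q,\gamma\restr_M]$ they form a pairwise disjoint partition of $[q,\gamma]$.

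Each club $[q,\gamma^{S,f}]$ has dimension $d-|S|<d$, so the inductive hypothesis supplies a finite partition of it into optimal clubs. Taking the union over all $(S,f)$ and adjoining the optimal club $[q,\gamma\restr_M]$ produces the desired finite partition of $[q,\gamma]$ into optimal clubs. The whole construction is effective: the value of $M$ is computable by Lemma~\ref{lem:club-non-empty}, the enumeration of pairs $(S,f)$ is finite and explicit, and the induction on $d$ has finite depth bounded by $k$. The main obstacle is conceptual rather than technical---namely, realising that the ``unbounded but actually small'' part of a club can be carved off coordinate by coordinate into lower-dimensional clubs, each of which falls under the inductive hypothesis.
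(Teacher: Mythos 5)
Your proposal is correct and follows essentially the same route as the paper: induction on the dimension, using Lemma~\ref{lem:club-non-empty} to carve off the optimal top piece $[q,\gamma\restr_M]$, and partitioning the remainder into lower-dimensional clubs by fixing exact values in $\{N,\ldots,M{-}1\}$ on some nonempty subset of the unbounded coordinates. Your parametrisation by pairs $(S,f)$ is just an explicit indexing of the paper's family $F'$, so the two arguments coincide.
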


\begin{proof}
The proof is inductive on the dimension $d$ of $[q,\gamma]$. Since each club of dimension $0$ is optimal, the thesis holds for $d=0$. Assume the thesis for all the clubs of dimensions at most $d$ and consider a~club $[q,\gamma]$ of dimension $d>0$ with threshold $N$. Without loss of generality we can assume that $\gamma=\big(\gamma_1,\ldots,\gamma_{k'}, ({\geq} N), \ldots, ({\geq} N)\big)$ --- the total number $k$ of coordinates of $\gamma$ is $k'{+}d$ and all the values $\gamma_1,\ldots,\gamma_{k'}$ are natural numbers.

Apply Lemma~\ref{lem:club-non-empty} to $[q,\gamma]$ to obtain $M\geq N$ such that the restriction $[q,\gamma\restr_M]$ is optimal. Notice that $\gamma\restr_M=\big(\gamma_1,\ldots,\gamma_{k'}, ({\geq} M), \ldots, ({\geq} M)\big)$.

Let $F$ be the set of clubs of the form $[q,\gamma']$ where $\gamma'$ equals $\gamma$ on coordinates $1$,\ldots, $k'$ and for each coordinate $i=k'{+1}$,\ldots, $k$ either $\gamma'_i=({\geq} M)$ or $\gamma'_i$ is a~natural number satisfying $N\leq \gamma'_i < M$. Since there is exactly $2^d$ choices for a~set of coordinates with $({\geq M})$ in $\gamma$, and for each such choice there is only finitely many clubs with $(\geq M)$ on exactly those coordinates in $F$, the set $F$ is finite.

\begin{claim}
The clubs in $F$ are pairwise disjoint and $[q,\gamma]=\bigcup F$.
\end{claim}

The disjointness follows directly from the construction. For the union, it is enough to notice that each $[q,\gamma']\in F$ satisfies $[q,\gamma']\subseteq [q,\gamma]$ and each $(q,\tau)\in [q,\gamma]$ can be found in one of the clubs of $F$.

\medskip
Notice that $[q,\gamma\restr_M]\in F$ --- it corresponds to the choice of all~$d$ coordinates being $({\geq}M)$. Let $F'=F\setminus\big\{[q,\gamma\restr_M]\big\}$. Observe that if $[q,\gamma']\in F'$ then the dimension of $[q,\gamma']$ is at most $d{-}1$. Thus, we can apply the inductive assumption to each club in $F'$ and take the union of all these clubs. Let $F''$ be the set of the clubs obtained this way. The clubs in $F''$ are pairwise disjoint, optimal, and disjoint from $[q,\gamma\restr_M]$. Thus,
\[[q,\gamma]=\bigcup F'' \cup [q,\gamma\restr_M],\]
is a decomposition of $[q,\gamma]$ into finitely many pairwise disjoint optimal clubs.

\medskip
For the effectiveness of the above construction, it is enough to observe that the bound $M$ given by Lemma~\ref{lem:club-non-empty} is effective and the rest of the construction is just a~recursive application of the same procedure.
\end{proof}

\subsection{Unambiguous automata}

Fix an~unambiguous blind counter B\"uchi automaton $\Aa$ with a~set of states~$K$ and $k$\=/blind counters $\Cc_1,\ldots,\Cc_k$. Our goal is to show that such an automaton cannot simultaneously (reading a finite word) reach two distinct configurations belonging to the same optimal non-trivial club.

\begin{lemma}
\label{lem:unique}
Let $[q,\gamma]$ be an~optimal club and $w\in \Sigma^\ast$ be a finite word. Assume that for $i=1,2$ there exists a~configuration $(q,\tau_i)\in[q,\gamma]$ that can be reached while reading $w$ from the initial configuration. If $\tau_1\neq \tau_2$ then the club $[q,\gamma]$ is trivial.

Similarly, if a configuration $(q,\tau)$ can be reached by two distinct runs while reading a finite word $w$ from the initial configuration then $\lang\big(\Aa,(q,\tau)\big)$ is empty.
\end{lemma}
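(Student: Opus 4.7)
The plan is to derive both statements from unambiguity of $\Aa$ by exhibiting two distinct accepting runs on a common input $\omega$\=/word whenever the conclusions fail. The key tool is Remark~\ref{rem:residual-order}, which, together with optimality, lets us \emph{lift} an accepting continuation from the minimal configuration of the club to accepting continuations from any configuration in the club.

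For the first part, suppose toward a contradiction that $[q,\gamma]$ is non\=/trivial while the two distinct configurations $(q,\tau_1)\neq (q,\tau_2)$ in $[q,\gamma]$ are both reachable after reading $w$. Let $(q,\tau_0)$ be the minimal configuration of $[q,\gamma]$. By optimality and non\=/triviality, $\lang\big(\Aa,(q,\tau_0)\big)\neq\emptyset$, so fix some $\alpha\in\lang\big(\Aa,(q,\tau_0)\big)$ together with a witnessing accepting run $\runA_0$ from $(q,\tau_0)$. Since $(q,\tau_i)\succeq (q,\tau_0)$ for $i=1,2$, by Remark~\ref{rem:residual-order} (whose proof just adds the coordinate\=/wise difference $\tau_i-\tau_0$ to every counter along $\runA_0$) we obtain accepting runs $\runA_i$ from $(q,\tau_i)$ on $\alpha$. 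Concatenating $\runA_i$ with the hypothesised prefix run that reaches $(q,\tau_i)$ after reading $w$ yields, for $i=1,2$, an accepting run of $\Aa$ on $w\alpha$. These two runs differ at position $|w|$ (they are at the distinct configurations $(q,\tau_1)$ and $(q,\tau_2)$), contradicting the assumption that $\Aa$ is unambiguous.

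The second part is even more direct and does not use optimality. Suppose $(q,\tau)$ is reached by two distinct prefix runs $\runA_1,\runA_2$ on $w$, and suppose toward a contradiction that $\lang\big(\Aa,(q,\tau)\big)\neq\emptyset$. Pick $\alpha\in\lang\big(\Aa,(q,\tau)\big)$ and an accepting continuation run $\runA$ of $\Aa$ from $(q,\tau)$ on $\alpha$. Then concatenating $\runA_1$ and $\runA_2$ with $\runA$ produces two distinct accepting runs of $\Aa$ on $w\alpha$, again contradicting unambiguity.

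No serious obstacle is expected: the only subtle point is making sure optimality is invoked in the right place in the first part (it is what guarantees that an accepting continuation already exists from the \emph{minimal} configuration of $[q,\gamma]$, which then lifts uniformly to every configuration of the club via the blindness\=/based monotonicity of Remark~\ref{rem:residual-order}).
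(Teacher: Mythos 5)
Your proof is correct and follows essentially the same route as the paper's: use optimality to get an accepting continuation $\alpha$ from the minimal configuration, lift it via Remark~\ref{rem:residual-order} to both $(q,\tau_1)$ and $(q,\tau_2)$, and derive two distinct accepting runs on $w\alpha$ contradicting unambiguity; the second part is handled identically in both. Your explicit remark that the two runs differ at position $|w|$ is a small but welcome clarification the paper leaves implicit.
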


\begin{proof}
Consider the first statement of the lemma. Assume to the contrary that $[q,\gamma]$ is non trivial. Let $(q,\tau)$ be the minimal configuration of $[q,\gamma]$. By the assumption of optimality, the $\omega$-language $\lang\big(\Aa,(q,\tau)\big)$ is non\=/empty, let $\alpha\in \lang\big(\Aa,(q,\tau)\big)$ be an $\omega$\=/word witnessing that. Since both $(q,\tau_1)$ and $(q,\tau_2)$ belong to $[q,\gamma]$, we know that $(q,\tau)\preceq (q,\tau'_1)$ and $(q,\tau)\preceq (q,\tau_2)$. Thus, Remark~\ref{rem:residual-order} implies that
\[\alpha\in \lang\big(\Aa,(q,\tau_1)\big)\cap \lang\big(\Aa,(q,\tau_2)\big).\]
This gives a~contradiction because it means that $\Aa$ has two distinct accepting runs over $w\cdot \alpha$.

For the second statement of the lemma, the proof is analogous: if $\alpha\in\lang\big(\Aa,(q,\tau)\big)$ then $\Aa$ has two distinct accepting runs over $w\cdot \alpha$.
\end{proof}

Notice that we can easily extend the above lemma to $p$-unambiguous blind counter B\"uchi automaton, for an integer $p \geq 1$.  Let us call a  blind counter B\"uchi automaton $p$-unambiguous if every $\om$-word over the input alphabet has at most $p$ accepting runs. Then we can state the following result which holds for a $p$-unambiguous blind counter B\"uchi automaton. The proof   is similar to the above one.

\begin{lemma}
\label{lem:unique2}
Let $[q,\gamma]$ be an~optimal club and $w\in \Sigma^\ast$ be a finite word. Assume that for $i=1,2, \ldots , p, p+1$ there exists a~configuration $(q,\tau_i)\in[q,\gamma]$ that can be reached while reading $w$ from the initial configuration. If for all $i, j \in [1, p+1]$ $i\neq j \rightarrow \tau_i\neq \tau_j$ then the club $[q,\gamma]$ is trivial.

Similarly, if a configuration $(q,\tau)$ can be reached by $p+1$  distinct runs while reading a finite word $w$ from the initial configuration then $\lang\big(\Aa,(q,\tau)\big)$ is empty.
\end{lemma}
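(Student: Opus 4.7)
The plan is to mimic the proof of Lemma~\ref{lem:unique} almost verbatim, replacing the pair of configurations/runs with a $(p{+}1)$\=/tuple and invoking $p$\=/unambiguity at the end.

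For the first statement, I would argue by contradiction: assume $[q,\gamma]$ is non\=/trivial, and let $(q,\tau)$ be its minimal configuration. Optimality gives some $\alpha \in \lang\big(\Aa,(q,\tau)\big)$. Since each $(q,\tau_i)$ lies in $[q,\gamma]$, Remark~\ref{rem:residual-order} yields
\[
\alpha \in \lang\big(\Aa,(q,\tau_i)\big) \qquad \text{for every } i=1,\ldots,p{+}1.
\]
Let $\runA_i$ be a finite run of $\Aa$ on $w$ ending in $(q,\tau_i)$, and let $\runA_i'$ be an accepting extension on $\alpha$ from $(q,\tau_i)$. The concatenations $\runA_i\cdot \runA_i'$ are $p{+}1$ accepting runs on $w\cdot \alpha$; they are pairwise distinct because the configurations $(q,\tau_i)$ reached after the prefix $w$ are pairwise distinct by hypothesis. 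This contradicts $p$\=/unambiguity of $\Aa$.

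For the second statement I would reason analogously, but this time the distinctness of the $p{+}1$ accepting runs on $w\cdot\alpha$ is inherited directly from the assumed distinctness of the $p{+}1$ runs of $\Aa$ on $w$ ending in the common configuration $(q,\tau)$: concatenating each of them with a fixed accepting run on $\alpha$ starting from $(q,\tau)$ (which exists whenever $\lang\big(\Aa,(q,\tau)\big)\neq\emptyset$, contrary to what we want to show) preserves distinctness of the prefixes, again violating $p$\=/unambiguity.

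No real obstacle is expected --- the statement is a direct quantitative strengthening of Lemma~\ref{lem:unique}, and all the ingredients (optimality giving a witness $\alpha$, monotonicity of $\lang(\Aa,\cdot)$ with respect to ${\preceq}$, and preservation of run\=/distinctness under concatenation) are already in place. The only point deserving a line of care is the explicit check that distinctness of the finite prefixes on $w$ transfers to distinctness of the infinite accepting runs on $w\cdot\alpha$, which is immediate since two runs that differ at any finite position are by definition distinct.
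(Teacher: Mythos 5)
Your proof is correct and follows exactly the route the paper intends: the paper gives no separate argument for this lemma beyond the remark that "the proof is similar" to that of Lemma~\ref{lem:unique}, and your write-up is precisely that adaptation (optimality yields a witness $\alpha$, Remark~\ref{rem:residual-order} lifts it to all $p{+}1$ configurations, and the $p{+}1$ pairwise distinct accepting runs on $w\cdot\alpha$ contradict $p$\=/unambiguity). The extra care you take in checking that distinctness of the finite prefixes (or of the finite runs, in the second statement) transfers to the concatenated infinite runs is exactly the only point where the generalisation could have gone wrong, and you handle it correctly.
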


\subsection{The machine $\Mm$}

The construction of the machine $\Mm$ is based on the above Lemma  \ref{lem:unique}. Thanks to it, we know that in the naive powerset construction, it is enough to remember at most one configuration of $\Aa$ for each optimal club. Moreover, by Proposition~\ref{pro:split-clubs}, one can split the whole configuration space of $\Aa$ into finitely many pairwise disjoint optimal clubs.

Let $F$ be a (finite) family of these clubs. Let the machine $\Mm$ store at most one configuration $(q,\tau)$ for each club $[q,\gamma]\in F$ (of course we require that $(q,\tau)\in[q,\gamma]$). Notice that a finite family of counters is enough to store all these configurations at once.

Upon reading a successive letter, $\Mm$ can update all the stored configurations according to all the possible transitions of $\Aa$. Whenever two distinct configurations obtained that way belong to the same club of $F$ (we call such a situation a \emph{collision}), $\Mm$ can \emph{discard} both these configurations, because the club is guaranteed to be trivial.

More precisely, there are in fact two possible scenarios for a configuration $(q',\tau')$ to be discarded because of a collision. The first case is that there might be a distinct configuration $(q',\tau'')\neq(q',\tau')$ reachable by one of the simulated transitions, such that both $(q',\tau')$ and $(q',\tau'')$ belong to a single club $[q',\gamma']\in F$. The second case is that the same configuration $(q',\tau')$ can also be reached by a different transition from one of the previously stored configurations. However, in both cases Lemma~\ref{lem:unique} applies, guaranteeing that $\lang\big(\Aa,(q',\tau')\big)=\emptyset$.

Due to the policy of discarding, the invariant of storing at most one configuration per club from $F$ is preserved.

Finally, $\Mm$ uses a Muller condition to check if any of the runs of $\Aa$ that are simulated in parallel, turned out to be accepting.

See Appendix~\ref{app:construction} for a precise construction of $\Mm$ and a discussion of exact computational features used in its construction.

This concludes the proof of Theorem~\ref{thm:main-unamb}.

\begin{remark}
Notice that Theorem \ref{thm:main-unamb} can be extended to the case of a $p$-unambiguous blind counter B\"uchi automaton, using Lemma \ref{lem:unique2} instead of  Lemma \ref{lem:unique}.
We do not enter into the details which are left to the reader. The ideas are identical but the machine  $\Mm$ will be simply more complicated since it has to store at most $p$  configurations $(q,\tau)$ for each club $[q,\gamma]\in F$.
\end{remark}

\begin{theorem}
\label{thm:main-unamb2}
If $\Aa$ is a $p$-unambiguous blind counter B\"uchi automaton, for an integer $p\geq 1$,  then $\lang(\Aa)$ can be recognised by a~deterministic Muller counter machine $\Mm$ that allows copying, hence by a deterministic Muller Turing machine. Moreover, the translation from $\Aa$ to $\Mm$ is effective.
\end{theorem}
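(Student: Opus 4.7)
The plan is to mimic the determinisation construction of Theorem~\ref{thm:main-unamb} step by step, with the single modification that each optimal club in the bookkeeping of $\Mm$ is allowed to carry up to $p$ distinct configurations instead of at most one. First, I would re\-use verbatim the machinery of clubs, optimal clubs, and the decomposition of the full configuration space of $\Aa$ into a finite disjoint union of optimal clubs, as provided by Proposition~\ref{pro:split-clubs}. Nothing in that development depends on the degree of ambiguity of $\Aa$: it is purely a consequence of Remark~\ref{rem:residual-order} and the decidability of emptiness for blind counter B\"uchi automata, and these hold for every blind counter B\"uchi automaton, unambiguous or not.

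Next, I would invoke Lemma~\ref{lem:unique2} in place of Lemma~\ref{lem:unique}. It says exactly what we need: if $p+1$ pairwise distinct configurations of a single optimal club $[q,\gamma]\in F$ can be reached after reading a common finite prefix~$w$, then $[q,\gamma]$ is trivial, and similarly, if a single configuration is reached by $p+1$ distinct runs on $w$, then its residual language is empty. The proof is identical to that of Lemma~\ref{lem:unique}: any accepting continuation from the minimal configuration of $[q,\gamma]$ would, by Remark~\ref{rem:residual-order}, give $p+1$ pairwise distinct accepting runs on $w\cdot\alpha$, contradicting $p$-unambiguity.

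Based on this, the deterministic Muller counter machine $\Mm$ is built as in Section~\ref{sec:determinisation}, except that its finite control reserves, for each club $[q,\gamma]\in F$, slots for up to $p$ configurations of $\Aa$ lying in that club (so the total number of counters used by $\Mm$ is $p\cdot |F|\cdot k$, each counter still manipulated through zero tests and copying exactly as before). Upon reading a letter, $\Mm$ applies in parallel every enabled transition of $\Aa$ to every stored configuration, classifies the resulting configurations by club using zero tests against the (finitely many, computable) threshold constants produced by Proposition~\ref{pro:split-clubs}, and merges duplicates. A \emph{collision} is declared for a club $[q,\gamma]$ whenever the number of pairwise distinct configurations produced in that club would exceed~$p$; by Lemma~\ref{lem:unique2} that club is trivial, so all configurations currently falling in $[q,\gamma]$ (together with all runs that led to them) may be safely discarded. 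This preserves the invariant that at each moment at most $p$ configurations per club are stored, and every accepting run of $\Aa$ on the input is tracked until either an accepting state is seen infinitely often or the run is dropped to an empty residual language.

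Finally, the Muller acceptance condition of $\Mm$ declares acceptance whenever some surviving tracked run of $\Aa$ visits an accepting state of $\Aa$ infinitely often; since up to $p$ parallel runs per club are maintained, this is a standard Muller condition over the finite state component of $\Mm$. The translation is effective because Proposition~\ref{pro:split-clubs}, and hence all threshold constants needed for club classification, is effective. I expect the only genuinely delicate point to be the bookkeeping: we must be careful that the ``discard upon collision'' rule is triggered only after all transitions at the current letter have been computed, so that two colliding configurations produced by different parent configurations are both removed at once, and that the Muller condition reads only the surviving slots; once the state space is designed to contain, for each club, an explicit ``occupancy pattern'' of up to $p$ slots, this is routine. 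This yields the deterministic Muller counter machine with copying asserted by Theorem~\ref{thm:main-unamb2}, and therefore also an equivalent deterministic Muller Turing machine.
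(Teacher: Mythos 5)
Your proposal is correct and takes essentially the same route as the paper: the paper's own ``proof'' is precisely the remark preceding the theorem, namely that one reruns the construction of Theorem~\ref{thm:main-unamb} with Lemma~\ref{lem:unique2} in place of Lemma~\ref{lem:unique} and lets $\Mm$ store up to $p$ configurations per optimal club, with all further details left to the reader. The one point where your fleshed-out bookkeeping is slightly off is the collision rule: counting \emph{pairwise distinct} configurations (and ``merging duplicates'') would force the real-time machine $\Mm$ to compare counter values, which is not available as a primitive; it is both simpler and sufficient to declare a collision as soon as more than $p$ simulated runs land in the same club, irrespective of whether the reached configurations coincide, since the pigeonhole argument via Remark~\ref{rem:residual-order} then still produces $p+1$ distinct accepting runs on $w\cdot\alpha$ and shows the club is trivial.
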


\begin{corollary}
Assume that $\Aa$ is a $p$-unambiguous blind counter B\"uchi automaton,  for an integer $p\geq 1$. Then the $\om$\=/language $\lang(\Aa)$ is a~Boolean combination of $\bpi{2}$\=/sets, hence a~Borel $\bdelta{3}$\=/set. It is actually an~effective $\bdelta{3}$\=/set, i.e.~an~arithmetical (lightface) $\Delta^0_3$\=/set.
\end{corollary}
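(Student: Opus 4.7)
The proof is essentially a direct chain of applications of the results established just above this corollary, mirroring the earlier corollary for the unambiguous case. First I would invoke Theorem~\ref{thm:main-unamb2}, which provides, from the given $p$-unambiguous blind counter B\"uchi automaton $\Aa$, an equivalent deterministic Muller counter machine with copying $\Mm$ (hence, equivalently, a deterministic Muller Turing machine), effectively computable from~$\Aa$.

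Next I would apply the classical result, already recalled in the paper (in particular in the discussion preceding the analogous corollary for unambiguous automata), that every $\omega$-language accepted by a deterministic Muller Turing machine is a Boolean combination of arithmetical $\Pi^0_2$-sets. Indeed, for a deterministic machine the unique run on an input $\sigma$ is computable uniformly from $\sigma$, so the condition ``the set $\mathrm{In}(r)$ of states visited infinitely often equals a given set $F\subseteq K$'' is expressible as a finite Boolean combination of statements of the form ``state $q$ is visited infinitely often along the unique run of $\Mm$ on $\sigma$'', each of which is a $\Pi^0_2$-condition on $\sigma$. The Muller acceptance condition is a finite disjunction over such subsets $F\in\mathcal{F}$, so $\lang(\Mm)$ is a Boolean combination of $\Pi^0_2$-sets, and therefore lies in $\bdelta{3}$.

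For the effectiveness refinement, I would observe that the whole argument is uniform in $\sigma$: the unique run of the deterministic Muller machine $\Mm$ on $\sigma$ is computed by a Turing machine with oracle~$\sigma$, so the predicate ``state $q$ is visited infinitely often'' is a (lightface) $\Pi^0_2$ predicate on $\sigma$. Hence $\lang(\Aa)=\lang(\Mm)$ is a Boolean combination of lightface $\Pi^0_2$-sets, and consequently belongs to the arithmetical class~$\Delta^0_3$, which is the effective $\bdelta{3}$ class.

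The main (and only non-routine) obstacle has already been absorbed into Theorem~\ref{thm:main-unamb2}: namely the determinisation of a $p$-unambiguous blind counter automaton into a deterministic Muller counter machine with copying, via the club decomposition and the extension of Lemma~\ref{lem:unique} to Lemma~\ref{lem:unique2}. Once this determinisation is in hand, the topological and arithmetical upper bounds are immediate consequences of the standard analysis of deterministic Muller acceptance.
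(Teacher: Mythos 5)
Your proposal is correct and follows exactly the route the paper takes: apply Theorem~\ref{thm:main-unamb2} to obtain an equivalent deterministic Muller counter machine (hence a deterministic Muller Turing machine), then invoke the classical fact that $\omega$-languages of deterministic Muller Turing machines are Boolean combinations of (lightface) $\Pi^0_2$-sets and therefore lie in $\Delta^0_3$, hence in $\bdelta{3}$. The only difference is that you spell out the justification of that classical fact (uniform computability of the unique run, $\Pi^0_2$-ness of the infinite-visitation predicates), which the paper simply cites from the literature.
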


\section{Concluding remarks}
\label{sec:conclusions}

We have proved  that  the Wadge hierarchy of  Petri nets $\om$-languages, and even of  $\om$-languages in the class {\bf r}-${\bf BCL}(4)_\om$,
is equal to the Wadge hierarchy of  effective analytic sets, and that it is highly undecidable to determine the topological complexity of a Petri net $\om$-language.
Based on the constructions used in the proofs of the above results, we have also shown that the equivalence and  the inclusion problems for
$\om$-languages of Petri nets are $\Pi_2^1$-complete, hence highly undecidable.
In some sense, from the two points of view of the topological complexity and of highly undecidable problems, our results show that, in contrast with the finite behaviour,  the infinite behaviour of non-deterministic Petri nets is closer to the infinite behaviour of Turing machines than to that of finite automata.

As further developments showing the inherent complexity of the model, we have proved that the determinacy of Wadge games between two players in charge of $\om$-languages of Petri nets is equivalent to the (effective) analytic determinacy,  which is known to be a large cardinal assumption, and thus is not provable in the axiomatic system {\bf ZFC}. We have also provided a Petri net whose $\om$-language is either a Borel ${\bf \Pi}^0_2$-set or a non-Borel set, depending on the model of {\bf ZFC} under consideration.

Additionally, we have shown that in fact only one counter is enough to obtain a single $\asigma{1}$\=/complete $\omega$-language, i.e.~an $\omega$-language of maximal topological complexity among those recognisable by Petri nets. All these results imply that non-deterministic Petri nets are expressively stronger than any reasonable model of deterministic or unambiguous machines.

We have also studied the expressive power of unambiguous Petri nets. As it turns out, they admit a~determinisation procedure. As a consequence, the topological complexity of their $\omega$-languages is low in the Borel hierarchy (they all are $\bdelta{3}$ sets).

It remains open  for further study to determine the Borel and Wadge hierarchies of $\om$-languages accepted by automata with less than four blind counters. In particular, it then remains open to determine whether there exist some  $\om$-languages accepted by 1-blind-counter automata which are Borel of rank greater than $3$.

\paragraph*{Acknowledgements}

The authors would like to cordially thank Georg Zetzsche for his suggestions about decidability issues for blind counter automata.

The authors also thank anonymous referees of this paper and  of  the papers incorporated in this journal version --- their suggestions allowed to improve the presentation.


\appendix

\newpage

\section{Construction of the machine $\Mm$ from Section~\ref{sec:determinisation}}
\label{app:construction}

In this appendix we provide a precise construction of a real time counter machine $\Mm$ with zero tests and copying, as stated in Theorem~\ref{thm:main-unamb}. We assume that an~unambiguous blind counter B\"uchi automaton $\Aa$ is fixed.

\subsection{Simulating transitions}

First we show how it can represent single configurations of $\Aa$ and simulate transitions.

Let $[q,\gamma_0]=\big[q,\big(({\geq}0), ({\geq}0),\ldots,({\geq}0)\big)\big]$ be a~maximal~club of dimension $k$ (recall that $k$ is the number of blind counters of the input automaton $\Aa$). Notice that $\bigcup_{q\in K} [q,\gamma_0]$ is the set of all configurations of $\Aa$. Apply Proposition~\ref{pro:split-clubs} to each of the clubs $[q,\gamma_0]$ obtaining a~finite set of clubs~$F_q$. Let $F\eqdef \bigcup_{q\in K} F_q$. Then $F$ is a~finite set of pairwise disjoint optimal clubs and $\bigcup F$ is the set of all configurations of~$\Aa$. By further splitting the clubs in $F$ and applying Lemma~\ref{lem:subset-optimal}, we can ensure that all the clubs in $F$ share the same threshold $N$ --- it is enough to take as $N$ the maximal threshold of the clubs in $F$ and then perform a similar splitting as in the proof of Proposition~\ref{pro:split-clubs}. Clearly we can ensure that $N>0$.
Notice that the set $F$ can be effectively computed based on $\Aa$ --- it is enough to use Proposition~\ref{pro:split-clubs} for that.

For each club $[q,\gamma]\in F$ the machine $\Mm$ has a separate set of $k$ counters, denoted $\Cc_1^{[q,\gamma]},\ldots, \Cc_k^{[q,\gamma]}$. This means that $\Mm$ has $k\cdot |F|$ counters in total. A configuration $(q,\tau)\in [q,\gamma]\in F$ is represented in these counters by subtracting~$N$ from each of the counter values, i.e.~for $j=1,\ldots,k$ the counter $\Cc_j^{[q,\gamma]}$ stores the value $\max(\tau_j{-}N,0)$. Notice that if the $j$th coordinate of $\gamma$ equals $({\geq} N)$ then $\tau_j{-}N\geq 0$ and our way of storing the value is exact. On the other hand, if $\gamma_j$ is a natural number then $\tau_j=\gamma_j$, because $(q,\tau)\in [q,\gamma]$. This means that in this case the value $\tau_j$ (even if smaller than $N$) is determined by $\gamma$ and therefore known.

Now we will show how the machine $\Mm$ can simulate a transition $a: (q,\tau) \mapsto_{\Aa}  (q',\tau')$ of $\Aa$. Assume that $(q,\tau)\in [q,\gamma]\in F$ and the configuration $(q,\tau)$ is stored in the counters $\Cc_1^{[q,\gamma]},\ldots, \Cc_k^{[q,\gamma]}$ of~$\Mm$. First notice that based on $\gamma$ the machine $\Mm$ can decide if the transition is possible at all, i.e. if the non-negativity conditions of the transition are met by $\tau$ (the assumption that $N>0$ is used here). Moreover, for each coordinate $j$ such that $\gamma_j = ({\geq}N)$, the machine $\Mm$ can use a zero test on $\Cc_j^{[q,\gamma]}$ to determine if $\tau_j=N$ or $\tau_j>N$. The remaining coordinates of $\tau$ are fixed by the knowledge of $\gamma$. This allows $\Mm$ to determine the unique club $[q',\gamma']\in F$ such that $(q',\tau')\in [q',\gamma']$. Thus, $\Mm$ can copy the values from the counters $\Cc_1^{[q,\gamma]},\ldots, \Cc_k^{[q,\gamma]}$ to the counters $\Cc_1^{[q',\gamma']},\ldots, \Cc_k^{[q',\gamma']}$ and additionally perform the counter modifications as in the original transition: if $\tau'_j\neq \tau_j$ then the machine updates $\Cc_j^{[q',\gamma']}$ in such a way to ensure that $\Cc_j^{[q',\gamma']}=\max(\tau'_j{-}N,0)$ --- this update may also require to perform a zero test on $\Cc_j^{[q,\gamma]}$ to know if $\tau_j=N$ or not.

There are two technical subtleties of the above construction. First $[q',\gamma']$ might be equal to $[q,\gamma]$ and then no copying is needed. Second, it might be the case that $\gamma_j=\tau_j$ is a natural number greater than $N$ in which case the value of $\Cc_j^{[q,\gamma]}$ is a positive number equal to $\tau_j-N$. In that case that number needs to be stored in $\Cc_j^{[q,\gamma]}$ (even though it is fixed by $\gamma$) and then copied to $\Cc_j^{[q',\gamma']}$ because it might be the case that $\gamma'_j=({\geq}N)$.

We exploit here the fact that the thresholds of all the clubs in $F$ are the same.

\subsection{Simulating non-determinism}

The above simulation allows us to represent one configuration of $\Aa$ for each club $[q,\gamma]\in F$. Moreover, we know how to perform transitions on these configurations. We can also simulate non\=/determinism of $\Aa$: if $a: (q,\tau) \mapsto_{\Aa}  (q',\tau')$ and $a: (q,\tau) \mapsto_{\Aa}  (q'',\tau'')$ such that $(q',\tau')$ and $(q'',\tau'')$ belong to distinct clubs of $F$ then $\Mm$ can simulate both transitions simultaneously. This means that~$\Mm$ can simulate in parallel all possible runs of $\Aa$ over the given input word $\alpha$. The only situation when that fails is a~\emph{collision}, i.e.~the situation when two\footnote{We do not assume that the configurations $(q',\tau')$ and $(q',\tau'')$ are distinct, it might be the case that $\tau'=\tau''$ but there are two distinct runs reaching the configuration $(q',\tau')$.} configurations $(q',\tau')$ and $(q',\tau'')$ can be reached via distinct finite runs over a joint prefix of $\alpha$, with both $(q',\tau')$ and $(q',\tau'')$ belonging to the same club $[q',\gamma']\in F$.

The crucial observation that makes the construction of $\Mm$ possible is Lemma~\ref{lem:unique} --- whenever a collision occurs, it means that either $(q',\tau')\neq (q',\tau'')$ and the club $[q',\gamma']$ is trivial (i.e. $\lang\big(\Aa,[q',\gamma']\big)=\emptyset$); or $(q',\tau')= (q',\tau'')$ and $\lang\big(\Aa,(q',\tau')\big)=\emptyset$. Thus, the following remark holds.

\begin{remark}
\label{rem:collision-empty}
If a configuration $(q',\tau')$ is a part of a collision then $\lang\big(\Aa,(q',\tau')\big)=\emptyset$.
\end{remark}

Let us explain the construction more formally. Assume that the control state of $\Mm$ remembers for which clubs $[q,\gamma]\in F$ the counters $\Cc_1^{[q,\gamma]},\ldots, \Cc_k^{[q,\gamma]}$ actually represent a configuration of $\Aa$ (otherwise their values are irrelevant). Thus, the set of states of $\Mm$ is the set of bitmaps $2^F$ that mark some clubs $[q,\gamma]\in F$ as \emph{inhabited} and the other as \emph{not inhabited}. The initial configuration of $\Mm$ stores $0$ in all the counters and the bitmap marks only one club as inhabited --- the one containing $(q_0,0,\ldots,0)$, i.e.~the initial configuration of~$\Aa$.

Reading a letter $a\in\Sigma$, the machine $\Mm$ simulates all the possible transitions $a: (q,\tau) \mapsto_{\Aa}  (q',\tau')$ for all the configurations $(q,\tau)$ represented in the counters $\Cc_1^{[q,\gamma]},\ldots, \Cc_k^{[q,\gamma]}$ for each inhabited club $[q,\gamma]\in F$. Consider the case when a collision occurs, and two configurations $(q',\tau'), (q',\tau'')\in [q',\gamma']$ belonging to the same club $[q',\gamma']\in F$ need to be stored in the counters $\Cc_1^{[q',\gamma']},\ldots, \Cc_k^{[q',\gamma']}$. In that case $\Mm$ \emph{discards} both configurations $(q',\tau')$ and $(q',\tau'')$.

The control state of $\Mm$ is updated accordingly to know which clubs are inhabited: a club $[q',\gamma']\in F$ is inhabited after such a transition iff at least one of the simulated transitions led to a configuration $(q',\tau')\in [q',\gamma']$ that was not discarded. This concludes the definition of the transition function of $\Mm$.

\subsection{Acceptance condition and equivalence}

We now need to define the acceptance condition of $\Mm$. Assume that an $\omega$\=/word $\sigma=a_1a_2\ldots$ has been read. Consider a run $\runA$ of $\Aa$ over $\sigma$. The consecutive transitions used in $\runA$ are simulated by $\Mm$ when reading $\sigma$. There are two possibilities: either one of the configurations in $\runA$ is discarded in $\Mm$ because of a collision; or all the configurations in $\runA$ are simulated by $\Mm$ (i.e. none of them is discarded). In the latter case we say that $\runA$ is \emph{simulated}.

Let the acceptance condition of $\Mm$ be chosen in such a way that a run of $\Mm$ is accepting if and only if it there exists at least one accepting run $\runA$ of $\Aa$ that is simulated. Notice that Remark~\ref{rem:collision-empty} together with the policy of discarding configurations upon collisions imply the following fact.

\begin{fact}
\label{ft:no-discard}
If $\runA$ is an accepting run of $\Aa$ over an $\omega$\=/word $\sigma$ then it is simulated, i.e.~none of its configurations is discarded by $\Mm$.
\end{fact}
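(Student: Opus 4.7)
The plan is to prove this by contradiction, using Remark~\ref{rem:collision-empty} as the single ingredient. Suppose, toward a contradiction, that some configuration $(q_n, \tau_n)$ occurring in the accepting run $\runA$ of $\Aa$ on $\sigma = a_1 a_2 \cdots$ is discarded by $\Mm$ at the step in which $\Mm$ simulates the $n$-th transition of $\runA$. By the very definition of the discarding policy of $\Mm$, this means $(q_n, \tau_n)$ is part of a collision.

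First I would invoke Remark~\ref{rem:collision-empty} to conclude $\lang\big(\Aa, (q_n, \tau_n)\big) = \emptyset$. Then I would exhibit an element of this supposedly empty set extracted directly from $\runA$ itself. Namely, the tail $(q_n, \tau_n), (q_{n+1}, \tau_{n+1}), \ldots$ of $\runA$, whose transitions read the suffix $a_{n+1} a_{n+2} \cdots$, is plainly a run of $\Aa$ starting from the configuration $(q_n, \tau_n)$, and it is accepting: since $\runA$ visits the set of accepting states infinitely often and only finitely many such visits occur before step $n$, the tail still visits accepting states infinitely often. This yields $a_{n+1} a_{n+2} \cdots \in \lang\big(\Aa, (q_n, \tau_n)\big)$, contradicting emptiness.

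There is essentially no substantive obstacle here, as all the real work is already done by Lemma~\ref{lem:unique} and its packaging in Remark~\ref{rem:collision-empty}; the fact itself is really just the observation that those results fire in the presence of any discarded configuration. The only piece of bookkeeping is to note that the initial configuration of $\Aa$ is never discarded, since the starting state of $\Mm$ stores exactly this one configuration with no other contender available, so the index $n$ above is strictly positive and the suffix $a_{n+1} a_{n+2} \cdots$ is a genuine infinite word on which the argument can be carried out.
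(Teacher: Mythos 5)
Your proposal is correct and follows exactly the route the paper intends: the paper states the fact as an immediate consequence of Remark~\ref{rem:collision-empty} together with the discarding policy, and your argument simply makes that explicit by noting that the tail of the accepting run from any (first) discarded configuration $(q_n,\tau_n)$ witnesses $\lang\big(\Aa,(q_n,\tau_n)\big)\neq\emptyset$, contradicting the remark. The extra bookkeeping about the initial configuration is harmless and consistent with the construction.
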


\begin{lemma}
\label{lem:is-regular}
One can encode the above acceptance condition of $\Mm$ as a Muller condition, at the cost of extending the state space of $\Mm$.
\end{lemma}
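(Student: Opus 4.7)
The plan is to reduce the specified non-deterministic acceptance condition of $\Mm$ to an equivalent deterministic Muller condition on a suitably extended finite control, using a standard Büchi-to-Muller determinisation as a black box.

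First, I would formalise the structure of simulated runs. Since $\Mm$'s finite control (bitmap plus tracked configurations) evolves deterministically, for each input $\omega$\=/word $\sigma$ the sequence of bitmaps $(S_n)_{n\in\w}$ of inhabited clubs is determined, as is the unique tracked configuration in each inhabited club at each time. Because the discarding policy removes any club in which two distinct incoming transitions collide, each configuration present in some $[q,\gamma]\in S_{n+1}$ has a \emph{unique} predecessor configuration in some $[q',\gamma']\in S_n$. Consequently, a simulated run of $\Aa$ on $\sigma$ corresponds bijectively to an infinite sequence $([q_n,\gamma_n])_{n\in\w}$ with $[q_n,\gamma_n]\in S_n$, such that for every $n$ the $\Aa$\=/transition used by $\Mm$ from the configuration tracked in $[q_n,\gamma_n]$ at time $n$ lands in $[q_{n+1},\gamma_{n+1}]$ at time $n+1$. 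Such a run is accepting precisely when $q_n\in F$ for infinitely many $n$.

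Second, I would package this existentially quantified condition as the non\=/emptiness of a non\=/deterministic Büchi automaton $B$ running in parallel with $\Mm$'s deterministic control. The state set of $B$ is the finite set of optimal clubs; its accepting states are those $[q,\gamma]$ with $q\in F$; and its transition relation is read off from $\Mm$'s evolution, namely $[q,\gamma]\to[q',\gamma']$ is enabled upon reading the current letter precisely when $\Mm$ has actually moved the configuration of $[q,\gamma]$ into $[q',\gamma']$ without discarding it. Thus the set of accepting runs of $B$ over $\sigma$ (relative to $\Mm$'s deterministic transition tables) is exactly the set of simulated accepting runs of $\Aa$ on $\sigma$.

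Third, I would apply Safra's determinisation to $B$ to obtain a deterministic Muller automaton $D$ over the same letter alphabet, with a finite state space $Q_D$ and a Muller family $\mathcal{F}_D\subseteq 2^{Q_D}$, accepting the same $\w$\=/language. Since $D$ is driven by the same letters as $\Mm$, one can form the synchronous product of $\Mm$ with $D$: the counters, counter\=/update instructions, and copying operations are inherited unchanged from $\Mm$, while the finite control is extended by the additional component from $Q_D$. The resulting machine is therefore still a deterministic counter machine with copying, and its acceptance condition is the Muller condition lifted from $\mathcal{F}_D$.

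The only subtlety is to check that this product recognises exactly $\lang(\Aa)$, but this is immediate: by Fact~\ref{ft:no-discard} every accepting run of $\Aa$ is simulated, so such runs are preserved in $B$ (and hence detected by $D$), while conversely any accepting run of $B$ on $\sigma$ is by construction a simulated accepting run of $\Aa$ on $\sigma$. There is no substantive technical obstacle beyond invoking Safra's theorem, since $B$ has a finite state space and the non\=/determinism involved is entirely decoupled from the counter/copying mechanism.
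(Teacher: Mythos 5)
Your proof is correct and follows essentially the same route as the paper: both observe that the simulated runs of $\Aa$ correspond bijectively to paths through a finite club\-/to\-/club transition structure produced deterministically by $\Mm$, that the existence of an accepting such path is an $\omega$\-/regular property, and that one can therefore determinise (McNaughton/Safra) and run the resulting deterministic Muller automaton in product with $\Mm$'s finite control. The only cosmetic difference is that the paper has the auxiliary automaton read the sequence of transition ``graphs'' over the explicit finite alphabet $2^{F\times F}$, whereas you let your B\"uchi automaton $B$ be driven directly by $\Mm$'s evolution; these amount to the same thing, since $\Mm$ deterministically computes the relevant club\-/to\-/club information at each step.
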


\begin{proof}
Consider a transition taken by $\Mm$ from one of its states upon reading a letter $a$. Our goal is to encode that transition as an element $\theta$ of $2^{F\times F}$. We will call $\theta$ the \emph{graph} of that transition of $\Mm$. Let the graph $\theta$ contain a pair $\langle[q,\gamma],[q',\gamma']\rangle\in F\times F$ if and only if the club $[q,\gamma]$ was inhabited at the beginning of the transition by a configuration $(q,\tau)\in [q,\gamma]$, $\Mm$ simulated a transition $a : (q,\tau)\mapsto_{\Aa}  (q',\tau')$, the configuration $(q',\tau')$ was not discarded, and $(q',\tau')\in [q',\gamma']$.

Each infinite execution of $\Mm$ while reading an $\w$\=/word $\sigma$ defines a sequence of graphs $\theta_1,\theta_2,\ldots$ encoding the successive transitions of $\Mm$. The $\w$\=/word $\theta_1\theta_2\ldots\in \big(2^{F\times F}\big)^\omega$ is called the \emph{graph} of $\sigma$. A \emph{path} in such a graph is a sequence of clubs $\big([q_i,\gamma_i]\big)_{i=1,\ldots}$ such that for each $i\geq 1$ we have $\langle[q_i,\gamma_i],[q_{i+1},\gamma_{i+1}]\rangle\in \theta_i$. Such a path is \emph{accepting} if infinitely many of the states $q_i$ are accepting.

Notice that (for a fixed $\w$\=/word $\sigma$) there is a bijection between runs $\runA$ of $\Aa$ over $\sigma$ that are simulated and paths in the graph of $\sigma$. Moreover, accepting runs correspond to accepting paths. Therefore, to check if there is an accepting simulated run it is enough to check if the graph $\theta_1\theta_2\ldots\in \big(2^{F\times F}\big)^\omega$ contains an accepting path.

Treating $2^{F\times F}$ as a finite alphabet, the $\w$\=/language of $\w$\=/words $\theta_1\theta_2\ldots\in \big(2^{F\times F}\big)^\omega$ that contain an accepting path is regular and therefore it can be recognised by a deterministic Muller automaton $\mathcal{D}$. By extending each state of $\Mm$ by an additional state of $\mathcal{D}$, one can encode the acceptance condition of $\mathcal{D}$ directly on $\Mm$, turning it into a Muller machine.
\end{proof}

Notice that the above lemma relies heavily on the fact that the acceptance condition of $\Aa$ is encoded only on the set of states $K$ of $\Aa$, without considering the counter values. This allows us to detect accepting runs of $\Aa$ by looking only on the sequence of clubs that were visited.

This concludes the construction of the machine $\Mm$. The following corollary of Fact~\ref{ft:no-discard} implies that $\Mm$ satisfies the requirements from Theorem~\ref{thm:main-unamb}.

\begin{corollary}
\label{cor:correct}
The machine $\Mm$ accepts an $\omega$\=/word $\sigma\in\Sigma^\omega$ if and only if $\sigma\in\lang(\Aa)$.
\end{corollary}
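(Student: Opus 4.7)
The plan is to decompose the statement into the two obvious directions, both of which reduce to understanding exactly when an accepting run of $\Aa$ on $\sigma$ is \emph{simulated} by $\Mm$ (in the sense defined just before Fact~\ref{ft:no-discard}). Recall that, by construction and Lemma~\ref{lem:is-regular}, $\Mm$ accepts $\sigma$ if and only if there exists at least one accepting run $\runA$ of $\Aa$ on $\sigma$ that is simulated, i.e.~such that none of the configurations appearing in $\runA$ is ever discarded. Thus the corollary follows once we show that (a) every accepting run of $\Aa$ is simulated, and (b) every simulated accepting run is indeed a bona fide accepting run of $\Aa$ on $\sigma$.

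For direction $(\Leftarrow)$, assume $\sigma\in\lang(\Aa)$ and fix an accepting run $\runA=\big((q_n,\tau_n)\big)_{n\geq 0}$ of $\Aa$ on $\sigma$. I will show Fact~\ref{ft:no-discard}: no configuration of $\runA$ is ever discarded by $\Mm$. Suppose for contradiction that $(q_n,\tau_n)$ is the first discarded configuration. Since the policy of $\Mm$ is to discard a configuration only when a collision occurs at the club $[q_n,\gamma]\in F$ containing $(q_n,\tau_n)$, Remark~\ref{rem:collision-empty} applies and gives $\lang\big(\Aa,(q_n,\tau_n)\big)=\emptyset$. But the suffix $\runA_{\geq n}$ is an accepting run of $\Aa$ starting from $(q_n,\tau_n)$ on the $\omega$\=/word $a_{n+1}a_{n+2}\cdots$, which witnesses that this residual language is non-empty --- contradiction. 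Hence all configurations of $\runA$ survive, $\runA$ is simulated, and since it is accepting, $\Mm$ accepts $\sigma$.

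For direction $(\Rightarrow)$, assume $\Mm$ accepts $\sigma$. By the choice of its acceptance condition, encoded as a Muller condition on the graph $\theta_1\theta_2\cdots$ as in the proof of Lemma~\ref{lem:is-regular}, there is an accepting path in this graph, and the bijection between simulated runs of $\Aa$ and paths in the graph gives a simulated accepting run $\runA$ of $\Aa$ on $\sigma$. In particular $\sigma\in\lang(\Aa)$.

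The only genuine content is Fact~\ref{ft:no-discard} in the first direction; the second direction is essentially unpacking of definitions. The main (small) obstacle is to make sure that the contradiction really uses the right invariant: namely that the simulation in $\Mm$ tracks counter values \emph{exactly} (including those stored offset by the threshold $N$, as discussed in the construction), so that the configuration $(q_n,\tau_n)$ appearing in the real run $\runA$ is truly the configuration that, via the encoding by $\Cc_1^{[q_n,\gamma]},\ldots,\Cc_k^{[q_n,\gamma]}$, gets discarded by $\Mm$; and that Remark~\ref{rem:collision-empty} then applies to this exact configuration, not merely to its enclosing club. Once this bookkeeping point is recorded, the corollary follows immediately.
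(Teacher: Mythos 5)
Your proposal is correct and follows exactly the route the paper intends: the paper states Corollary~\ref{cor:correct} as an immediate consequence of Fact~\ref{ft:no-discard} together with the chosen acceptance condition of $\Mm$ (as encoded via Lemma~\ref{lem:is-regular}), and your two directions, including the proof of Fact~\ref{ft:no-discard} by taking the first discarded configuration and contradicting Remark~\ref{rem:collision-empty} with the accepting suffix run, are precisely the unpacking the paper leaves implicit.
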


\subsection{Discussion}

The above construction is quite flexible both in terms of the input and output models (i.e.~the exact abilities of the automata $\Aa$ and $\Mm$ respectively). For instance, if one doesn't care about the effectiveness of the determinisation construction, one can extend it to allow the input automaton to \emph{reset} the value of some of its counters to~$0$ (this can be simulated by $\Mm$ using copying). Clearly Remark~\ref{rem:residual-order} is still valid for these machines, therefore Remark~\ref{rem:subclub-noneff} applies one can repeat the rest of the argument proving the existence of the machine $\Mm$. Notice that one cannot hope for an effective variant of Remark~\ref{rem:subclub-noneff} (i.e.~Lemma~\ref{lem:club-non-empty}) because the non-emptiness problem for B\"uchi blind counter automata with resets is undecidable (see Theorem~10 and Lemma~17 in~\cite{mayr_undacidable_lossy}).

On the other hand, it seems that the ability of the output machine $\Mm$ to perform zero tests is inherent to that construction. At the same time, $\Aa$ needs to be blind because otherwise it would violate Remark~\ref{rem:residual-order}. This implies that there might be no single type of counter machines so that both $\Aa$ and $\Mm$ could be assumed to be of that type (which would provide a determinisation construction within that type of automata). Also, there seems to be no reasonable way to avoid the need of copying the counters in~$\Mm$, as the graph of the possible traces in $\Mm$ can be complex and branching.

The above construction is arranged in such a way to ensure that $\Mm$ is a real time multicounter automaton with zero tests and counter copying. This model of machines is much more concrete than general Turing machines, but in the end its expressive power is essentially the same --- already two\=/counter Minsky machines with zero tests are able to simulate arbitrary Turing machines~\cite{minsky_two_counter}.
\end{document}